\newcommand{\lprod}{\mathop{\overleftarrow{\prod}}}
\theoremstyle{plain}
\newtheorem{theorem}{Theorem}[section]
\newtheorem{proposition}[theorem]{Proposition}
\newtheorem{lemma}[theorem]{Lemma}
\newtheorem{corollary}[theorem]{Corollary}
\theoremstyle{definition}
\newtheorem{definition}[theorem]{Definition}
\theoremstyle{remark}
\icmltitlerunning{Predictive Performance of Deep Quantum Data Re-uploading Models}
\begin{document}
\doparttoc %
\faketableofcontents

\twocolumn[
\icmltitle{Predictive Performance of Deep Quantum Data Re-uploading Models}



\icmlsetsymbol{equal}{*}

\begin{icmlauthorlist}
\icmlauthor{Xin Wang}{thu}
\icmlauthor{Han-Xiao Tao}{thu}
\icmlauthor{Re-Bing Wu}{thu}


\end{icmlauthorlist}

\icmlaffiliation{thu}{Department of Automation, Tsinghua University, Beijing, China}

\icmlcorrespondingauthor{ReBing Wu}{rbwu@tsinghua.edu.cn}

\icmlkeywords{Quantum Machine Learning, Data Re-uploading, Predictive Performance}

\vskip 0.3in
]



\printAffiliationsAndNotice{}  

\begin{abstract}
  Quantum machine learning models incorporating data re-uploading circuits have garnered significant attention due to their exceptional expressivity and trainability. However, their ability to generate accurate predictions on unseen data, referred to as the predictive performance, remains insufficiently investigated. This study reveals a fundamental limitation in predictive performance when deep encoding layers are employed within the data re-uploading model. Concretely, we theoretically demonstrate that when processing high-dimensional data with limited-qubit data re-uploading models, their predictive performance progressively degenerates to near random-guessing levels as the number of encoding layers increases. In this context, the repeated data uploading cannot mitigate the performance degradation. These findings are validated through experiments on both synthetic linearly separable datasets and real-world datasets. Our results demonstrate that when processing high-dimensional data, the quantum data re-uploading models should be designed with wider circuit architectures rather than deeper and narrower ones.
\end{abstract}

\section{Introduction}
\label{intro}

Quantum Machine Learning (QML)~\cite{biamonte2017quantum} has emerged as a promising field that integrates machine learning with quantum computing, offering potential computational advantages~\cite{riste2017demonstration,huang2021information,zhong24provably}. In recent years, quantum machine learning models based on Parameterized Quantum Circuits (PQCs)~\cite{benedetti2019parameterized} have garnered significant attention~\cite{yan23quantum,meyer2023quantum,zhao24quantum}. Although QML demonstrates advantages in
\begin{figure}[H]
  \centering
  \includegraphics[width=0.36\textwidth]{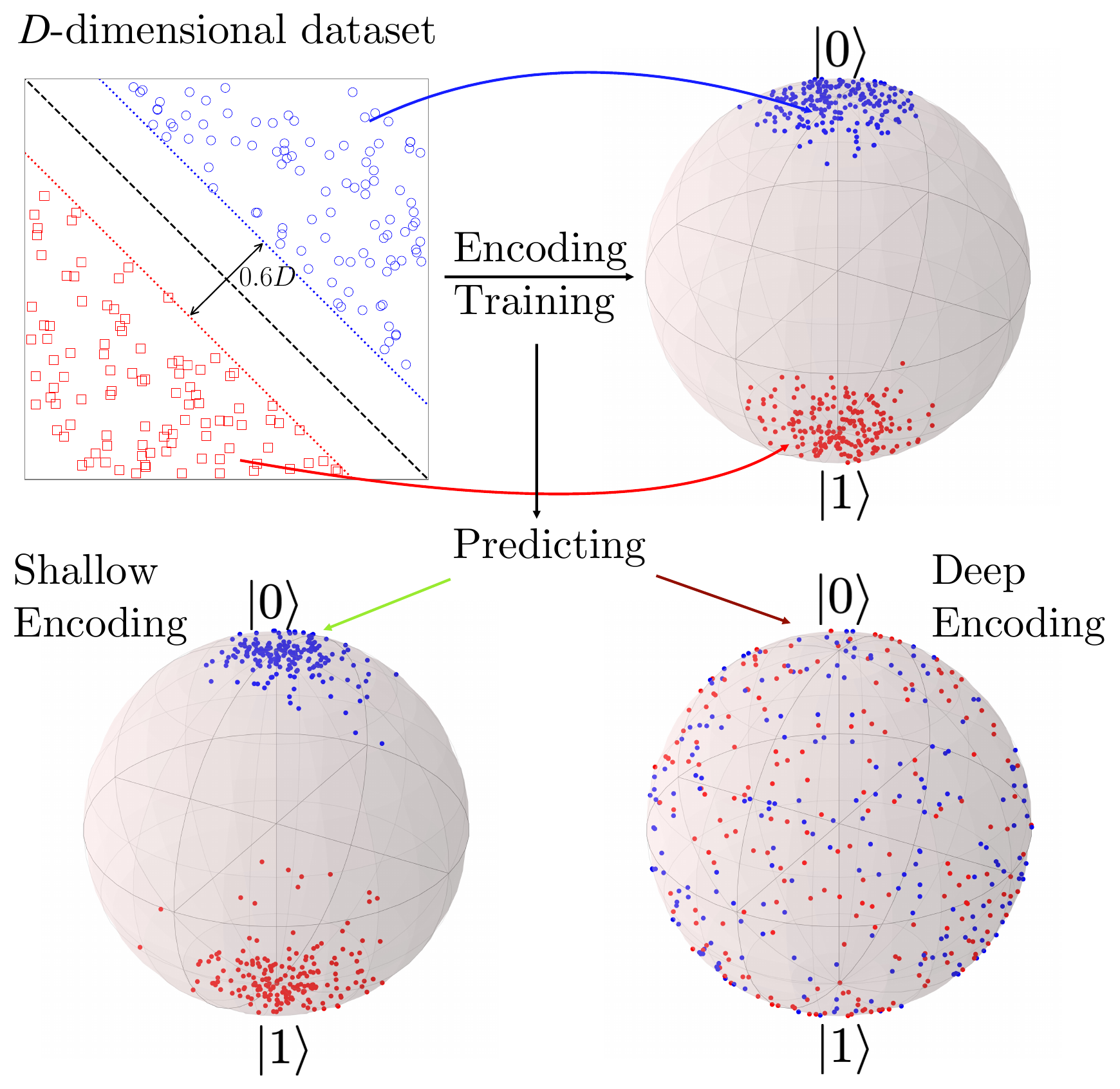}
  \caption{For $D$-dimensional linearly separable data, data re-uploading encodes the data into quantum circuits and trains the model effectively. However, during prediction, data re-uploading with shallow encoding layers maintains prediction results close to the training outcomes, while deep encoding layers lead to predictions that approach random guessing.}
  \label{fig:schematic}
\end{figure}
processing quantum data~\cite{huang2021information,huang2022quantum}, current experiments on classical data are primarily restricted to simple, low-dimensional datasets~\cite{schuld2020circuit,
hubregtsen2022training,perez2020data,huang2021power,
xu2024quantum}. The potential for achieving a quantum 
advantage in the processing of practical 
high-dimensional classical data remains to be demonstrated~\cite{schuld2022quantum}.

Encoding classical data into quantum states presents a significant challenge in QML~\cite{lloyd2020quantum,wiebe2020key,weigold2021expanding,rath2024quantum}. Currently, two primary encoding paradigms exist in supervised quantum machine learning: the encoding-variational paradigm and the data re-uploading paradigm~\cite{jerbi2023quantum}. The encoding-variational paradigm encodes data into quantum states using methods such as basis encoding, angle encoding, or amplitude encoding, followed by training with parameterized quantum circuits~\cite{schuld2018supervised}. In contrast, data re-uploading interleaves trainable parameterized gates between encoding gates, enabling more flexible data processing and enhancing model expressivity~\cite{perez2020data}.

The encoding-variational paradigm suffers from several trainability issues, including encoding-induced optimization challenges~\cite{li2022concentration}, exponentially numerous local minima in shallow circuits~\cite{you2021exponentially,anschuetz2022quantum}, and barren plateaus in deep circuits~\cite{mcclean2018barren,ragone2024lie}. 
In contrast, data re-uploading improves trainability through interleaved parameterized and encoding gates, enhancing model expressivity~\cite{perez2020data}.
This approach supports deep circuit training and enables single-qubit circuits to approximate multivariate functions given sufficient depth~\cite{perez2024universal}.

The primary goal of machine learning is to achieve accurate predictions on previously unseen data. Researchers usually study trainability (training data performance) and generalization (training vs prediction gap) separately~\cite{mohri2018foundations}. But good trainability or generalization alone doesn't guarantee good predictions.
 While encoding-variational QML shows good generalization~\cite{caro2022generalization},  yielding small generalization error with few training samples.
 However, the predictive performance of data re-uploading models remain less understood. In contrast to prior work, we directly analyze how these models perform on unseen data and establish that, when the number of qubits is limited and the data dimension is high, their predictive performance degenerate to nearly random guessing.

\subsection{Contributions}

As shown in Fig.~\ref{fig:schematic}, we demonstrate that, regardless of the quality of training, the predictive performance of data re-uploading models on new data asymptotically approaches random guessing when the encoding circuit is deep. This typically occurs when the dimension of the data vectors far exceeds the number of qubits in the quantum circuits.

Our theoretical analysis shows that as encoding layers increase, predictions from data re-uploading circuits approach those from maximally mixed states, and repeated uploading cannot mitigate this problem. 

We introduce a novel method to analyze the prediction error, bypassing the traditional decomposition into training and generalization errors. We directly analyze the expected output of the model over the data distribution. Our results demonstrate that when using data re-uploading models with deep encoding layers, the model's performance on the unseen new data approaches random guessing, regardless of how good the training results are, and regardless of the choice of loss function, optimization method (gradient-based or gradient-free), number of iterations, model parameter count, or training sample size.  After decomposing prediction error into training error and generalization error, increasing model complexity can reduce training error but might increase generalization error, while increasing sample size can reduce generalization error but might increase training error. With these dynamic changes, it's difficult to determine the resulting prediction error (their sum), whereas our approach bypasses this limitation by directly analyzing the prediction error.

To establish this phenomenon, we analyze the model's predictive performance by examining the expected output over the data distribution. Our analysis focuses on two key aspects: the impact of the number of encoding layers and the number of repetitions. For the former, we employ techniques similar to Li's paper~\cite{li2022concentration}, which originally only permitted specific non-parameterized entangling gates (CNOT or CZ) between encoding layers. We extend these techniques by allowing arbitrary learnable parameterized quantum gates between encoding layers. For the latter, Li's techniques cannot analyze repeated data uploading scenarios, we address this limitation by constructing approximating circuits to analyze cases with repeated data uploads. Importantly, repeated data uploading is crucial, as it significantly enhances the trainability of data re-uploading models~\cite{perez2020data,perez2024universal,yu2022power,yu2023provable}. Furthermore, the data re-uploading paradigm with its good trainability addresses the trainability issues identified in Li's paper, and our work further extends this research by focusing on analyzing the predictive capabilities of these models.

Our experiments confirm the theory, showing that differences in encoding layers notably affect predictive performance, despite identical datasets, parameter counts, and training errors. Additionally, we explain why data re-uploading models perform well on MNIST~\cite{lecun1998mnist} even with deep encoding layers.

These findings offer critical insights into the architectural design of data re-uploading models, indicating that the accurate prediction of high-dimensional data is unlikely to be accomplished by few-qubit quantum circuits, notwithstanding their trainability and expressivity.

\subsection{Related Work}

\textbf{Quantum Encoding}
The properties of quantum encoding strategies has investigated several fundamental aspects. \cite{schuld2021effect} examined how different encoding strategies affect the expressivity of models. Regarding robustness, \cite{larose2020robust} indicated that encoding impacts the robustness of quantum classifiers against quantum noise.
Regarding generalization, previous research~\cite{caro2021encoding} derived generalization bounds related to different encoding strategies for quantum models. Another study~\cite{li2022concentration} demonstrated that for angle encoding, the average quantum state exponentially approaches the maximally mixed state as the encoding layers increases.

\textbf{Data Re-uploading}
The exploration of data re-uploading has predominantly focused on its trainability. The concept was initially introduced in \cite{perez2020data}, where the advantageous trainability properties were experimentally demonstrated. Subsequent theoretical work by \cite{perez2024universal} established that increasing depth enables single data re-uploading models to approximate any multivariate function. Further theoretical foundations were provided by \cite{yu2022power,yu2023provable}, who rigorously proved these models' advantages in function approximation compared to classical ReLU neural networks. \cite{barthe2024gradients} augmented these findings with a detailed analysis of gradient behavior and frequency profiles. 
These benefits have led to widespread use in quantum machine learning, typically with shallow encoding layers and multiple repetitions~\cite{dutta2022single,wach2023data,casse2024optimizing,rodriguez2024training}, or on sparse datasets~\cite{periyasamy2022incremental,jerbi2023quantum}.
Regarding generalization capability, \cite{zhu2025optimizer} established the relationship between the number of data re-uploading repetitions, training epochs, and the model's generalization performance.

\textbf{Predictive Performance of QML:} The investigation into the predictive performance of quantum machine learning models has been relatively scarce. \cite{caro2022generalization} derived generalization error bounds for encoding-variational QML. 
\cite{wang2024supervised} combined these generalization error bounds with the training error bounds of AdaBoost to establish prediction error bounds. In quantum kernel methods, \cite{wang2021towards} established generalization error under conditions of quantum noise. 
For data re-uploading models, the predictive performance has not been well understood.

\section{Preliminaries}

\subsection{Quantum Computing}

We first introduce elementary concepts in quantum computing. The state space of an $N$-qubit quantum system is a $2^N$-dimensional complex Hilbert space $\mathcal{H} \cong \mathbb{C}^{2^N}$. The computational basis for this Hilbert space consists of $\{\ket{0}, \ket{1}, ..., \ket{2^N-1}\}$, where each basis state $\ket{i}$ represents a unique binary string of length $2^N$ corresponding to the binary representation of integer $i$.

Information in quantum systems is stored in quantum states, which can be represented by a positive semi-definite matrix $\rho \in \mathbb{C}^{2^N \times 2^N}$ with property $\text{Tr}[\rho] = 1$. If $\text{Tr}[\rho^2] = 1$, the quantum state is called a pure state; otherwise, it is a mixed state. For $N$-qubit pure states, they can be represented by a unit state vector $\ket{\varphi} \in \mathbb{C}^{2^N}$, where $\rho = \ket{\varphi}\bra{\varphi}$ and $\bra{\varphi} = \ket{\varphi}^{\dagger}$. Mixed states are convex combinations of pure states. In particular, for an $N$-qubit system, the maximally mixed state $\rho_{I} = \frac{I}{2^N}$ represents a state with no information.

Quantum states evolve through quantum circuits (mathematically represented as unitary transformations $U$), transforming from $\rho$ to $\rho'$ according to $\rho' = U\rho U^{\dagger}$. Quantum circuits are composed of quantum gates, with typical single-qubit gates being rotation gates of the form $R_P(\phi) = e^{-i \phi P /2}$, where $P \in \{X,Y,Z\}$ are the Pauli matrices as follows:  
$$
X = \begin{pmatrix} 0 & 1 \\ 1 & 0 \end{pmatrix}, 
Y = \begin{pmatrix} 0 & -i \\ i & 0 \end{pmatrix}, 
Z = \begin{pmatrix} 1 & 0 \\ 0 & -1 \end{pmatrix}.
$$
In particular, any single-qubit gate can be represented as $R(\phi_1,\phi_2,\phi_3) = R_z(\phi_3) R_y(\phi_2) R_z(\phi_1)$. Multi-qubit gates, also known as entangling gates, entangle multiple qubits together, with the CNOT gate and CZ gate being typical examples. To extract information from quantum circuits for classical processing, measurements are performed using Hermitian matrices $H$, with the expectation value given by $\operatorname{Tr}\left[H \rho \right]$, where $H$ is called the observable.

\begin{figure*}[htpb]
  \centering
  \includegraphics[width=0.9\textwidth]{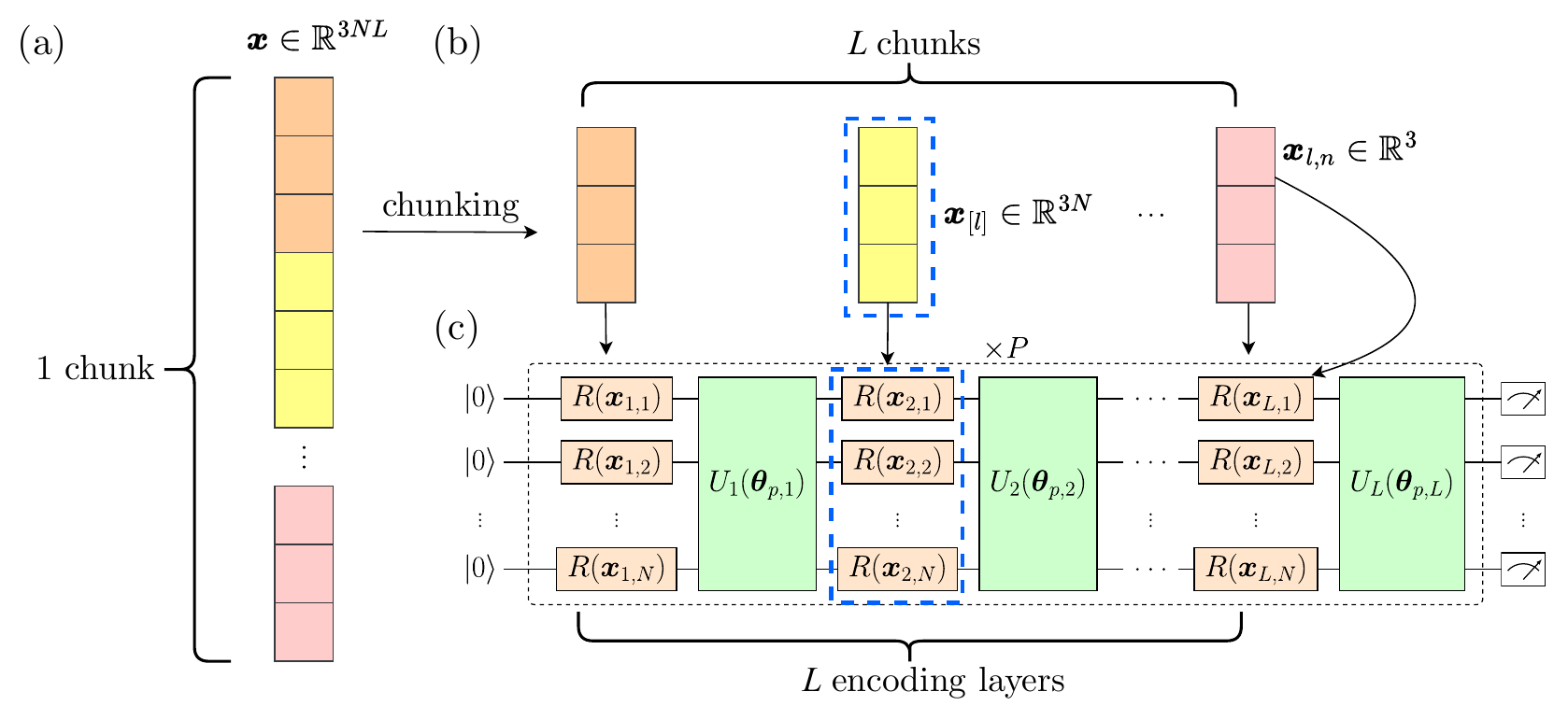}
  \caption{Data re-uploading encoding process. (a) The original data. (b) Divide original data into $L$ chunks. (c) Each data chunk is encoded by an encoding layer. The entire data is re-uploaded into the circuit $P$ times, where the parameterized gates in each repetition can be arbitrary and can differ between repetitions.}
  \label{fig:data-Reuploading-circuit}
\end{figure*}

\subsection{Data Re-uploading}

Data re-uploading was first proposed by~\cite{perez2020data}, with its core idea being to enhance PQC-based quantum classifiers by alternately constructing data loading gates and data processing gates. In \cite{perez2020data}, both the single-qubit data encoding gates and single-qubit data processing gates take the form:
\begin{equation}\label{eq:R}
\begin{aligned}
  R(\boldsymbol{\phi}) &= R(\phi_1,\phi_2,\phi_3) = R_z(\phi_3) R_y(\phi_2) R_z(\phi_1) \\
  &= e^{-i \phi_3 Z/2 } e^{-i \phi_2 Y/2} e^{-i \phi_1 Z /2}, \\
\end{aligned}
\end{equation}
where $Z$ and $Y$ are Pauli matrices. The gate $R(\boldsymbol{\phi})$ encodes data when $\boldsymbol{\phi}=\boldsymbol{x}$ and processes data when $\boldsymbol{\phi}=\boldsymbol{\theta}$. 
In multi-qubit circuits, entangling gates (e.g., CNOT, CZ) link individual qubits to generate quantum entanglement. 

In this paper, we consider a more general architecture of data re-uploading circuit. The circuit loads data exclusively through single-qubit encoding gates as defined in Eq.~\eqref{eq:R}, while allowing arbitrary parameterized unitary gates. Given that the limited-scale quantum circuit has $N$ qubits, the entire data whose dimension is larger than $3N$ has to be uploaded via multiple encoding layers. A simple way is to divide the data into $L$ chunks and upload chunks via $L$ encoding layers, where $L \geqslant \left\lceil \frac{D}{3N} \right\rceil $. This data is then repeatedly uploaded into the circuit $P$ times. The complete architecture of the quantum data re-uploading circuit is illustrated in Fig.~\ref{fig:data-Reuploading-circuit}.

Without loss of generality, we assume that the dimension of data $\boldsymbol{x}$ is $D = 3NL$, the entire data is divided into $L$ chunks $\boldsymbol{x} = [\boldsymbol{x}_{[1]},\boldsymbol{x}_{[2]},\cdots,\boldsymbol{x}_{[L]}]$, each chunk $\boldsymbol{x}_{[l]} \in \mathbb{R}^{3N}$ contains $N$ three-dimensional data $\boldsymbol{x}_{l,n} = [x_{l,n,1}, x_{l,n,2}, x_{l,n,3}]$ and every $\boldsymbol{x}_{l,n}$ is encoded by the single-qubit encoding gate $R(\boldsymbol{x}_{l,n})$ in the $l$-th layer and $n$-th qubit. $l$-th layer's encoding gate is denoted as $R_l(\boldsymbol{x}_{[l]})$. Following~\cite{li2022concentration}, we assume data elements follow independent Gaussian distributions $x_{l,n,i} \sim \mathcal{N}(\mu_{l,n,i},\sigma_{l,n,i}^2)$.

The parameterized gates after $l$-th encoding layer in $p$-th repetition is denoted as $U_l(\boldsymbol{\theta}_{p,l})$. Note that each repetition uses the same data vector but with different parameter vectors. Let $\boldsymbol{\theta} = [\boldsymbol{\theta}_{[1]},\cdots,\boldsymbol{\theta}_{[P]}]$ represent the collection of all parameters, then the quantum gate of $P$-repeated data re-uploading circuit can be expressed as:
$$
\begin{aligned}
V(\boldsymbol{x},\boldsymbol{\theta}) = \lprod_{p=1}^{P} \lprod_{l=1}^{L} U_l(\boldsymbol{\theta}_{p,l}) R_l(\boldsymbol{x}_{[l]}),
\end{aligned}
$$
where $\lprod_{l=1}^{L} A_l = A_L A_{L-1} \cdots A_1$.

Data re-uploading circuits starts from initial state $\rho_{0} = \ket{0}\bra{0}$, the quantum state evolves through $P$-repeated data re-uploading circuit $V(\boldsymbol{x},\boldsymbol{\theta})$. The expectation value of measurement on the evolved state with respect to an observable $H$ is:
$$
\begin{aligned}
h_P(\boldsymbol{x}) = \operatorname{Tr}\left[ H V(\boldsymbol{x},\boldsymbol{\theta}) \rho_0 V(\boldsymbol{x},\boldsymbol{\theta})^{\dagger}\right].
\end{aligned}
$$
Different tasks require different observable $H$. In this paper, the eigenvalues of observable lies in $[-1,1]$. 
\subsection{Supervised Quantum  Machine Learning}
\label{subsec:QML}

In supervised machine learning, we consider a dataset $S = \{(\boldsymbol{x}^{(m)},y^{(m)})\}_{m=1}^{M}$ containing $M$ samples. Each sample $(\boldsymbol{x}^{(m)},y^{(m)})$ consists of a feature vector $\boldsymbol{x}^{(m)}$ and its corresponding label $y^{(m)}$, independently drawn from a same distribution. Here, $\mathcal{X}$ denotes the feature space and $\mathcal{Y}$ represents the label space. This paper focuses on the deterministic scenario where each feature $\boldsymbol{x}$ sampled from $\mathcal{D}_{\mathcal{X}}$ has a unique label $y(\boldsymbol{x})$. The machine learning model learns a hypothesis $h_S$ from dataset $S$, aiming to correctly predict labels for new features drawn from $\mathcal{D}_{\mathcal{X}}$.

In this paper, we consider both classification and regression tasks. For classification tasks, we focus on binary classification where the labels corresponding to $\boldsymbol{x}$ are $y(\boldsymbol{x}) \in  \{0,1\}$.  We employ $H_{y(\boldsymbol{x})}$ as the observable for feature $\boldsymbol{x}$, where $H_0 = \ket{0}\bra{0}$ and $H_1 = \ket{1}\bra{1}$ are single-qubit observables acting on the first qubit. Since the measurement results for both $H_0$ and $H_1$ are greater than 0 and sum to 1, these results can be interpreted as probabilities of belonging to each class.
The hypothesis function learned by $P$-repeated data re-uploading models trained on dataset $S$ is given by $h_S(\boldsymbol{x}) = h_P(\boldsymbol{x},\boldsymbol{\theta}^{*})$ with observable $H_{y(\boldsymbol{x})}$, where $\boldsymbol{\theta}^{*}$ is the parameters chosen during training. 
$h_S(\boldsymbol{x})$ indicates the probability that feature $\boldsymbol{x}$ belongs to the correct class.

For regression tasks, we consider target functions $f(\boldsymbol{x}) \in [-1,1]$ with observable $H_L = \bigotimes_{n=1}^{N} Z_n$. The hypothesis $h_S(\boldsymbol{x})$ represents the predicted value of $f(\boldsymbol{x})$.

To evaluate the predictive performance of hypothesis $h_S$, we define its prediction error on distribution $\mathcal{D}_{\mathcal{X}}$ for classification tasks as:
\begin{equation*}
\begin{aligned}
    R^{C}(h_S) &= \underset{\boldsymbol{x}\sim \mathcal{D}_{\mathcal{X}}}{\mathbb{E}}[|1 - h_S(\boldsymbol{x})|],
\end{aligned}
\end{equation*}
This prediction error evaluates the gap between the probability that hypothesis $h_S$ predicts the feature as the correct class and $1$. 

For new features, class prediction is performed by measuring observables \( H_0 \) and \( H_1 \), then assigning the class with the higher measurement result. 
 Crucially, In this paper, we assume that each feature $\boldsymbol{x}$ inherently possesses a unique true label 
$y(\boldsymbol{x})$. This allows performance evaluation by comparing the model's probability of correct prediction against 1 (representing perfect certainty). If the prediction probability approaches random guessing levels(e.g., near 0.5 in binary classification), this statistically indicates model failure to learn meaningful patterns, even without label knowledge of new features.

For regression tasks, we define the prediction error to evaluate the gap between the output of $h_S$ and $f(\boldsymbol{x})$ as: 
$$
\begin{aligned}
  R^{L}(h_S) &= \underset{\boldsymbol{x}\sim \mathcal{D}_{\mathcal{X}}}{\mathbb{E}}[|f(\boldsymbol{x}) - h_S(\boldsymbol{x})|].
\end{aligned}
$$
Meanwhile, we also care about the hypothesis's performance on the training set $S = \{(\boldsymbol{x}^{(m)},y^{(m)})\}_{m=1}^{M}$, defining the training error for classification as:
$$
\begin{aligned}
\widehat{R}_{S}^{C}(h_S) &= \frac{1}{M}\sum_{m=1}^{M} \left| 1 - h_S(\boldsymbol{x}^{(m)}) \right|, \\
\end{aligned}
$$
Similarly, for regression tasks, we define the training error as:
$$
\begin{aligned}
\widehat{R}_{S}^{L}(h_S) &= \frac{1}{M}\sum_{m=1}^{M} \left| f(\boldsymbol{x}^{(m)}) - h_S(\boldsymbol{x}^{(m)}) \right|.
\end{aligned}
$$
Since the prediction error cannot be directly computed, it is typically analyzed by decomposing it into two components:
the directly computable training error and the generalization error $\operatorname{gen}(h_S)$. For classification tasks, this decomposition is expressed as $\operatorname{gen}(h_S) = R^{C}(h_S) - \widehat{R}_{S}^{C}(h_S)$ (regression problems follow the same principle). In practical applications, we estimate the prediction error by evaluating the model's performance on a test set.

\subsection{Quantum Divergence}


Similar to the Kullback-Leibler divergence in classical information theory, the quantum Petz-R\'{e}nyi divergence~\cite{petz1986quasi} measures the distinguishability between two quantum states:
\begin{equation}
  \label{eq:D_a}
\begin{aligned}
D_{\alpha}(\rho_1 || \rho_2) \coloneqq \frac{1}{\alpha -1} \log_2 \left( \operatorname{Tr}\left[\rho_1^{\alpha} \rho_2^{1-\alpha}\right] \right), 
\end{aligned}
\end{equation}
where $\alpha \in (0,1) \cup (1,+\infty)$. While not a formal distance metric, it enables critical quantum hypothesis testing analyses~\cite{audenaert2007sharp,nussbaum2009chernoff}. For analyzing quantum encoded states, we employ $\alpha=2$:
\begin{equation}
  \label{eq:D_2}
\begin{aligned}
D_{2}(\rho_1 || \rho_2) = \log_2 \left( \operatorname{Tr}\left[\rho_1^{2} \rho_2^{-1}\right] \right) ,
\end{aligned}
\end{equation}
which has important applications in both quantum machine learning~\cite{liu2022quantum} and quantum communication~\cite{fang2021geometric}.  
The $D_{2}$ divergence can effectively analyze quantum state distinguishability in the Pauli basis, bounding observable measurement differences. (see App.~\ref{asec:measures-of-quantum-state-distinguishability}).
For any $N$-qubit quantum states $\rho$ and maximally mixed state $\rho_{I}$, $D_2(\rho || \rho_{I}) \in [0,N]$ (see App.~\ref{asec:ave-exp-diff}), reaching its minimum value of 0 if and only if $\rho = \rho_{I}$. 
In this paper, when referring to the quantum divergence, we mean the  $D_2$ divergence defined in Eq.~\eqref{eq:D_2}.


\section{Main Results}
\label{sec:main_result}
This section presents our main theoretical findings on how encoding layers and repetitions affect the predictive performance of data re-uploading models.
Subsec.~\ref{subsec:effect_of_encoding_layers} demonstrates exponential approaching of expected encoded states to maximally mixed states with encoding layers. Subsec.~\ref{subsec:effect_of_repetition} shows the ineffectiveness of repeated uploading. Subsec.~\ref{subsec:application_to_machine_learning} derives theoretical bounds for prediction error.

\subsection{Dependence on the Number of Encoding Layers}
\label{subsec:effect_of_encoding_layers}
We first analyze how encoding layers affect the expected encoded state generated by single-upload ($P=1$) data re-uploading circuits. The encoding transformation of input data $\boldsymbol{x} \mapsto R(\boldsymbol{x}) \rho R(\boldsymbol{x})^{\dagger}$ introduces nonlinearity that complicates direct analysis of expected quantum states. To overcome this, we adopt a Pauli basis decomposition approach, examining the evolution of Pauli coefficients of expected quantum states within this framework.

For an $N$-qubit quantum state $\rho$, its representation in the Pauli basis can be expressed as:
\begin{equation}
  \label{eq:pauli}
\begin{aligned}
\rho = \frac{1}{2^{N}} \left( \sum_{P_i \in \{I,X,Y,Z\}^{\otimes N}}^{} \alpha_i P_i \right), 
\end{aligned}
\end{equation}
where $\alpha_i = \operatorname{Tr}\left[\rho P_i\right]$ is the coefficient for Pauli basis element $P_i$. Since each $P_i$ is Hermitian, all coefficients $\alpha_i$ are real numbers, with $4^N$ terms in total. Under unitary transformation $U$, the state $U \rho U^{\dagger}$ keeps the same Pauli basis structure, only changing the coefficients $\alpha_i$. Our analysis will focus on these coefficient changes.

When the encoding gate $R(\boldsymbol{x})$ is applied to a single-qubit state $\rho$, where $\boldsymbol{x} = [x_{1},x_{2},x_{3}]$ follows independent Gaussian distributions $x_i \sim \mathcal{N}(\mu_i,\sigma_i^2)$ with $\sigma_i^2 > \sigma^2, i \in [3]$, we can analyze the state by decomposing it in the Pauli basis $\{I,Z,X,Y\}$ with corresponding coefficients $[\alpha_{I},\alpha_{Z},\alpha_{X},\alpha_{Y}]$. Considering the expected state after encoding $\mathbb{E}[\rho] = \mathbb{E}_{\boldsymbol{x}}[R(\boldsymbol{x}) \rho_0 R(\boldsymbol{x})^{\dagger}]$, let $[\beta_{I},\beta_{Z},\beta_{X},\beta_{Y}]$ denote the Pauli basis coefficients of $\mathbb{E}[\rho]$. We establish that $\beta_{Z}^2 + \beta_{X}^2 + \beta_{Y}^2 \leqslant e^{-\sigma^2} (\alpha_{Z}^2 + \alpha_{X}^2 + \alpha_{Y}^2)$ while $\beta_{I} = \alpha_{I}$ remains unchanged. 
This demonstrates that the magnitude of Pauli coefficients for non-identity terms decays exponentially with the variance $\sigma^2$.
As the encoding layers $L$ increases, these non-identity coefficients rapidly approach zero, leaving only the identity term. Consequently, the quantum state converges to the maximally mixed state. The following theorem formalizes this phenomenon.

\begin{figure*}[htpb]
  \centering
  \includegraphics[width=0.8\textwidth]{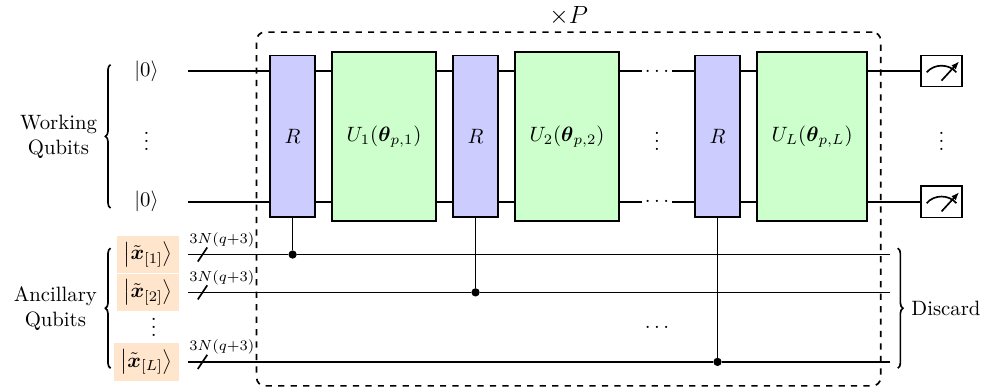}
  \caption{Approximating circuit for data re-uploading circuits. The qubits in approximating circuit are divided into two parts: working qubits and auxiliary qubits. Firstly, we use $3NL(q+3)$ ancillary qubits to encode the binary digits of feature $\boldsymbol{x} \in \mathbb{R}^{3NL}$ into the circuit. Then, we use the data-independent control gate $\mathrm{C}R$ between the working qubit as target qubit and auxiliary qubits as control qubits to approximate the encoding gate in working qubits. Note that the approximating circuit is only used for theoretical analysis.}
  \label{fig:approximating_circuit}
\end{figure*}

\begin{theorem}
\label{thm:concentration}
  Consider an $N$-qubit data re-uploading circuit with $L$ encoding layers and without repetition ($P=1$), which encodes data $\boldsymbol{x} \in \mathbb{R}^{3 NL}$ into the circuit, where each data point follows an independent Gaussian distribution, i.e., $x_{l,n,i} \sim \mathcal{N}(\mu_{l,n,i},\sigma^2_{l,n,i})$ and $\sigma_{l,n,i}^2 \geqslant \sigma^2$. Let $\rho(\boldsymbol{x},\boldsymbol{\theta})$ denote the $N$-qubit encoded state. Then the quantum divergence between the expected state $\mathbb{E}[\rho] = \mathbb{E}_{\boldsymbol{x}}[\rho(\boldsymbol{x},\boldsymbol{\theta})]$ and the maximally mixed state $\rho_{I} = \frac{I}{2^{N}}$ satisfies:

$$
\begin{aligned}
D_2\left( \mathbb{E}[\rho] || \rho_{I} \right) \leqslant \log_2 \left( 1 + (2^N-1) e^{-L\sigma^2} \right).
\end{aligned}
$$
\end{theorem}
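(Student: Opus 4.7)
My plan is to pass from a bound on $D_2(\mathbb{E}[\rho]\,\|\,\rho_I)$ to a bound on the squared Pauli coefficients of $\mathbb{E}[\rho]$, and then to contract those coefficients layer by layer. First I would expand $\mathbb{E}[\rho]$ in the Pauli basis as in Eq.~\eqref{eq:pauli} with coefficients $\beta_P$, and combine $\rho_I^{-1} = 2^N I$ with the orthogonality relation $\operatorname{Tr}[PQ] = 2^N \delta_{PQ}$ to obtain the clean identity
$$
D_2(\mathbb{E}[\rho]\,\|\,\rho_I) \;=\; \log_2\!\Bigl(2^N \operatorname{Tr}\bigl[(\mathbb{E}[\rho])^2\bigr]\Bigr) \;=\; \log_2\!\Bigl(1 + \sum_{P \neq I^{\otimes N}} \beta_P^2\Bigr),
$$
where I use $\beta_{I^{\otimes N}} = \operatorname{Tr}[\mathbb{E}[\rho]] = 1$. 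The theorem then reduces to the estimate $\sum_{P \neq I^{\otimes N}} \beta_P^2 \leq (2^N - 1)\,e^{-L\sigma^2}$.

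Next I would set up the layerwise evolution of the expected state. Because the data chunks $\boldsymbol{x}_{[1]},\dots,\boldsymbol{x}_{[L]}$ are independent and the trainable gates $U_l(\boldsymbol{\theta}_{1,l})$ are deterministic, iterated expectation yields $\mathbb{E}[\rho_l] = U_l\,\mathcal{E}_l\!\bigl(\mathbb{E}[\rho_{l-1}]\bigr)\,U_l^{\dagger}$, where $\mathcal{E}_l(\,\cdot\,) = \mathbb{E}_{\boldsymbol{x}_{[l]}}\!\bigl[R_l(\boldsymbol{x}_{[l]})\,\cdot\,R_l(\boldsymbol{x}_{[l]})^{\dagger}\bigr]$ factors as a tensor product $\bigotimes_{n=1}^N \mathcal{E}^{(n)}_l$ of single-qubit channels. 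Conjugation by $U_l$ preserves $\operatorname{Tr}[\rho^2]$, hence $\sum_P \beta_P^2$, and since $U_l$ also fixes $\beta_{I^{\otimes N}} = 1$ it preserves the non-identity squared sum exactly. This is precisely where allowing \emph{arbitrary} parameterized $U_l$, rather than just CNOT/CZ as in \cite{li2022concentration}, comes in for free: the argument uses nothing about $U_l$ beyond unitarity.

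The main technical step, and the principal obstacle, is a single-qubit contraction lemma for $\mathcal{E}^{(n)}_l$. In the Pauli basis $\{I, X, Y, Z\}$, $\mathcal{E}^{(n)}_l$ fixes $I$ and restricts to $\mathrm{span}\{X, Y, Z\}$ as the $3\times 3$ matrix $M = \mathbb{E}[M'_3]\,\mathbb{E}[M'_2]\,\mathbb{E}[M'_1]$, where $M'_i$ is the Bloch-sphere rotation implementing the $i$-th Euler factor of $R(\boldsymbol{x}_{l,n})$. Using the Gaussian moment identities $\mathbb{E}[\cos\phi] = e^{-\sigma^2/2}\cos\mu$ and $\mathbb{E}[\sin\phi] = e^{-\sigma^2/2}\sin\mu$ for $\phi \sim \mathcal{N}(\mu,\sigma^2)$, each $\mathbb{E}[M'_i]$ equals $e^{-\sigma_{\cdot,i}^2/2}$ times a genuine $2\times 2$ rotation on the plane perpendicular to its axis, plus the identity on that axis. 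The difficulty is that every single factor has a unit eigendirection (its preserved axis), so the overall contraction $\|M\|_{\mathrm{op}}^2 \leq e^{-\sigma^2}$ only emerges once the three rotations are composed; I would prove it by an explicit computation of $\|Mv\|^2$ for a generic $v \in \mathbb{R}^3$, splitting $v$ along the axes successively preserved by $M'_1$, $M'_2$, $M'_3$ and checking that at least one of the three shrinking factors hits every remaining component.

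Finally, I would promote the single-qubit bound to a per-layer contraction on $\sum_{P \neq I^{\otimes N}} \beta_P^2$. Because $\mathcal{E}_l$ is a tensor product, it preserves the support pattern $S(P) = \{n : P_n \neq I\}$ of each Pauli string, and on each fixed-support block it acts as $\bigotimes_{n \in S(P)} M^{(n)}$; by the tensor-product rule for operator norms, the squared coefficient mass on that block contracts by at most $e^{-|S(P)|\sigma^2} \leq e^{-\sigma^2}$ whenever $P \neq I^{\otimes N}$. Summing over non-empty supports and combining with the unitary-invariance step yields the one-layer inequality $\sum_{P \neq I^{\otimes N}} (\beta^{(l)}_P)^2 \leq e^{-\sigma^2}\sum_{P \neq I^{\otimes N}} (\beta^{(l-1)}_P)^2$. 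Iterating $L$ times from the pure initial state $\rho_0 = (\ket{0}\!\bra{0})^{\otimes N}$, for which $\sum_{P \neq I^{\otimes N}} (\beta_P^{(0)})^2 = 2^N - 1$, gives $\sum_{P \neq I^{\otimes N}} \beta_P^2 \leq (2^N - 1)e^{-L\sigma^2}$, and substituting into the first step delivers the claimed bound.
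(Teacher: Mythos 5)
Your proposal is correct and follows essentially the same route as the paper's proof: the identity $D_2(\mathbb{E}[\rho]\|\rho_I)=\log_2\bigl(1+\sum_{P\neq I^{\otimes N}}\beta_P^2\bigr)$, preservation of the non-identity Pauli mass under the arbitrary unitaries $U_l$, a single-qubit contraction of the expected encoding transfer matrix by $e^{-\sigma^2}$ (via the Gaussian identities $\mathbb{E}[\cos\phi]=e^{-\sigma^2/2}\cos\mu$), tensor-product extension, and iteration from the initial mass $2^N-1$. The only differences are presentational: the paper establishes the key single-qubit bound by factoring $\mathcal{T}$ as rotations sandwiching $\mathrm{diag}(A_2,A_2A_3,A_3)$ and bounds the eigenvalue of the full $L$-layer product at once, whereas you propose a direct vector-splitting computation and a layer-by-layer contraction inequality — both yield the same estimate.
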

Intuitively, when quantum states become indistinguishable, their measurement results for observables with eigenvalues in $[-1,1]$ also become identical. Theorem~\ref{thm:concentration} shows this occurs when $L \geqslant \frac{1}{\sigma^2}[(N+2)\ln 2 + 2 \ln(\frac{1}{\epsilon})]$, $|\mathbb{E}_{\boldsymbol{x}}[h_S(\boldsymbol{x})] - h_I| \leqslant \epsilon$, where $\mathbb{E}_{\boldsymbol{x}}[h_S(\boldsymbol{x})] = \operatorname{Tr}\left[H \mathbb{E}[\rho]\right]$ and $h_I = \operatorname{Tr}\left[H \rho_{I}\right]$. This means that the expectation of hypothesis $h_S$ with respect to feature $\boldsymbol{x}$ is nearly informationless (see App.~\ref{asec:effect-of-encoding-layers}).

It is worth noting that Theorem~\ref{thm:concentration} extends beyond the scope of Theorem 2 in~\cite{li2022concentration} by allowing arbitrary learnable parameterized quantum gates between encoding layers, rather than being limited to specific non-parameterized entangling gates.

\subsection{Dependence on the Number of Repetitions}
\label{subsec:effect_of_repetition}
Due to repeated data uploading, even the Pauli coefficients of quantum states have nonlinear relationships with input data, making direct analysis methods inapplicable. To address this, we approximate the data re-uploading circuits with $P > 1$ using an approximating circuit.

For a data vector $\boldsymbol{x} = [x_1,\cdots,x_D] \in \mathbb{R}^{D}$, we focus on a single element $x_d \in [0,2\pi]$ (periodicity follows from Eq.~\eqref{eq:R}). We expand $x_d$ in binary form as $\sum_{j= -3}^{q} b_{j} 2^{-j} + \epsilon_q$, where $b_j \in \{0,1\}$ and $|\epsilon_q| \leqslant 2^{-q}$. Here, the first 3 bits encode the integer part and $q$ bits encode the fractional part. Let $\tilde{x}_d$ denote the approximation of $x_d$ with $q+3$ bits.

To approximate $R_z(x_d)$, we use working qubits (for final measurement) as targets and auxiliary qubits (for encoding data $\tilde{x}_d$) as controls. Controlled rotations $\mathrm{C}-R_z(2^{-j})$ are applied with auxiliary qubits $\ket{b_j}$ as controls and working qubits as targets, approximating $R_z(x_d)$ with $R_z(\tilde{x}_d)$. This approach naturally extends to $R_y(x)$ gates.

For an $N$-qubit data re-uploading circuit with $L$ encoding layers and $P$ repetitions, we construct an approximating circuit using $N+3NL(q+3)$ qubits (Fig.~\ref{fig:approximating_circuit}). Let $h_P(\boldsymbol{x},\boldsymbol{\theta})$ and $\tilde{h}_P(\boldsymbol{x},\boldsymbol{\theta})$ denote the outputs of the original and approximating circuits respectively. 
The approximation error satisfies $|\tilde{h}_P(\boldsymbol{x},\boldsymbol{\theta}) - h_P(\boldsymbol{x},\boldsymbol{\theta})| \leqslant \delta$ when $q \geqslant \left\lceil \log_2 \left( 3PLN / \delta \right)  \right\rceil$.

For both $P=1$ and $P > 1$ circuits, we construct approximating circuits with outputs $\tilde{h}_1(\boldsymbol{x},\boldsymbol{\theta})$ and $\tilde{h}_P(\boldsymbol{x},\boldsymbol{\theta})$ respectively. The $P$-repeated approximating circuit essentially applies multiple data-independent unitary gates to the circuit with $P=1$. Since unitary operations cannot enhance the distinguishability between quantum states, $P$ repetitions cannot improve the indistinguishability caused by deep encoding layers. That is, if $|\tilde{h}_1(\boldsymbol{x},\boldsymbol{\theta}) - h_I| \leqslant \epsilon$, then $|\tilde{h}_P(\boldsymbol{x},\boldsymbol{\theta}) - h_I| \leqslant \epsilon$. This leads to the following theorem for data re-uploading circuits with $P$ repetitions (see App.~\ref{sec:proof_of_approximation}).
\begin{theorem}
  \label{thm:main_result}
  Consider an $N$-qubit data re-uploading circuit with $L$ encoding layers and $P$ repetitions, which encodes data $\boldsymbol{x} \in \mathbb{R}^{3 NL}$ into the circuit, where each data point follows an independent Gaussian distribution, i.e., $x_{l,n,i} \sim \mathcal{N}(\mu_{l,n,i},\sigma^2_{l,n,i})$ and $\sigma_{l,n,i}^2 \geqslant \sigma^2$. 
  Let $h_P(\boldsymbol{x},\boldsymbol{\theta})$ be its output with respect to an observable $H$ whose eigenvalues lie in $[-1,1]$.
   When $L \geqslant \frac{1}{\sigma^2}[(N+2)\ln 2 + 2 \ln(\frac{1}{\epsilon})]$, we have:
  $$
  \begin{aligned}
  \left|\underset{\boldsymbol{x}}{\mathbb{E}}[h_P(\boldsymbol{x},\boldsymbol{\theta})] - h_I \right| \leqslant \epsilon,
  \end{aligned}
  $$
  where $h_I = \operatorname{Tr}\left[H \rho_{I}\right]$ and $\rho_{I} = \frac{I}{2^{N}}$ is the $N$-qubit maximally mixed state.
\end{theorem}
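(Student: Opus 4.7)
The plan is to exploit the approximating circuit of Figure~\ref{fig:approximating_circuit} to decouple data loading from circuit evolution, so that the $P$-fold reuse of each data chunk (which breaks the direct Pauli-coefficient analysis behind Theorem~\ref{thm:concentration}) is replaced by data-independent unitaries on an enlarged register. The argument proceeds in three steps: bounding the approximation error, reducing the $P$-repetition analysis to the $P=1$ case, and invoking Theorem~\ref{thm:concentration}.

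First I would construct the approximating circuit by binary-expanding each $x_d = \sum_{j=-3}^{q} b_j\, 2^{-j} + \epsilon_q$ with $|\epsilon_q| \leq 2^{-q}$, basis-encoding the bits onto $3NL(q+3)$ auxiliary qubits, and replacing every angle-encoded $R_z(x_d)$ (and similarly $R_y(x_d)$) by a cascade of controlled rotations $\mathrm{C}R_z(2^{-j})$ with the corresponding auxiliaries as controls. Gate-level Lipschitz bounds $\|R(\boldsymbol{x}) - R(\tilde{\boldsymbol{x}})\|_\infty \leq \|\boldsymbol{x} - \tilde{\boldsymbol{x}}\|/2$ telescoped over the $3NLP$ encoding instances yield $|\tilde{h}_P(\boldsymbol{x},\boldsymbol{\theta}) - h_P(\boldsymbol{x},\boldsymbol{\theta})| \leq \delta$ once $q \geq \lceil \log_2(3PLN/\delta)\rceil$.

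The critical structural feature is that after the single data-loading step, the full $P$-repetition evolution on the combined register factorizes as $V_P(\boldsymbol{\theta}) = W(\boldsymbol{\theta}) V_1(\boldsymbol{\theta})$, with $W$ --- the unitary implementing repetitions $2, \ldots, P$ --- data-independent. Taking expectations yields $\mathbb{E}_{\boldsymbol{x}}[\tilde{h}_P] = \operatorname{Tr}[\tilde{H}\, \mathbb{E}_{\boldsymbol{x}}[\tilde{\rho}_1^{\mathrm{full}}]]$ for the modified observable $\tilde{H} := W^\dagger(H \otimes I)W$, which still has eigenvalues in $[-1,1]$ and, by unitary invariance of $\rho_I^{\mathrm{full}}$, still satisfies $\operatorname{Tr}[\tilde{H} \rho_I^{\mathrm{full}}] = h_I$. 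I would then apply Theorem~\ref{thm:concentration} to the $P=1$ case: since the working-qubit marginal of $\mathbb{E}_{\boldsymbol{x}}[\tilde{\rho}_1^{\mathrm{full}}]$ equals the original $P=1$ circuit's expected working state on the discretized distribution of $\tilde{\boldsymbol{x}}$, Theorem~\ref{thm:concentration} together with the $\delta$-approximation bounds $D_2(\mathbb{E}_{\boldsymbol{x}}[\tilde{\rho}_1^W]\,\|\,\rho_I^W) \leq \log_2(1 + (2^N - 1)e^{-L\sigma^2})$. Converting this into an observable-expectation bound via the measurement-difference inequality in Appendix~\ref{asec:measures-of-quantum-state-distinguishability}, choosing $L \geq \sigma^{-2}[(N+2)\ln 2 + 2\ln(1/\epsilon)]$ to collapse the right-hand side to $\epsilon$, and absorbing the tunable $\delta$ completes the argument.

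The main obstacle is the observable transfer: although $W$ is unitary and preserves full-system divergence, the modified observable $\tilde{H}$ in general acts on the whole working-plus-auxiliary register, whereas Theorem~\ref{thm:concentration} only controls divergence on the $N$-qubit working marginal --- and the full expected state $\mathbb{E}_{\boldsymbol{x}}[\tilde{\rho}_1^{\mathrm{full}}]$ is certainly not close to the full maximally mixed state, since the auxiliaries retain all the data information. The resolution must exploit the product structure $\mathbb{E}_{\boldsymbol{x}}[\tilde{\rho}_1^{\mathrm{full}}] = V_1(\ket{0}\!\bra{0}_W \otimes \sigma_A)V_1^\dagger$ together with the identity $\operatorname{Tr}[\tilde{H} \rho_I^{\mathrm{full}}] = h_I$ to rewrite $\mathbb{E}_{\boldsymbol{x}}[\tilde{h}_P] - h_I$ as a quantity governed purely by the working-qubit marginal's divergence, so that the $N$-qubit bound from Theorem~\ref{thm:concentration} suffices.
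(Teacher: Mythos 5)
Your proposal retraces the paper's own proof step for step: the binary-encoded auxiliary register, the Lipschitz bound giving $|\tilde{h}_P - h_P|\leqslant\delta$ for $q\geqslant\lceil\log_2(3PLN/\delta)\rceil$, the factorization $V_P\cdots V_1 = W V_1$ with $W$ data-independent, the modified observable $\tilde H = W^{\dagger}(H\otimes I)W$ with eigenvalues in $[-1,1]$ and $\operatorname{Tr}[\tilde H\,\rho_I^{\mathrm{full}}]=h_I$. The obstacle you flag in your final paragraph is exactly where the argument actually breaks, and neither your sketch nor the paper closes it. Conditioning $\tilde H$ on the auxiliary basis state gives an $\boldsymbol{x}$-dependent working-qubit observable $H'_{\tilde{\boldsymbol{x}}} = \bra{\tilde{\boldsymbol{x}}}\tilde H\ket{\tilde{\boldsymbol{x}}}$, and the quantity to control becomes $\mathbb{E}_{\boldsymbol{x}}\operatorname{Tr}[H'_{\tilde{\boldsymbol{x}}}(\rho_1(\tilde{\boldsymbol{x}})-\rho_I)]$, in which observable and state are correlated through the data. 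Theorem~\ref{thm:concentration} only bounds $\|\mathbb{E}_{\boldsymbol{x}}[\rho_1]-\rho_I\|_1$; against a data-dependent observable the best it yields is $\mathbb{E}_{\boldsymbol{x}}\|\rho_1(\tilde{\boldsymbol{x}})-\rho_I\|_1 = 2(1-2^{-N})$, which is vacuous because each $\rho_1(\tilde{\boldsymbol{x}})$ is pure. The identity $\operatorname{Tr}[\tilde H(\rho_I\otimes\ket{\tilde{\boldsymbol{x}}}\bra{\tilde{\boldsymbol{x}}})]=h_I$ that you propose to exploit does rewrite the error in this form, but it does not decorrelate $H'_{\tilde{\boldsymbol{x}}}$ from $\rho_1(\tilde{\boldsymbol{x}})$, so the "$N$-qubit marginal divergence" never takes over.

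You should know that the paper's own proof makes precisely the same leap --- it invokes Corollary~\ref{acor:concentration_result} for the output taken with respect to $H'$, treating $H'$ as an observable whose reference value is reached because the relevant expected state tends to the maximally mixed state, which is false for the joint working-plus-auxiliary register --- so you have correctly located a genuine hole rather than missed a trick. Worse, the gap is not patchable as stated: take $N=1$, $P=2$, all interleaving gates equal to the identity, and $H=Z$. The circuit is $A(\boldsymbol{x})^2$ with $A=R_L(\boldsymbol{x}_{[L]})\cdots R_1(\boldsymbol{x}_{[1]})$, whose Bloch transfer matrix $\mathcal{T}_A$ converges in distribution to Haar measure on $SO(3)$ as $L\to\infty$; since $\mathbb{E}_{\mathrm{Haar}}[O^2]=\tfrac13 I_3$, one gets $\mathbb{E}_{\boldsymbol{x}}[\operatorname{Tr}(Z\rho_2)]\to\tfrac13$ while $h_I=0$ and the theorem's $\epsilon$ tends to $0$. (Already at $L=1$, $\mu=0$, $\sigma^2=4$ one computes $\mathbb{E}[\alpha_Z]=\tfrac12(1+e^{-2\sigma^2})-\tfrac12(1-e^{-2\sigma^2})e^{-\sigma^2}\approx 0.49$, exceeding the permitted $\epsilon\approx 0.38$.) So the reduction to $P=1$ via a data-independent $W$ cannot deliver the claimed bound without further hypotheses; being explicit that this step is unresolved, as you were, is the correct conclusion.
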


It is important to note that Theorem~\ref{thm:main_result} extends to arbitrary repetition numbers, going beyond the non-repetition case addressed in Corollary 2.1 of~\cite{li2022concentration}.

\subsection{Bounds on the Prediction Error}
\label{subsec:application_to_machine_learning}
We establish that data re-uploading models with deep encoding layers exhibit limitations in predictive performance:
\begin{proposition}[Informal]
  \label{prop:classification}
  When the conditions of Theorem~\ref{thm:main_result} hold, the prediction error of the hypothesis $h_S$ generated by the data re-uploading model satisfies:

  (1) For binary classification tasks: $\left| R^{C}(h_S)-  \frac{1}{2} \right| < \epsilon;$

  (2) For regression tasks: $R^{L}(h_S) \geqslant \left| \underset{\boldsymbol{x} \sim \mathcal{D}_{\mathcal{X}}}{\mathbb{E}}[f(\boldsymbol{x})] \right| - \epsilon.$
\end{proposition}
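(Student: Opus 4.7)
The plan is to reduce both inequalities to a direct application of Theorem~\ref{thm:main_result}, after rewriting the prediction error in a form that exposes the model's expected output and identifying the corresponding ``informationless'' baseline value $h_I = \mathrm{Tr}[H\rho_I]$ for the relevant observable.

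For the regression bound, I would first push the absolute value outside by Jensen's inequality,
\[
R^L(h_S) \;\geq\; \bigl|\underset{\boldsymbol{x}}{\mathbb{E}}[f(\boldsymbol{x})]-\underset{\boldsymbol{x}}{\mathbb{E}}[h_S(\boldsymbol{x})]\bigr| \;\geq\; \bigl|\underset{\boldsymbol{x}}{\mathbb{E}}[f(\boldsymbol{x})]\bigr|-\bigl|\underset{\boldsymbol{x}}{\mathbb{E}}[h_S(\boldsymbol{x})]\bigr|,
\]
and then compute $h_I = \mathrm{Tr}\!\left[\bigotimes_{n=1}^{N}Z_n \cdot \rho_I\right] = 2^{-N}\prod_n \mathrm{Tr}[Z_n] = 0$. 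Since $H_L = \bigotimes_{n}Z_n$ has eigenvalues in $\{-1,+1\}\subset[-1,1]$, Theorem~\ref{thm:main_result} yields $|\mathbb{E}[h_S(\boldsymbol{x})]|\leq\epsilon$, and the claim follows.

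For the classification bound, the key observation is that $h_S(\boldsymbol{x}) = \mathrm{Tr}[H_{y(\boldsymbol{x})}\rho(\boldsymbol{x},\boldsymbol{\theta}^*)]\in[0,1]$, so $|1-h_S| = 1-h_S$ and $R^C(h_S) = 1-\mathbb{E}[h_S]$; the claim therefore reduces to $|\mathbb{E}[h_S]-\tfrac{1}{2}|\leq\epsilon$. A direct trace computation gives $\mathrm{Tr}[H_y\rho_I] = \tfrac{1}{2}$ for both $y\in\{0,1\}$ because $H_y$ is a rank-$2^{N-1}$ projector. I would next split the expectation by the true label,
\[
\mathbb{E}[h_S] = \sum_{y\in\{0,1\}} P\!\left(y(\boldsymbol{x}){=}y\right)\,\mathbb{E}\!\left[\mathrm{Tr}[H_y\rho(\boldsymbol{x})] \mid y(\boldsymbol{x}){=}y\right],
\]
so that each conditional expectation matches the form required by Theorem~\ref{thm:main_result} for a fixed observable $H_y$ with spectrum in $[0,1]$; applying the theorem to each class and taking the convex combination then delivers $|\mathbb{E}[h_S]-\tfrac{1}{2}|\leq\epsilon$.

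The main obstacle is that Theorem~\ref{thm:main_result} is proved for the unconditioned independent-Gaussian distribution on $\boldsymbol{x}$, whereas the classification step above invokes it on the class-conditional distributions, which are generally not Gaussian. I expect this to be resolvable by inspecting the proof of Theorem~\ref{thm:concentration}, whose only use of the Gaussian assumption is the sub-Gaussian estimate $|\mathbb{E}[e^{i\phi x_i}]|\leq e^{-\phi^2\sigma^2/2}$ on each coordinate's characteristic function; this property survives conditioning on events of non-negligible probability, possibly with a smaller effective $\sigma$. A conditioning-free alternative would be to apply Theorem~\ref{thm:main_result} to the single fixed observable $H_1-H_0 = Z_1\otimes I^{\otimes(N-1)}$ and bound the residual correlation $\mathbb{E}[y(\boldsymbol{x})(p_1(\boldsymbol{x})-p_0(\boldsymbol{x}))]$ using the exponential decay of the non-identity Pauli coefficients of $\mathbb{E}[\rho(\boldsymbol{x})]$ established in Theorem~\ref{thm:concentration}.
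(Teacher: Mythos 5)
Your proof is correct and, in its core algebraic steps, identical to the paper's: for regression the paper likewise applies Jensen's inequality followed by the reverse triangle inequality and uses $h_I=\operatorname{Tr}[\bigotimes_n Z_n\,\rho_I]=0$; for classification it likewise uses $h_S\in[0,1]$, $\operatorname{Tr}[H_y\rho_I]=\tfrac12$, and $|R^C(h_S)-\tfrac12|=|h_I-\mathbb{E}[h_S]|$.

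The one genuine difference concerns the step you flag as the ``main obstacle.'' The paper's formal version of the proposition (App.~E) does not derive $|\mathbb{E}_{\boldsymbol{x}}[h_S(\boldsymbol{x})]-h_I|\leqslant\epsilon$ from Theorem~\ref{thm:main_result} at all: it simply takes that inequality as the stated hypothesis, so its proof is pure algebra and the issue of the label-dependent observable $H_{y(\boldsymbol{x})}$ never arises. You instead try to close that gap, and your first route --- splitting the expectation by class and applying Theorem~\ref{thm:main_result} to each fixed observable $H_y$ under the class-conditional distribution --- is the right way to do it, and is consistent with the paper's experimental setup where each class is itself an independent Gaussian with per-coordinate variance at least $\sigma^2$ (so no re-examination of the proof of Theorem~\ref{thm:concentration} is even needed; the theorem applies verbatim to each class-conditional law). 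Your second, ``conditioning-free'' alternative is weaker than you suggest: writing $H_{y}=\tfrac12 I+\tfrac{2y-1}{2}(H_1-H_0)$ reduces the problem to bounding the correlation $\mathbb{E}[(2y(\boldsymbol{x})-1)\operatorname{Tr}[(H_1-H_0)\rho(\boldsymbol{x})]]$, and the decay of the Pauli coefficients of the \emph{averaged} state $\mathbb{E}[\rho(\boldsymbol{x})]$ controls only the mean of $\operatorname{Tr}[(H_1-H_0)\rho(\boldsymbol{x})]$, not its correlation with an arbitrary label function; so that route does not go through without additional structure on $y(\boldsymbol{x})$. Net assessment: your argument is a valid and in fact more complete proof than the paper's (which defers the key condition to a hypothesis), provided you state explicitly that the Gaussian assumption of Theorem~\ref{thm:main_result} is imposed on each class-conditional distribution and drop the second alternative.
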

For classification tasks, when $R^{C}(h_S) = \frac{1}{2}$, it indicates that the hypothesis $h_S$ assigns equal probability to new features belonging to either the correct or incorrect class. For regression tasks, the lower bound implies that the prediction error depends solely on the target function and is independent of the learned hypothesis $h_S$. In both cases, these results demonstrate that the model fails to extract useful information.

\section{Experiments}
\label{sec:experiments}

\subsection{Divergence Experiments}
\begin{figure}[H]
  \centering
  \includegraphics[width=0.4\textwidth]{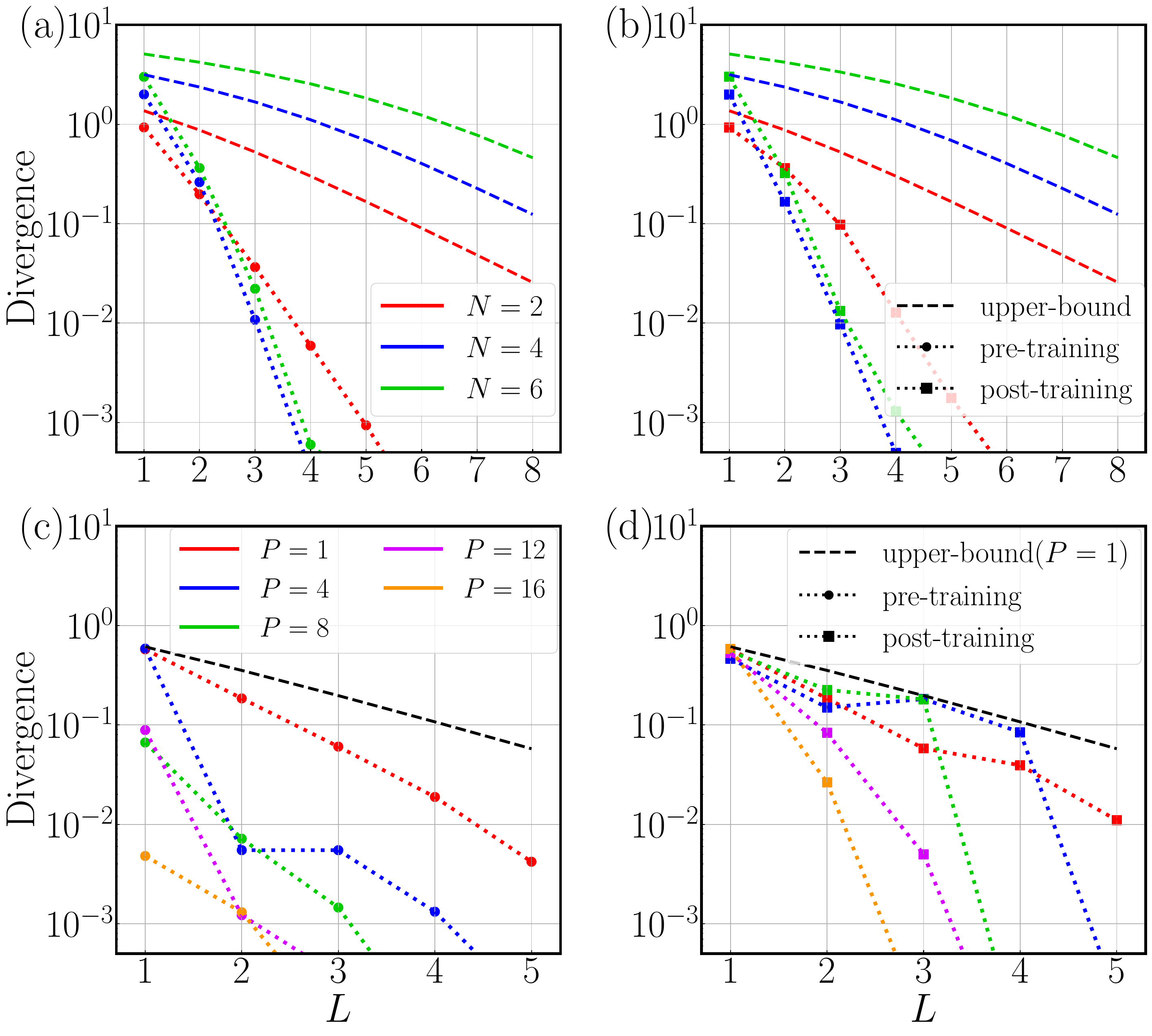}
  \caption{Divergence versus encoding layer $L$. Panels (a, b): varying qubit number $N$ at fixed repetition $P=1$; panels (c, d): varying repetition $P$ at fixed qubit number $N=1$. Pre-training (before training, in panels (a, c)) and post-training (after training, in panels (b, d)) results are compared with theoretical upper bounds. Panels (a, b) share common legends, and panels (c, d) share common legends: colors in panels (a, c) represent different $N$ or $P$, and line styles in panels (b, d) represent conditions (upper-bound, pre-training, post-training).  Y-axis is logarithmic.}
  \label{fig:divergence}
\end{figure}
\textbf{Data:} To validate Theorem~\ref{thm:concentration}, we measure the divergence in binary classification datasets. Each dataset sample $(\boldsymbol{x}^{(m)},y^{(m)})$ consists of a $D$-dimensional feature vector $\boldsymbol{x}^{(m)} \in \mathbb{R}^D$ and a class label $y^{(m)}$. The elements of $\boldsymbol{x}^{(m)}$ are independently drawn from Gaussian distributions, where the $d$-th element follows $\mathcal{N}(\mu_d,\sigma_d^2)$. The mean values $\mu_d$ are class-dependent: for class one, $\mu_d = [\frac{2\pi}{16}(d \operatorname{mod} 8)] \operatorname{mod} 2\pi$, while for class two, $\mu_d = [\frac{2\pi}{16} (8 + (d \operatorname{mod} 8))] \operatorname{mod} 2\pi$. Each dimension $d$ has variance $\sigma_d^2 = 0.8$ for both classes.

\textbf{Experimental Setup:} We use a data re-uploading model as shown in Fig.~\ref{fig:encoding-circuit-use}. The training set contains 2000 samples, and the test set contains 1,000,000 samples (see App.~\ref{asec:ave-exp-diff}). With normally distributed initial parameters, we trained models using cross-entropy loss with Adam optimizer (learning rate = 0.005) over 1000 epochs (batch size = 200), selecting parameters with the lowest training error for test. We computed divergence between maximally mixed state and average encoded states for both classes, then averaged these values as the final divergence result.

\textbf{Results:} For fixed repetition $P=1$, we examine the dependence of quantum divergence on qubit number and encoding layers. As shown in Fig.~\ref{fig:divergence}(a), the divergence exhibits exponential decay with increasing encoding layers $L$. While training induces a marginal increase in divergence as shown in Fig.~\ref{fig:divergence}(b), this effect is negligible compared to the exponential decay trend.

For fixed qubit number $N=1$, we examine the dependence of divergence on encoding layers and repetition. As shown in Fig.~\ref{fig:divergence}(c), without training, more repetitions may lead to faster decay of divergence. However, as shown in Fig.~\ref{fig:divergence}(d), after training, the divergence with multiple repetitions approaches the upper bound of $P = 1$ given by Theorem~\ref{thm:concentration}.
 Notably, when the number of encoding layer $L=1$, the divergence remains close to the theoretical upper bound regardless of the number of repetitions, which explains why circuits with $L=1$ perform well in quantum machine learning tasks.
\subsection{Classification Experiments}
\label{subsec:classification_experiment}

\textbf{Data:} To demonstrate the limitations of quantum data re-uploading caused by deep encoding layers, we consider a simple linear separable classification problem. For $D$-dimensional data $\boldsymbol{x}$ distributed in $[-\frac{\pi}{2},\frac{\pi}{2}]^{D}$, if $\boldsymbol{1}^{\top} \boldsymbol{x} > 0.3D$, the data belongs to class one, and if $\boldsymbol{1}^{\top} \boldsymbol{x} < -0.3D$, it belongs to class two, where $\boldsymbol{1}$ is a vector of ones.

\begin{figure}[htpb]
  \centering
  \includegraphics[width=0.45\textwidth]{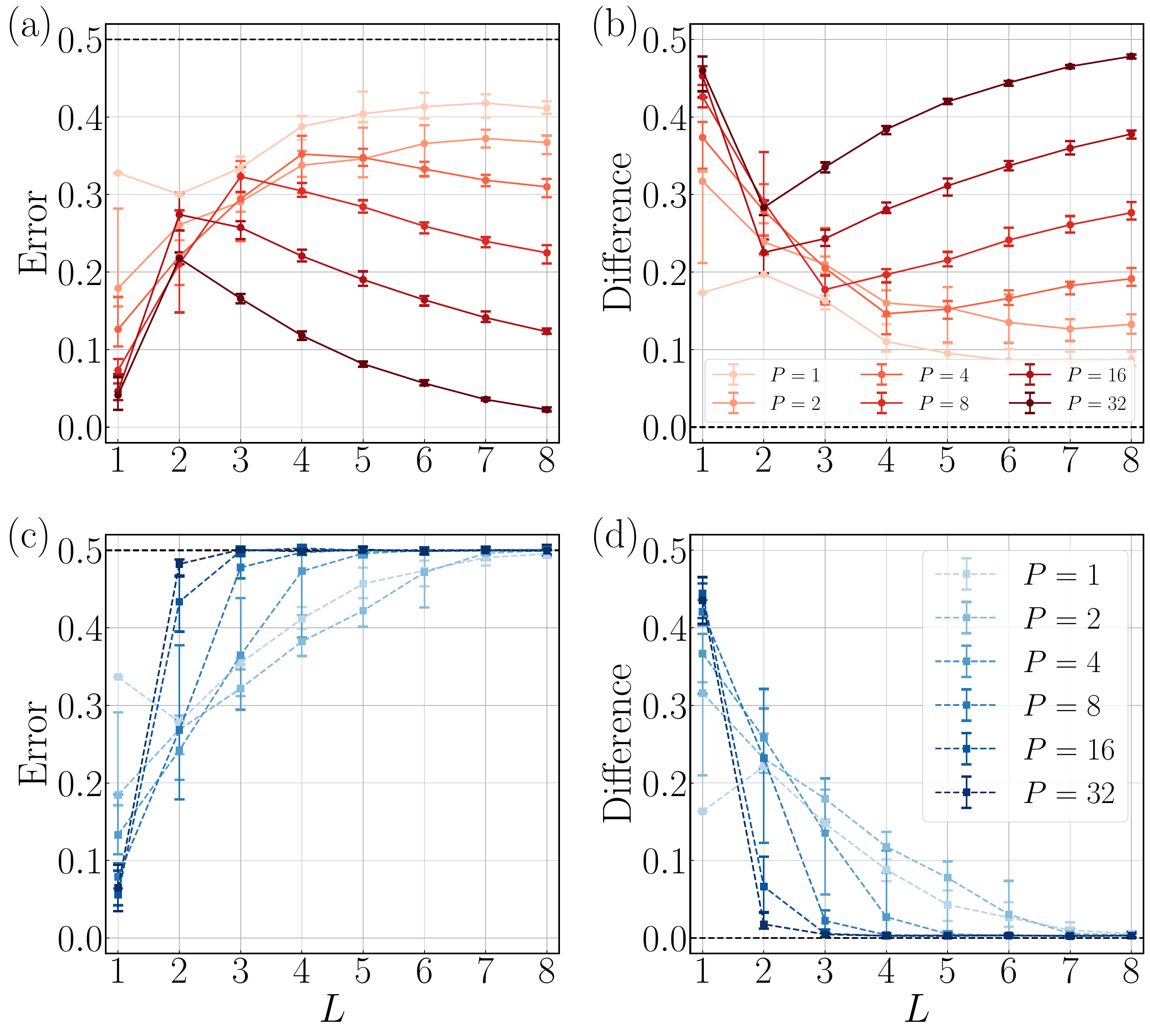}
  \caption{(a) Training error, (c) Test error,
  (b), (d) Difference between the model's output with respect to the observable $H_0$ on training data and test data compared to $\operatorname{Tr}\left[H \rho_{I}\right]$, respectively.
  Error bars represent the minimum and maximum values across 10 independent runs with different random seeds, with the central line showing the mean value.}
  \label{fig:classification_error}
\end{figure}

\textbf{Experimental Setup:} We use a single-qubit data re-uploading model (Fig.~\ref{fig:encoding-circuit-use}) for classification. A single qubit is chosen because it has received significant attention~\cite{perez2020data,yu2022power,perez2024universal} and serves as a typical example for examining the limitations of deep data re-uploading models. Besides, the single-qubit model can exclude the influence of trainability. To eliminate the impact of parameter count, we conduct comparative experiments using quantum circuits with fixed total encoding layers $L_{\max} = 8$. 
For data with $L$ chunks, we maintain the total circuit layers at $L_{\max}$ by only encoding data in the first $L$ layers and setting the data input to zero vectors for the remaining layers (using $L$ encoding layers).

In numerical experiments, the training set contains 600 samples and the test set contains 10000 samples. With normally distributed initial parameters, we trained models using cross-entropy loss with Adam optimizer (learning rate = 0.005) over 1000 epochs (batch size = 200), selecting parameters with the lowest training error for test. The experiments were repeated 10 times with randomly initialized parameters.

\textbf{Results:} Fig.~\ref{fig:classification_error}(a) and (b) show the training and test error of the model with different encoding layers $L$ and repetitions $P$. 
As $L$ increases, the test error gradually increases to 0.5 (equivalent to random guessing). As $P$ increases, the training error decreases, but the test error converges to 0.5 faster with increasing $L$. Fig.~\ref{fig:classification_error}(c) and (d) display the difference between the model's output with respect to the observable $H_0$ and $\operatorname{Tr}\left[H_{0} \rho_{I}\right]$ for training and test data, corresponding to Fig.~\ref{fig:classification_one_error}(a) and (b). As the test error approaches 0.5, the difference on test data approaches zero, which verifies Theorem~\ref{thm:main_result}.

\subsection{Classification Experiments in Same Dataset}
\label{subsec:classification_experiment_same_dataset}

\begin{figure}[htpb]
  \centering
  \includegraphics[width=0.45\textwidth]{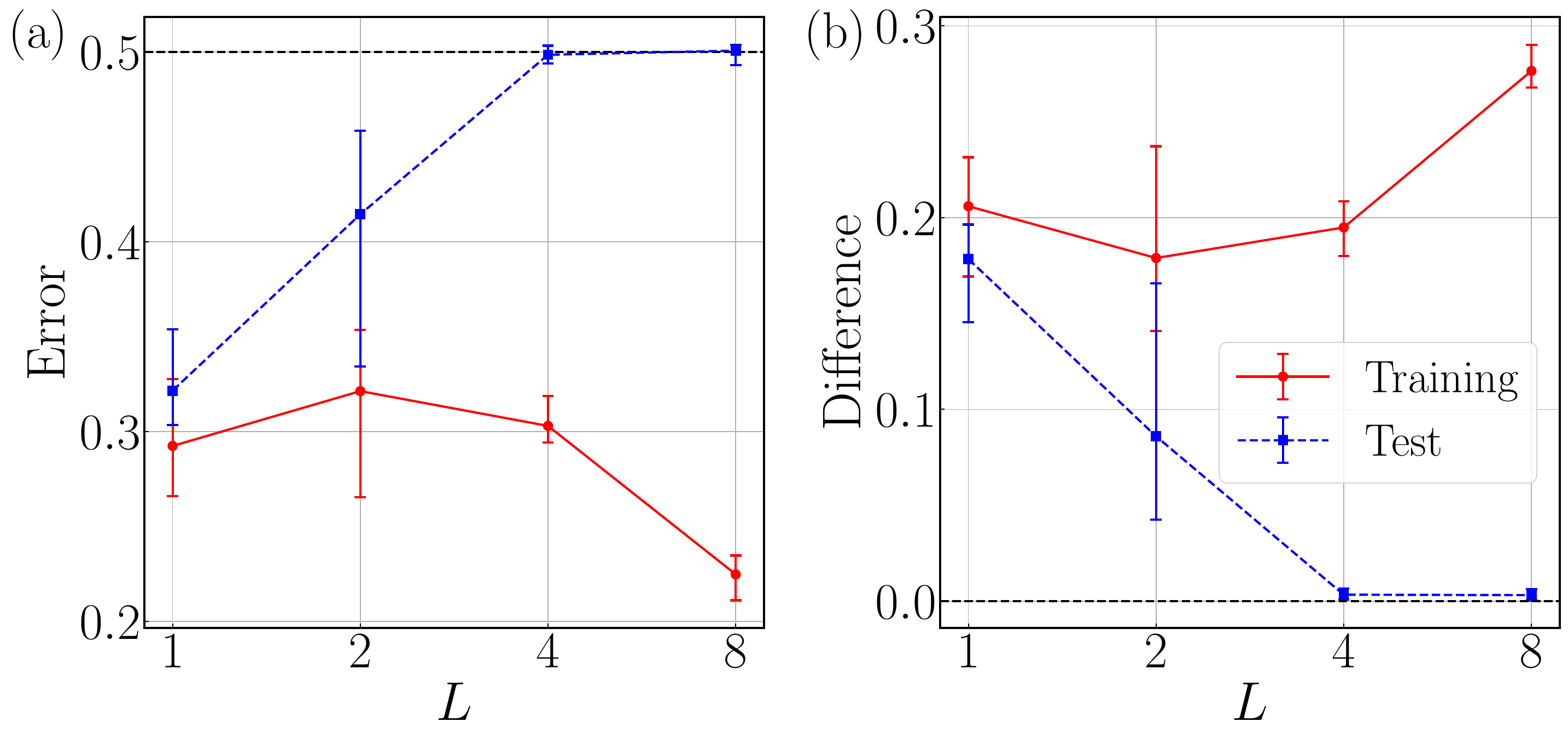}
  \caption{(a) Training and test error with same datasets, same repetition, but different encoding layers, (b) Difference between the output with respect to $H_0$ on training and test data and $\operatorname{Tr}\left[H \rho_{I}\right]$. Error bars represent the minimum and maximum values across 10 independent runs with different random seeds, with the central line showing the mean value.}
  \label{fig:classification_one_error}
\end{figure}

\textbf{Data:}To eliminate the influence of data dimensionality on experimental results, we employed the same data as described in Subsec~\ref{subsec:classification_experiment}, with a feature dimension of $D=24$.

\textbf{Experimental Setup:} We used data re-uploading circuits (Fig.~\ref{fig:encoding-circuit-use}, excluding CNOT gates to prevent entanglement effects, as entanglement may degrade training and prediction performance~\cite{ortiz2021entanglement,leone2024practical}) with the same number of parameters: ($N=1,L=8$), ($N=2,L=4$), ($N=4,L=2$), and ($N=8,L=1$), all with $P=8$ repetitions. 
Other experimental settings follow those in Subsec.~\ref{subsec:classification_experiment}.

\textbf{Results:} Fig.~\ref{fig:classification_one_error} shows as encoding layers $L$ increases, the test error gradually increases to 0.5, the corresponding difference approaches zero. Despite similar training errors in same dataset between $L=1$ and $L=4$, the test performance differed significantly due to encoding layers.

\subsection{Real-world Dataset Experiments}

\textbf{Dataset:} CIFAR-10-Gray (airplane/automobile, grayscale, $12 \times 12$ pixels), CIFAR-10-RGB (airplane/automobile, RGB, $12 \times 12$ pixels), MNIST (digit $0/1$, $12 \times 12$ pixels).

\textbf{Experimental Setup:} We used circuits (Fig.~\ref{fig:encoding-circuit-use}) with $N=8$ and $P=2$, using encoding layers  $L=6$ for CIFAR-10-Gray and MNIST, and $L=18$ for CIFAR-10-RGB.
 The training set contain 600 samples and the test set contains 1000 samples per class. All other experimental settings follow Subsec.~\ref{subsec:classification_experiment}.

\textbf{Results:} As shown in Fig.~\ref{fig:real_world}(a, b), MNIST outperformed CIFAR-10-Gray, which can be attributed to its near-zero variance (Fig.~\ref{fig:real_world}(d)). Direct encoding of RGB CIFAR-10 performed worse than grayscale CIFAR-10 due to additional encoding layers (Fig.~\ref{fig:real_world}(c)). As the number of repetitions increased, we observed that training accuracy increased while test accuracy decreased, leading to larger generalization error.

\begin{figure}[htpb]
  \centering
  \includegraphics[width=0.45\textwidth]{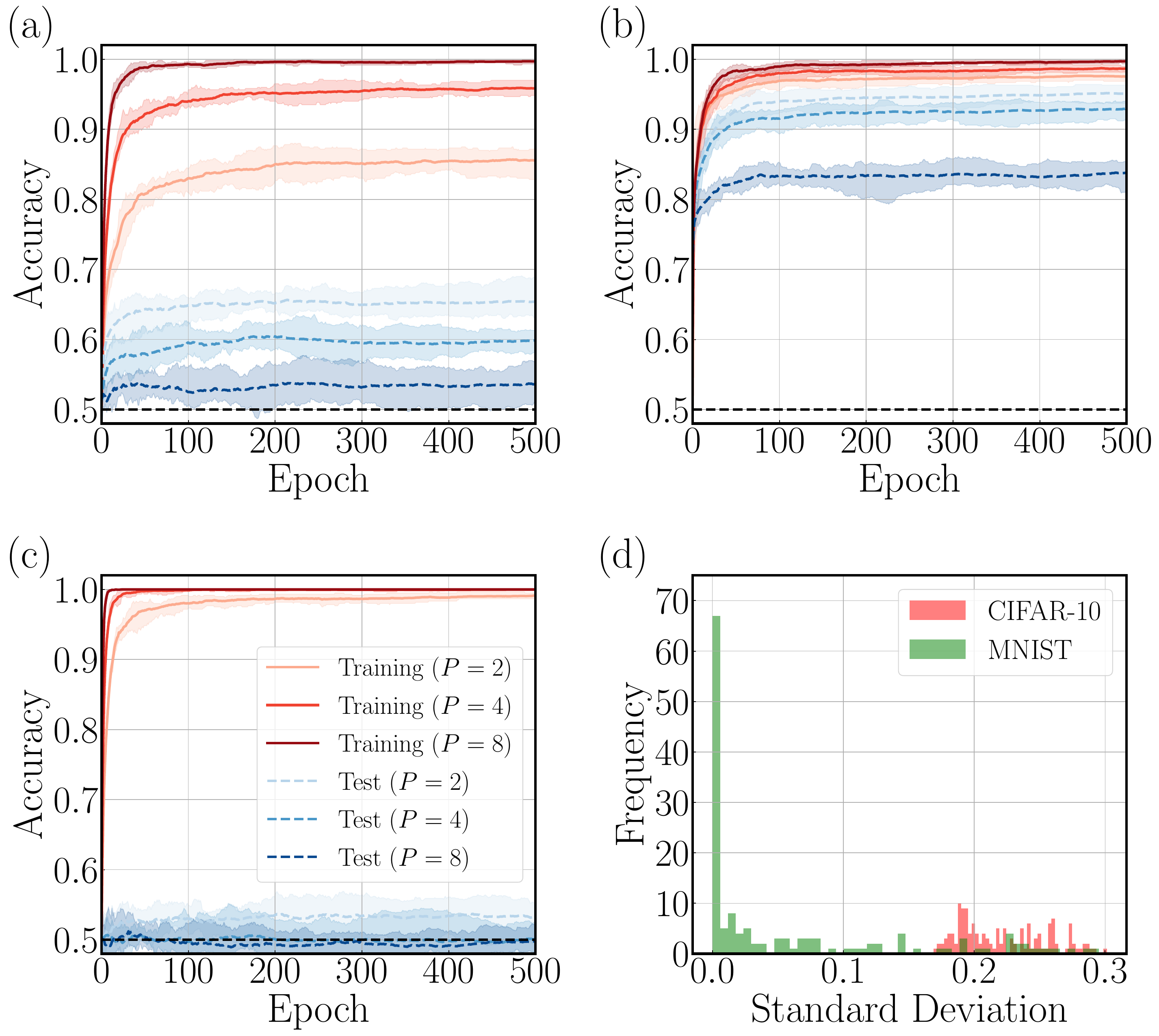}
  \caption{Results of real-world datasets: (a) Grayscale CIFAR-10, (b) MNIST, (c) RGB CIFAR-10, (d) Standard deviations of Grayscale CIFAR-10 and MNIST. Solid and dashed lines represent the mean training accuracy and test accuracy, respectively, across 10 independent runs with different random seeds. Shaded areas indicate the minimum and maximum values over these runs. Colors represent different number of repetitions ($P = 2, 4, 8$).
  }
  \label{fig:real_world}
\end{figure}

Additionally, we conducted regression experiments, analyzed the relationships among prediction error, training error, generalization error, and accuracy (App.~\ref{asec:more_experiments}), and examined how increasing sample size and model complexity (repetition number) impacts these three types of errors.



\section{Discussion}
In this paper, we assume that the data follows an independent Gaussian distribution. While this assumption may appear strong given that real-world data elements are typically correlated, we emphasize several important points. First, our primary objective is to demonstrate the limitations of data re-uploading models with deep encoding layers. The fact that their predictive capability approaches random guessing under this assumption sufficiently illustrates this point. Second, as validated in Sec.~\ref{sec:experiments}, our conclusions remain valid even when the data elements are not independent. Third, for certain tasks such as regression, data elements can be approximated as independent after feature engineering (such as principal component analysis). 

We adopt the Gaussian distribution due to its prevalence and convenience for proof, but our conclusions are not limited to Gaussian distributions. The key property we require is that the distribution of data $x$ satisfies $\mathbb{E}[\cos(x)] = \gamma \cos (\mu)$, where $|\gamma| < 1$. For example, with a uniform distribution over $[\mu-a,\mu+a]$, we have $\mathbb{E}[\cos(x)] = \sin(a)/ a \cdot \cos(\mu)$, where $|\gamma| = |\sin(a) / a| < 1$ when $a \neq 0$.

Finally, we note that our assumption implies each element provides independent information. When elements in high-dimensional data are strongly correlated, the limitations discussed in this paper may not occur. However, such data essentially reduces to one-dimensional information, for more details, see App.~\ref{sec:counter_example}.

\section*{Impact Statement}

Our paper aims to highlight the limitations of data re-uploading models with deep encoding layers in supervised quantum machine learning tasks. We provide theoretical guidance for the architectural design of data re-uploading models. Specifically, when using data re-uploading models to process high-dimensional data, employing a limited number of qubits with deep encoding layers is undesirable. Future work could focus on enhancing the predictive performance of large-scale data re-uploading models with shallow encoding layers.

\section*{Acknowledgments}

This work is supported by NSFC (Grant No. 62173201) and the Innovation Program
for Quantum Science and Technology (Grants No. 2021ZD0300200).

\bibliography{ref}
\bibliographystyle{icml2025}

\newpage
\appendix
\onecolumn
\part{Appendix}
\parttoc

\renewcommand{\theequation}{A.\arabic{equation}}
\renewcommand{\thefigure}{A.\arabic{figure}}
\renewcommand{\theHfigure}{A.\arabic{figure}} 
\setcounter{figure}{0}

\setcounter{equation}{0}

\section{Terminology about data re-uploading}

~\label{app:terminology}
In this appendix, we clarify several key terminologies related to data re-uploading in this paper.

\textbf{Data re-uploading encoding:} The data re-uploading encoding discussed in this paper refers to an encoding paradigm that maps classical data into quantum circuits, characterized by two key properties:

\begin{itemize}
  \item Insertion of parameterized quantum gates between encoding gates.
  \item Repeated uploading of data through identical encoding gates into the quantum circuit.
\end{itemize}

This encoding paradigm encompasses two special cases: 
\begin{itemize}
  \item When no parameterized quantum gates exist between encoding gates (only identity or fixed quantum gates), with parameterized gates appearing only after all encoding gates, the data re-uploading encoding becomes equivalent to angle encoding (encoding data through rotation gates) in the encoding-variational paradigm; 
  \item When data is uploaded only once without repetition, which we consider as a special case of repeated uploading with repetition count equal to one.
\end{itemize}

When data re-uploading encoding satisfies both special cases simultaneously - that is, data is uploaded only once and only identity or fixed quantum gates (such as CNOT or CZ) exist between encoding gates - the data re-uploading encoding discussed in this paper reduces to the scenario discussed in the paper~\cite{li2022concentration}.

\textbf{Data re-uploading circuits:} The quantum circuits that satisfy the data re-uploading encoding paradigm. Data re-uploading circuit using parameters $\boldsymbol{\theta}$ with respect to data $\boldsymbol{x}$ can be represented as a unitary operator $V(\boldsymbol{x},\boldsymbol{\theta})$, which encodes data $\boldsymbol{x}$ into a quantum state $\rho(\boldsymbol{x},\boldsymbol{\theta}) = V(\boldsymbol{x},\boldsymbol{\theta}) \rho_{0} V(\boldsymbol{x},\boldsymbol{\theta})^{\dagger}$, where $\rho_{0}$ is the initial state of the data re-uploading circuit.

\textbf{Data re-uploading models:} The machine learning models with data re-uploading circuits is data re-uploading models. For different machine learning tasks, the data re-uploading models will choose different observables to measure the encoded state $\rho(\boldsymbol{x},\boldsymbol{\theta})$ generated by the data re-uploading circuits. The output of the data re-uploading model is the expectation value of the observable $H$ with respect to the encoded state, i.e., $h(\boldsymbol{x},\boldsymbol{\theta}) = \operatorname{Tr}\left[H \rho(\boldsymbol{x},\boldsymbol{\theta})\right]$. The relationship between the data re-uploading circuit and the data re-uploading model is shown in Fig.~\ref{fig:models}.
\begin{figure}[htpb]
  \centering
  \includegraphics[width=0.8\textwidth]{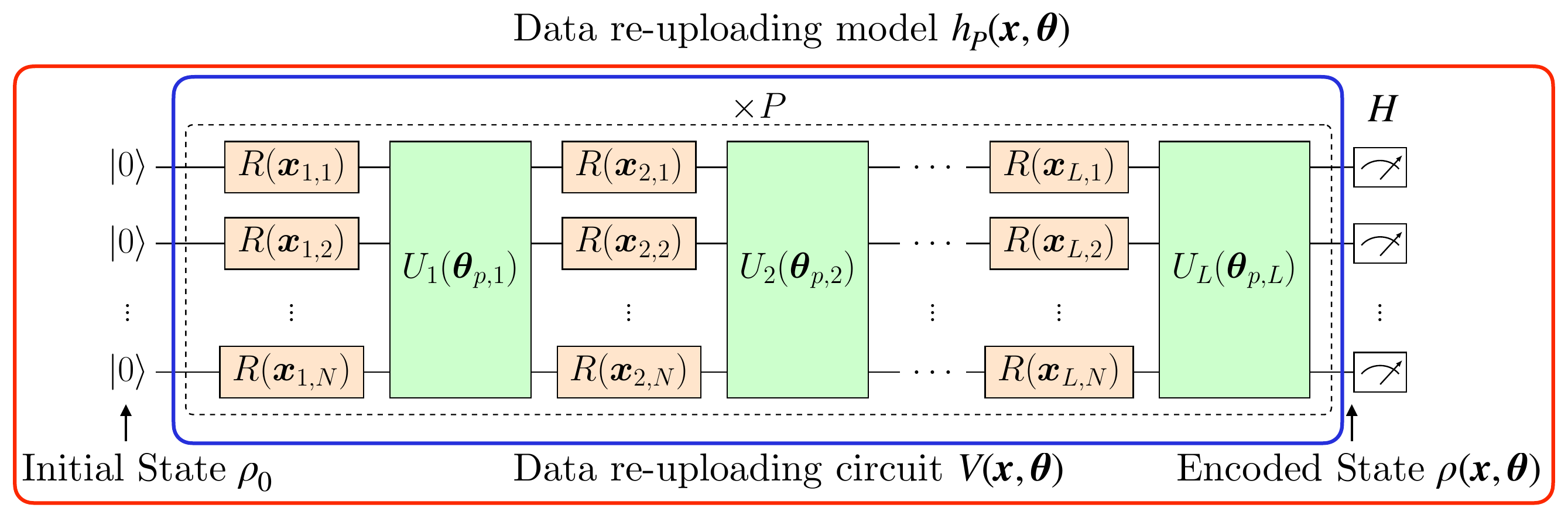}
  \caption{The data re-uploading circuit and data re-uploading model.}
  \label{fig:models}
\end{figure}

\textbf{Encoding Layer:} In a data re-uploading circuit, we define the number of encoding layers $L$ as the number of encoding gate layers used in one repetition cycle. Notably, in a data re-uploading circuit with $P$ repetitions, only $L$ distinct encoding layers are repeatedly used. The number of encoding layers in the circuit architecture corresponds to the number of data chunks, where each data chunk requires one encoding layer for uploading.

\textbf{Repetition:} In a data re-uploading circuit, the repetition count $P$ refers to the number of times the data is uploaded into the quantum circuit. While the same encoding layers are used across all $P$ repetitions, the parameterized gates inserted between encoding layers may have different structures and parameters.

\textbf{Depth:} In a data re-uploading circuit, if the structure of parameterized gates between any two encoding gates is identical (though parameters may differ), the circuit can be unfolded into $PL$ layers with identical structures but different parameters. The depth of such a circuit refers to the number of these layer structures, which equals $PL$. The relationship between encoding layer, repetition, and depth is shown in Fig.~\ref{fig:layers}.
\begin{figure}[htpb]
  \centering
  \includegraphics[width=0.8\textwidth]{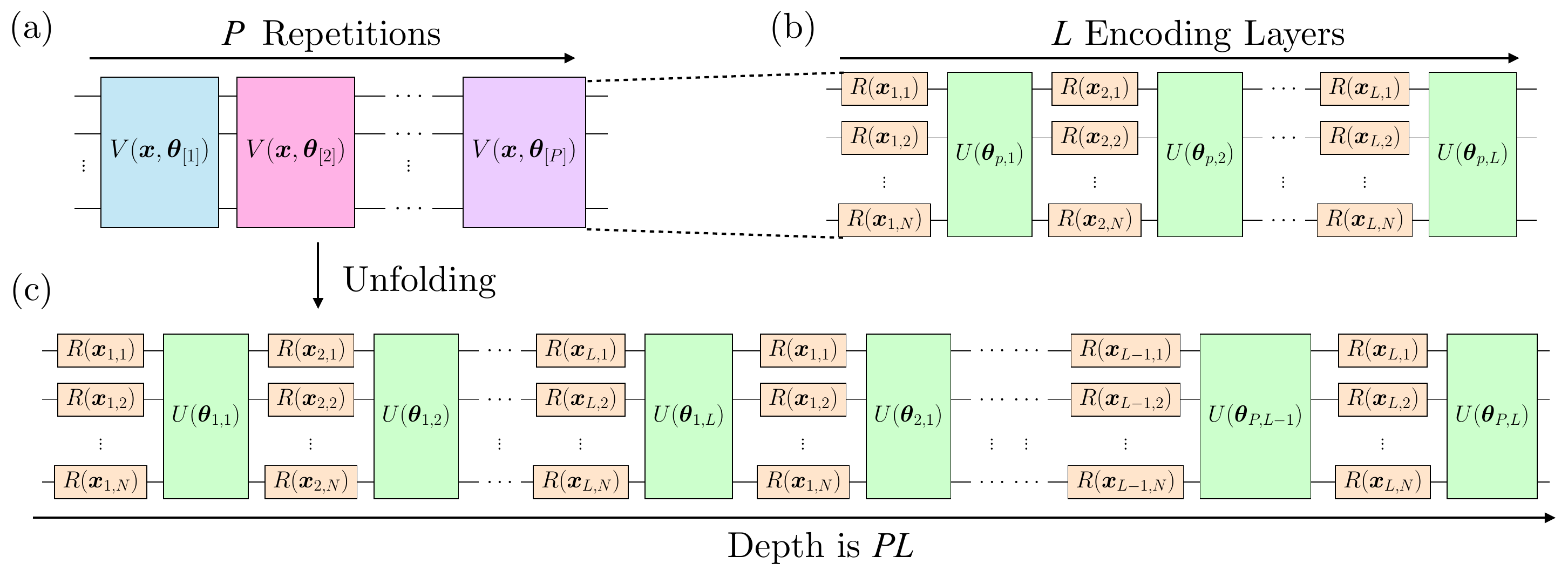}
  \caption{(a) Illustration of Repetition, (b) Illustration of Encoding Layer, (c) Illustration of Depth.}
  \label{fig:layers}
\end{figure}

\renewcommand{\theequation}{B.\arabic{equation}}
\setcounter{equation}{0}
\section{Measures of Quantum State Distinguishability}
\label{asec:measures-of-quantum-state-distinguishability}
We introduce several commonly used measures of distinguishability for quantum states: trace distance, fidelity, affinity, and Petz-R\'{e}nyi divergence. These measures play crucial roles in quantifying the distinguishability between quantum states. In particular, we establish the relationship between Petz-R\'{e}nyi-2 divergence and trace distance, which serves as the foundation for our theoretical analysis in this paper.

\begin{definition}[Quantum Trace Distance]
The quantum trace distance between quantum states $\rho_1$ and $\rho_2$ is defined as:
\begin{equation*}
\begin{aligned}
T(\rho_1,\rho_2) \coloneqq \frac{1}{2} \|\rho_{1} - \rho_{2}\|_{1},
\end{aligned}
\end{equation*}
where $\|\cdot\|_{1}$ denotes the Schatten 1-norm. The quantum trace distance satisfies all properties of a mathematical distance: non-negativity, symmetry, and the triangle inequality. Furthermore, $T(\rho_1,\rho_2) \in [0,1]$ with equality to 0 if and only if $\rho_1 = \rho_2$.
\end{definition}

The trace distance provides an upper bound on the distinguishability of quantum states through measurements with respect to any observable $H$ with eigenvalues bounded in $[-1,1]$. Specifically, the difference in expectation values between states $\rho_1$ and $\rho_2$ is bounded by the trace distance. This fundamental relationship follows directly from H\"{o}lder's inequality~\cite{watrous2018theory}:
$$
\begin{aligned}
\left| \operatorname{Tr}\left[H \rho_1\right] - \operatorname{Tr}\left[H \rho_2\right] \right| &\leqslant \|H\|_{\infty} \|\rho_{1} - \rho_{2}\|_{1} = 2 T(\rho_1,\rho_2),
\end{aligned}
$$
where $\|\cdot\|_{\infty}$ represents the Schatten $\infty$-norm (spectral norm). The spectral norm constraint $\|H\|_{\infty} \leqslant 1$ naturally follows from the eigenvalues of $H$ being bounded in $[-1,1]$.

\begin{definition}[Quantum Fidelity]
The quantum fidelity between quantum states $\rho_1$ and $\rho_2$ is defined as:
\begin{equation*}
\begin{aligned}
F(\rho_1,\rho_2) \coloneqq \operatorname{Tr}\left[\sqrt{\sqrt{\rho_1}\rho_2\sqrt{\rho_1}}\right].
\end{aligned}
\end{equation*}
The quantum fidelity satisfies $F(\rho_1,\rho_2) \in [0,1]$, with equality to 1 if and only if $\rho_1 = \rho_2$.
\end{definition}

The Fuchs-van de Graaf inequality~\cite{watrous2018theory} establishes that the trace distance and fidelity between quantum states $\rho_1$ and $\rho_2$ satisfy:
\begin{equation}
  \label{eq:Aineq1}
\begin{aligned}
1 - F(\rho_1,\rho_2) \leqslant T(\rho_1,\rho_2) \leqslant \sqrt{1 - F(\rho_1,\rho_2)^2}.
\end{aligned}
\end{equation}

\begin{definition}[Quantum Affinity]
The quantum affinity between quantum states $\rho_1$ and $\rho_2$ is defined as:
\begin{equation*}
\begin{aligned}
A(\rho_1,\rho_2) \coloneqq \operatorname{Tr}\left[\sqrt{\rho_1}\sqrt{\rho_2}\right].
\end{aligned}
\end{equation*}
The quantum affinity satisfies $A(\rho_1,\rho_2) \in [0,1]$, with equality to 1 if and only if $\rho_1 = \rho_2$.
\end{definition}

Given that $F(\rho_{1},\rho_{2}) = \|\sqrt{\rho_{1}}\sqrt{\rho_{2}}\|_1 = \operatorname{Tr}\left[|\sqrt{\rho_{1}}\sqrt{\rho_{2}}|\right]$, quantum affinity and fidelity are related through:
\begin{equation}
  \label{eq:Aineq2}
\begin{aligned}
  F(\rho_1,\rho_2)^2 =  \operatorname{Tr}\left[ \Big|\sqrt{\rho_1} \sqrt{\rho_2} \Big|\right]^2 \geqslant \Big| \operatorname{Tr}\left[\sqrt{\rho_1} \sqrt{\rho_2}\right] \Big| ^2 = A(\rho_1,\rho_2)^2.
\end{aligned}
\end{equation}

\begin{definition}[Quantum Petz-R\'{e}nyi-$\alpha$ Divergence]
The quantum Petz-R\'{e}nyi-$\alpha$ divergence between quantum states $\rho_1$ and $\rho_2$ is defined as:
\begin{equation}
\begin{aligned}
D_{\alpha}(\rho_1\|\rho_2) \coloneqq \frac{1}{\alpha -1} \log_2 \left( \operatorname{Tr}\left[\rho_1^{\alpha} \rho_2^{1-\alpha}\right] \right),
\end{aligned}
\end{equation}
where $\alpha \in (0,1) \cup (1,\infty)$, with the support condition $\operatorname{supp}(\rho_1) \subseteq \operatorname{supp}(\rho_2)$. The quantum Petz-R\'{e}nyi-$\alpha$ divergence satisfies $D_{\alpha}(\rho_1\|\rho_2) \geqslant 0$, with equality if and only if $\rho_1 = \rho_2$.
\end{definition}

\begin{definition}[Quantum Relative Entropy]
The quantum relative entropy between quantum states $\rho_1$ and $\rho_2$ is defined as:
\begin{equation*}
\begin{aligned}
S(\rho_1\|\rho_2) \coloneqq \operatorname{Tr}[\rho_1(\log_2\rho_1 - \log_2\rho_2)],
\end{aligned}
\end{equation*}
where the support condition $\operatorname{supp}(\rho_1) \subseteq \operatorname{supp}(\rho_2)$ must be satisfied. The quantum relative entropy satisfies $S(\rho_1\|\rho_2) \geqslant 0$, with equality if and only if $\rho_1 = \rho_2$.
\end{definition}

The quantum Petz-R\'{e}nyi-$\alpha$ divergence has several important special cases that are worth highlighting. 

\begin{itemize}
\item For $\alpha = 1$, the quantum Petz-R\'{e}nyi-$\alpha$ divergence reduces to the quantum relative entropy:
$$
\begin{aligned}
D_1(\rho_1\|\rho_2) := \lim_{\alpha \to 1} D_{\alpha}(\rho_1\|\rho_2) = S(\rho_1\|\rho_2).
\end{aligned}
$$

\item For $\alpha = \frac{1}{2}$, it relates to quantum affinity:
$$
\begin{aligned}
  D_{\frac{1}{2}}(\rho_1 || \rho_2) = -2 \log_2 \left( \operatorname{Tr}\left[ \sqrt{\rho_1} \sqrt{\rho_2}\right] \right) = -2 \log_2 A(\rho_1,\rho_2).
\end{aligned}
$$

\item For $\alpha = 2$, we obtain the divergence considered in this work:
\begin{equation}
  \label{aeq:D2}
  \begin{aligned}
    D_2(\rho_1\|\rho_2) =  \log_2 \left( \operatorname{Tr}\left[\rho_{1}^2 \rho_{2}^{-1}\right]\right).
    \end{aligned}
\end{equation}
\end{itemize}
As $\alpha \mapsto D_{\alpha}(\rho_{1} || \rho_{2})$ is monotonically increasing on $(0,+\infty)$~\cite{mosonyi2011quantumb}, we have:
\begin{equation}
  \label{eq:Aineq3}
\begin{aligned}
  -\log_2 A(\rho_1,\rho_2) = D_{\frac{1}{2}}(\rho_1 || \rho_2) \leqslant D_2(\rho_1\|\rho_2) .
\end{aligned}
\end{equation}

\begin{lemma}
  \label{alem:T-D}
  For quantum states $\rho_1$ and $\rho_2$, their trace distance and Petz-R\'{e}nyi-2 divergence satisfy:
  \begin{equation*}
  \begin{aligned}
    T(\rho_1,\rho_2) \leqslant \sqrt{1 - \frac{1}{2^{D_2(\rho_1 || \rho_2)}}}.
  \end{aligned}
  \end{equation*}
\end{lemma}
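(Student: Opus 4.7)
The plan is to chain together the three inequalities already recorded in Appendix B: the Fuchs--van de Graaf inequality \eqref{eq:Aineq1}, the fidelity--affinity inequality \eqref{eq:Aineq2}, and the monotonicity inequality \eqref{eq:Aineq3}. The whole argument reduces to bookkeeping, so I would keep the presentation compact.

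First, I would apply the right-hand side of \eqref{eq:Aineq1} to bound the trace distance by the fidelity:
\[
T(\rho_1,\rho_2) \leqslant \sqrt{1 - F(\rho_1,\rho_2)^2}.
\]
Then I would use \eqref{eq:Aineq2}, which gives $F(\rho_1,\rho_2)^2 \geqslant A(\rho_1,\rho_2)^2$, to replace the fidelity by the (easier-to-handle) affinity:
\[
T(\rho_1,\rho_2) \leqslant \sqrt{1 - A(\rho_1,\rho_2)^2}.
\]

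Next I would invoke the monotonicity of $\alpha \mapsto D_{\alpha}$ recorded in \eqref{eq:Aineq3}, together with the closed form $D_{1/2}(\rho_1\|\rho_2) = -2\log_2 A(\rho_1,\rho_2)$ from the $\alpha=1/2$ bullet. Monotonicity gives $-2\log_2 A(\rho_1,\rho_2) \leqslant D_2(\rho_1\|\rho_2)$, equivalently $A(\rho_1,\rho_2)^2 \geqslant 2^{-D_2(\rho_1\|\rho_2)}$. Plugging this lower bound into the previous display yields
\[
T(\rho_1,\rho_2) \leqslant \sqrt{1 - 2^{-D_2(\rho_1\|\rho_2)}} = \sqrt{1 - \tfrac{1}{2^{D_2(\rho_1\|\rho_2)}}},
\]
which is the claim.

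There is no genuine obstacle: every ingredient is stated (and justified by citation) earlier in the appendix, and the proof is just the composition of three off-the-shelf facts. The only point to watch is the factor of two relating $D_{1/2}$ and $\log_2 A$, so in writing up I would make the step $A^2 \geqslant 2^{-D_2}$ explicit rather than quoting \eqref{eq:Aineq3} verbatim, to avoid any ambiguity about whether the exponent carries a $1/2$. No special positivity or support assumptions need to be invoked beyond those already imposed on $D_{\alpha}$ in the definitions, so the lemma holds whenever both sides are well defined.
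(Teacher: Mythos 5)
Your proposal is correct and follows exactly the paper's own route: chain the right-hand Fuchs--van de Graaf bound \eqref{eq:Aineq1} with the fidelity--affinity inequality \eqref{eq:Aineq2}, then convert $A(\rho_1,\rho_2)^2 \geqslant 2^{-D_2(\rho_1\|\rho_2)}$ via the monotonicity relation \eqref{eq:Aineq3}. Your explicit unpacking of the factor of two in $D_{1/2} = -2\log_2 A$ is a minor presentational improvement over the paper's terser final step, but the argument is identical.
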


\begin{proof}
  Combining the trace distance-fidelity inequality in Eq.~\eqref{eq:Aineq1} with the fidelity-affinity relation in Eq.~\eqref{eq:Aineq2}, we obtain:
  $$
  \begin{aligned}
    T(\rho_1,\rho_2) \leqslant \sqrt{1 - F(\rho_{1},\rho_{2})^2} \leqslant \sqrt{1 - A(\rho_{1},\rho_{2})^2}.
  \end{aligned}
  $$    
  The result follows from the affinity-divergence inequality in Eq.~\eqref{eq:Aineq3}.
\end{proof}

\renewcommand{\theequation}{C.\arabic{equation}}
\setcounter{equation}{0}

\section{Proof of Dependence on the Number of Encoding Layers}
\label{asec:effect-of-encoding-layers}

\subsection{Pauli Basis Decomposition}

Let $Q_i$ and $P_i$ denote the Pauli basis elements before and after applying unitary gate $U$, respectively. The transformation of an $N$-qubit quantum state $\rho$ under $U$ can be expressed as:
$$
\begin{aligned}
U \rho U^{\dagger} &= \frac{1}{2^{N}} \left( \sum_{Q_i \in \{I,X,Y,Z\}^{\otimes N}}^{} U (\alpha_i  Q_i) U^{\dagger}   \right) \\
&= \frac{1}{2^{N}} \left( \sum_{P_i \in \{I,X,Y,Z\}^{\otimes N}}^{} \beta_i   P_i  \right).
\end{aligned}
$$
Throughout the analysis, we adopt the Pauli basis ordering: $\{I,Z,X,Y\}$ for single-qubit systems and its $N$-fold tensor product $\{I,Z,X,Y\}^{\otimes N}$ for $N$-qubit systems.

Denoting $\boldsymbol{\alpha}$ and $\boldsymbol{\beta}$ as the vectors of Pauli basis coefficients before and after the unitary transformation respectively, there exists a Pauli basis coefficient transfer matrix $H$ such that $\boldsymbol{\beta} = H \boldsymbol{\alpha}$.
The transfer matrix $H$ can be expressed as a collection of column vectors $H = \begin{bmatrix} 
   \boldsymbol{h}_1 & \cdots & \boldsymbol{h}_{4^N} 
\end{bmatrix}$, where each $\boldsymbol{h}_j = \begin{bmatrix} 
    h_{1j} & \cdots & h_{4^Nj} 
\end{bmatrix}^{\top} $ is a $4^N$-dimensional column vector representing the transformation of the $j$-th original Pauli basis element. The transformed Pauli coefficients $\boldsymbol{\beta}$ are obtained through a linear combination of these column vectors: $\boldsymbol{\beta} = \sum_{j=1}^{4^N}\alpha_j \boldsymbol{h}_j$. 

Each element $h_{ij}$ of the column vector $\boldsymbol{h}_j$ quantifies how the original Pauli basis element $Q_j$ contributes to the coefficient of the transformed basis element $P_i$ under the unitary operation $U$. This relationship is formally expressed as:
\begin{equation}
  \label{eq:UPU}
\begin{aligned}
  U (\alpha_j Q_j) U^{\dagger} = \sum_{i=1}^{4^N}  \alpha_j h_{ij} P_i.
\end{aligned}
\end{equation}

Next, we will first study the form of the transfer matrix for general quantum gates in Subsec.~\ref{asec:general-properties}. Then, in Subsec.~\ref{asec:encoding-gate-analysis} and Subsec.~\ref{asec:expected-state-analysis}, we will analyze the properties of the transfer matrix for encoding gates under data expectation. Finally, we prove that the quantum divergence between the expected state of data re-uploading circuit and the maximally mixed state decays exponentially with the number of encoding layers $L$ in Subsec.~\ref{asec:encoding-layers}.

\subsection{General Properties of Quantum Gates in Pauli Basis}
\label{asec:general-properties}

Firstly, we consider the general property of the Pauli basis coefficient transfer matrix $H$ determined by arbitrary quantum gates.

\begin{theorem}
  \label{alem:HH}
  For an $N$-qubit quantum gate $U$, the corresponding Pauli basis transfer matrix $H$ exhibits the following properties:

    (1) The transfer matrix $H$ is orthogonal.

    (2) The transfer matrix $H$ can be written as a block matrix in the following form:
    $$
    \begin{aligned}
    H = \begin{bmatrix} 
        1 & \\
        & \mathcal{H} 
    \end{bmatrix}, 
    \end{aligned}
    $$
    where the top-left element is 1, and matrix $\mathcal{H}$ is also orthogonal.
\end{theorem}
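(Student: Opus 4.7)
The plan is to exploit two facts: the orthogonality of the Pauli basis under the Hilbert--Schmidt inner product, $\operatorname{Tr}(P_i P_j) = 2^N \delta_{ij}$, and the invariance of this inner product under unitary conjugation. Multiplying both sides of Eq.~\eqref{eq:UPU} by $P_i$ and taking the trace extracts the explicit matrix entries $h_{ij} = \frac{1}{2^N}\operatorname{Tr}(P_i\, U Q_j U^{\dagger})$, which exhibits $H$ as the matrix (up to the common normalization $2^N$) of a linear map between two orthonormal operator bases of the $4^N$-dimensional operator space.

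For part (1), I would apply the Pauli completeness identity $\sum_i \operatorname{Tr}(A P_i)\operatorname{Tr}(P_i B) = 2^N \operatorname{Tr}(AB)$ with $A = U Q_j U^{\dagger}$ and $B = U Q_k U^{\dagger}$. The two pairs of unitaries cancel cyclically inside the trace, so the sum collapses to $\sum_i h_{ij} h_{ik} = \frac{1}{2^N}\operatorname{Tr}(Q_j Q_k) = \delta_{jk}$, i.e.\ $H^{\top} H = I$.

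For part (2), the key observation is $U I^{\otimes N} U^{\dagger} = I^{\otimes N}$. With the Pauli ordering $\{I,Z,X,Y\}^{\otimes N}$ placing the all-identity element first, this forces the first column of $H$ to be $e_1 = (1,0,\ldots,0)^{\top}$. Combined with part (1), orthogonality then forces the first row to be $e_1^{\top}$ as well, since every other column of $H$ is a unit vector orthogonal to $e_1$ and therefore has vanishing first coordinate. This produces the claimed block decomposition, and the lower-right block $\mathcal{H}$ inherits orthogonality as the restriction of an orthogonal map to the orthogonal complement of $e_1$.

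No step constitutes a real obstacle; the main point of care is to keep the normalization conventions of Eq.~\eqref{eq:pauli} (where the state carries a prefactor $1/2^N$) consistent with the non-normalized Pauli strings $P_i, Q_j$ appearing in Eq.~\eqref{eq:UPU}, so that the Hilbert--Schmidt bookkeeping yields Kronecker deltas without stray factors of $2^N$.
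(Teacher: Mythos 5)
Your proof is correct and follows essentially the same route as the paper's: both rest on the Hilbert--Schmidt orthogonality of the Pauli strings together with the invariance of the trace under unitary conjugation, and your Parseval-type completeness identity is just a repackaging of the paper's expand-and-take-traces computation of $\operatorname{Tr}\bigl[(UQ_jU^{\dagger})(UQ_kU^{\dagger})\bigr]$. The only cosmetic difference is in part (2), where you deduce the vanishing first row from column orthonormality, while the paper deduces it directly from the tracelessness of the non-identity Pauli strings; both are valid one-line arguments.
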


\begin{proof}
  (1) According to Eq.~\eqref{eq:UPU}, we have:
  \[
  \begin{aligned}
  U Q_j U^{\dagger} = \sum_{i=1}^{4^{N}} h_{ij} P_i.
  \end{aligned}
  \]
  Taking the trace of the square of both sides and utilizing the properties of Pauli matrices $\operatorname{Tr}\left[P_i^2\right] = 2^{N}$ and $\operatorname{Tr}\left[P_i P_l\right] = 0$ for $P_i \neq P_l$, we obtain $\operatorname{Tr}\left[(U Q_j U^{\dagger})^2\right] = 2^{N}$ for the left side, while the right side yields:
  \[
  \begin{aligned}
  2^{N} = \operatorname{Tr}\left[\left( \sum_{i=1}^{4^{N}} h_{ij} P_i \right)^2\right] &= \sum_{i=1}^{4^{N}} h_{ij}^2 \operatorname{Tr}\left[ (P_i)^2\right] + \sum_{P_i,P_l \in \{I,X,Y,Z\}^{\otimes N}, P_i \neq P_l} h_{ij} h_{lj} \operatorname{Tr}\left[P_i P_l\right] \\
  &= 2^{N} \sum_{i=1}^{4^{N}} h_{ij}^2 = 2^{N} \boldsymbol{h}_j^{\top} \boldsymbol{h}_j.
  \end{aligned}
  \]
  Therefore, \(\boldsymbol{h}_{i}^{\top} \boldsymbol{h}_i = 1\). 
  
  Next, considering the action of quantum gate \(U\) on another Pauli basis element \(Q_k\) of the original qubits:
  \[
  \begin{aligned}
  U Q_k U^{\dagger} = \sum_{l=1}^{4^{N}} h_{lk} P_l.
  \end{aligned}
  \]
  Then,
  \[
  \begin{aligned}
  (U Q_j U^{\dagger}) (U Q_k U^{\dagger}) = \sum_{i=1}^{4^{N}} h_{ij} h_{ik} \left( P_i \right)^2 + \sum_{P_i,P_l \in \{I,X,Y,Z\}^{\otimes N}, P_i \neq P_l} h_{ij} h_{lk} P_i P_l.
  \end{aligned}
  \]
  Taking the trace of both sides and using the properties of Pauli matrices $\operatorname{Tr}\left[P_i^2\right] = 2^{N}$ and $\operatorname{Tr}\left[P_i P_l\right] = 0$ for $P_i \neq P_l$, we obtain:
  $$
  \begin{aligned}
  0 = \operatorname{Tr}\left[(U Q_j U^{\dagger}) (U Q_k U^{\dagger})\right] &= \sum_{i=1}^{4^{N}} h_{ij} h_{ik} \operatorname{Tr}\left[(P_i)^2\right] + \sum_{P_i,P_l \in \{I,X,Y,Z\}^{\otimes N}, P_i \neq P_l} h_{ij} h_{lk} \operatorname{Tr}\left[P_i P_l\right] \\
  &= 2^{N} \sum_{i=1}^{4^{N}} h_{ij} h_{ik} = 2^{N} \boldsymbol{h}_j^{\top} \boldsymbol{h}_k.
  \end{aligned}
  $$

  This shows $\boldsymbol{h}_j^{\top} \boldsymbol{h}_k = \boldsymbol{h}_k^{\top} \boldsymbol{h}_j = 0$ for $j \neq k$. Thus,
  \[
  \begin{aligned}
  H^{\top} H = H H^{\top} = I,
  \end{aligned}
  \]
  proving $H$ is orthogonal.
  
  (2) Since the first element of the Pauli basis is \(I^{\otimes N}\), the vector \(\boldsymbol{h}_1\) in the first column of matrix \(H\) represents the influence of the original identity operator \(I^{\otimes N}\) on other Pauli basis elements after the action of gate \(U\). From the unitary property \(UU^{\dagger} = I\), we obtain:
  $$
  \begin{aligned}
  \boldsymbol{h}_1 = \begin{bmatrix} 1, 0, \cdots, 0 \end{bmatrix}^{\top}.
  \end{aligned}
  $$
  Now consider the first row elements $h_{1j}$ of matrix $H$. For the $j$-th Pauli basis element $Q_j$ (where $Q_j \neq I^{\otimes N}$) of the original quantum state, after the action of gate $U$:
  $$
  \begin{aligned}
  U Q_j U^{\dagger} = h_{1j} I^{\otimes N} + \sum_{i=2}^{4^{N}} h_{ij} P_i.
  \end{aligned}
  $$
  Taking the trace of both sides and using the property of Pauli matrices $\operatorname{Tr}\left[P_i\right] = \operatorname{Tr}\left[Q_i\right] = 0, P_i \neq I^{\otimes N}, Q_i \neq I^{\otimes N}$, we obtain $h_{1j} = 0$.
  
  Therefore, matrix $H$ can be written as:
  $$
  \begin{aligned}
  H = \begin{bmatrix} 1 & \\ & \mathcal{H} \end{bmatrix},
  \end{aligned}
  $$
  and since $H$ is orthogonal, $\mathcal{H}$ is also orthogonal.
  \end{proof}

For the first property, the orthogonality of transfer matrix $H$ reflects an important physical principle: unitary operations preserve the purity of quantum states. When we express a pure state $\rho$ in the Pauli basis with coefficients $\boldsymbol{\alpha}$, its purity condition gives:
$$
\begin{aligned}
\operatorname{Tr}[\rho^2] &= \frac{1}{4^{N}} \left( \sum_{i=1}^{4^N} \alpha_i^2 \operatorname{Tr}\left[  P_i^2 \right] + \sum_{P_i,P_j \in \{I,X,Y,Z\}^{\otimes N}, P_i \neq P_j}^{}  \alpha_i \alpha_j\operatorname{Tr}\left[  P_i P_j \right] \right)  \\
 &=  \frac{1}{2^{N}} \sum_{i=1}^{4^N} \alpha_i^2 = \frac{1}{2^{N}}\|\boldsymbol{\alpha}\|_{2}^2  = 1,
\end{aligned}
$$
where we have used the Pauli basis properties that $\operatorname{Tr}[P_i^2] = 2^{N}$ and $\operatorname{Tr}[P_iP_j] = 0$ for $P_i \neq P_j$. Under unitary evolution, the transformed coefficients $\boldsymbol{\beta} = H\boldsymbol{\alpha}$ must maintain this unit norm, i.e., $\|\boldsymbol{\beta}\|_2^2 = \boldsymbol{\alpha}^{\top} H^{\top} H \boldsymbol{\alpha} = \|\boldsymbol{\alpha}\|_2^2 $, which is guaranteed by $H$ being orthogonal.

The second property demonstrates that the transfer matrix preserves the fundamental quantum property of trace preservation. In the Pauli basis expansion, only the identity matrix $I^{\otimes N}$ contributes to the trace of a quantum state. Consequently, the coefficient associated with the identity term must remain invariant under unitary evolution. This fundamental constraint manifests in the block-diagonal structure of the transfer matrix $H$, where the identity component is decoupled from the other Pauli basis elements.

\subsection{Properties of Encoding Gate in Pauli Basis}
\label{asec:encoding-gate-analysis}
We now analyze the transfer matrix of the encoding gate $R(\boldsymbol{x})$ defined in Eq.~\eqref{eq:R}, where $R(\boldsymbol{x}) = R_z(x_{3}) R_y(x_{2}) R_z(x_{1})$. Since single-qubit states can be decomposed in the Pauli basis $\{I,X,Y,Z\}$, the transfer matrices of $R_z(x)$ and $R_y(x)$ are determined by their action on this basis. We begin by establishing the fundamental commutation relations of Pauli matrices:

\begin{lemma}
  \label{lem:PXYZ}
  Let $X,Y,Z$ be Pauli matrices, then:
  $$
  \begin{aligned}
  [Z,X] = 2 \mathrm{i} Y, [Z,Y] = - 2 \mathrm{i} X; \\
  [Y,Z] = 2 \mathrm{i}X, [Y,X] = -2 \mathrm{i} Z,
  \end{aligned}
  $$
  where $[\cdot,\cdot]$ denotes the Lie bracket.
\end{lemma}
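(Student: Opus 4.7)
The plan is to verify each of the four commutation identities by direct $2\times 2$ matrix computation using the explicit forms of $X$, $Y$, $Z$ given in the Preliminaries. Since the commutator $[A,B] = AB - BA$ of two $2\times 2$ matrices is itself a $2\times 2$ matrix with at most four entries, each identity reduces to comparing the entries of a concretely computable matrix against the claimed right-hand side.

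First I would handle $[Z,X]$: compute $ZX$ and $XZ$ separately from the Pauli matrix definitions, subtract to obtain an off-diagonal matrix with entries $\pm 2$, and recognize this as $2\mathrm{i}Y$. Then I would repeat this mechanical procedure for $[Z,Y]$, $[Y,Z]$, and $[Y,X]$, in each case writing out the two $2\times 2$ products, subtracting, and matching the outcome to $\pm 2\mathrm{i}$ times the appropriate Pauli matrix. Anti-symmetry of the commutator, $[A,B] = -[B,A]$, lets the identities for $[Y,Z]$ and $[Y,X]$ be inferred almost for free from $[Z,Y]$ and the not-yet-computed $[X,Y]$, halving the work.

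A more conceptual alternative is to invoke the single standard relation $[\sigma_a,\sigma_b] = 2\mathrm{i}\,\epsilon_{abc}\sigma_c$ with the identification $(\sigma_1,\sigma_2,\sigma_3)=(X,Y,Z)$; the four equalities in the lemma then correspond to the four ordered pairs $(Z,X)$, $(Z,Y)$, $(Y,Z)$, $(Y,X)$, each yielding the appropriate Pauli matrix with a Levi-Civita sign. I would likely present the direct computation, since it is self-contained and uses only the matrix definitions already in the paper.

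There is no genuine obstacle here: the calculation is routine and two lines per identity. The only point that requires care is sign bookkeeping, since two of the four identities ($[Z,Y]$ and $[Y,X]$) carry a minus sign in front of $2\mathrm{i}$, and swapping the order of the product is the typical source of error. I would double-check signs by verifying the anti-symmetry $[A,B]+[B,A]=0$ across the two pairs $\{[Z,X],[X,Z]\}$ and $\{[Y,Z],[Z,Y]\}$ as an internal consistency check before concluding.
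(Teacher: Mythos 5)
Your proposal is correct: the paper states this lemma without proof as a standard fact about Pauli matrices, and your direct $2\times 2$ computation (or equivalently the relation $[\sigma_a,\sigma_b]=2\mathrm{i}\,\epsilon_{abc}\sigma_c$) is exactly the routine verification that is being omitted. The sign bookkeeping you flag is the only place to go wrong, and your anti-symmetry cross-check handles it.
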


The following lemma establishes the fundamental transformation rule of $R_P(x)$ gate acting on Pauli basis states.

\begin{lemma}
  \label{lem:PQP}
  For single-qubit Pauli rotation gate $R_P(x)$ acting on single-qubit Pauli basis $Q$, where $P \in \{X,Y,Z\}$, $Q \in \{I,Z,X,Y\}$, we have:
  $$
  \begin{aligned}
  R_P(x) Q R_P(x)^{\dagger} =  \begin{cases}
    Q & Q = I \text{ or } P = Q \\
    \cos(x) Q - \sin(x) H & Q \neq I \text{ and } P \neq Q
  \end{cases},
  \end{aligned}
  $$
  where $H = \mathrm{i}/2 [P,Q]$, and $[\cdot,\cdot]$ denotes the Lie bracket.
\end{lemma}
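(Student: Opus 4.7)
The plan is to treat the two cases of the lemma separately, dispatching the trivial ones first. If $Q=I$ then $R_P(x) I R_P(x)^{\dagger}=R_P(x) R_P(x)^{\dagger}=I$, and if $P=Q$ then $R_P(x)=e^{-\mathrm{i}xP/2}$ is a function of $P$ and therefore commutes with $Q=P$, giving $R_P(x)QR_P(x)^{\dagger}=Q$. Both cases will take only a couple of lines.

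For the nontrivial case $Q\neq I$ and $P\neq Q$, I plan to use the closed form $R_P(x)=\cos(x/2)\,I-\mathrm{i}\sin(x/2)\,P$, which follows from $P^2=I$ and the Taylor series of $e^{-\mathrm{i}xP/2}$. Conjugating $Q$ by this expression and expanding yields four terms of the form $Q$, $PQ$, $QP$, and $PQP$, each multiplied by a product of $\cos(x/2)$ and $\sin(x/2)$. The key algebraic input, which I would invoke next, is that distinct single-qubit Pauli matrices anticommute ($PQ=-QP$) and square to the identity ($P^2=I$), so $PQP=-Q$ and $QP-PQ=2QP=-2PQ$.

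Substituting these simplifications collapses the expansion to a combination of $Q$ and $PQ$ only. The coefficient of $Q$ becomes $\cos^2(x/2)-\sin^2(x/2)=\cos(x)$ via the double-angle identity, while the coefficient of $PQ$ becomes $-2\mathrm{i}\sin(x/2)\cos(x/2)=-\mathrm{i}\sin(x)$. Finally, using anticommutation once more, $H=\tfrac{\mathrm{i}}{2}[P,Q]=\tfrac{\mathrm{i}}{2}(PQ-QP)=\mathrm{i}PQ$, so $-\mathrm{i}\sin(x)\,PQ=-\sin(x)\,H$, which matches the claimed formula $\cos(x)Q-\sin(x)H$.

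There is no real obstacle here beyond careful bookkeeping of signs and of the half-angle versus full-angle identities; Lemma~\ref{lem:PXYZ} is not strictly needed to run the argument, but it is a convenient sanity check that $H$ is the remaining Pauli matrix (up to sign) in each specific instance $(P,Q)$. The only subtlety worth flagging explicitly in the write-up is that the anticommutation relation $PQ=-QP$ requires both $P,Q\in\{X,Y,Z\}$ and $P\neq Q$, which is exactly the hypothesis of the nontrivial case.
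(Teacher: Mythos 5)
Your proposal is correct and follows essentially the same route as the paper: expand $R_P(x)=\cos(x/2)I-\mathrm{i}\sin(x/2)P$, conjugate $Q$, and simplify using $PQP=\pm Q$ and the double-angle identities. The only cosmetic differences are that you dispatch the trivial cases by direct commutation rather than through the same expansion, and you use anticommutation to rewrite $H=\mathrm{i}PQ$, neither of which changes the substance of the argument.
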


\begin{proof}
  Since 
  $$
  \begin{aligned}
  R_P(x) = \cos \left( \frac{x}{2} \right) I - \mathrm{i} \sin\left( \frac{x}{2} \right) P, 
  \end{aligned}
  $$
  we have
  $$
  \begin{aligned}
  &R_P(x) Q R_P^{\dagger}(x) = \left[ \cos \left( \frac{x}{2} \right) I - \mathrm{i} \sin\left( \frac{x}{2} \right) P  \right] Q \left[ \cos \left( \frac{x}{2} \right) I + \mathrm{i} \sin\left( \frac{x}{2} \right) P  \right] \\
  =&  \cos ^2\left( \frac{x}{2} \right)  Q  +  \sin^2\left( \frac{x}{2} \right)  P Q P - \mathrm{i} \sin\left( \frac{x}{2} \right) \cos\left( \frac{x}{2} \right) PQ + \mathrm{i} \cos\left( \frac{x}{2} \right)  \sin\left( \frac{x}{2} \right) QP .   \\
  \end{aligned}
  $$
  By the properties of Pauli matrices, when $Q = I$ or $P = Q$, we have $PQP = Q$ and $PQ = QP$, therefore
  $$
  \begin{aligned}
    R_P(x) Q R_P(x)^{\dagger} = \cos ^2\left( \frac{x}{2} \right) Q+ \sin^2 \left( \frac{x}{2} \right)  Q  = Q
  \end{aligned}
  $$
  When $P \neq Q$ and $Q \neq I$, we have $PQP = -Q$, thus: 
  $$
  \begin{aligned}
    R_P(x) Q R_P(x)^{\dagger}  &=  \left[ \cos ^2\left( \frac{x}{2} \right) - \sin^2\left( \frac{x}{2}  \right)   \right] Q -  \left[ \mathrm{i} \sin \left( \frac{x}{2} \right) \cos\left( \frac{x}{2} \right)    \right] (PQ-QP) \\
    &=  \cos(x) Q -\sin(x) H,
  \end{aligned}
  $$
  where $H = \mathrm{i}/2 (PQ - QP)$.
\end{proof}
Based on the fundamental commutation relations of Pauli matrices in Lemma~\ref{lem:PXYZ} and the transformation rules for single-qubit Pauli rotation gates in Lemma~\ref{lem:PQP}, we can derive the transformation rules for $R_z(x)$ and $R_y(x)$ gates acting on Pauli basis states.

By Lemma~\ref{lem:PQP}, for Pauli basis $Q = I$ or $Z$, we have $R_z(x)Q R_z(x)^{\dagger} = Q$. When $Q = X$ or $Q=Y$, let $X^{\text{old}},Y^{\text{old}}$ denote the Pauli basis of the original quantum state, and $X^{\text{new}},Y^{\text{new}}$ denote the Pauli basis after the quantum gate $R_z(x)$ is applied. Combined with Lemma~\ref{lem:PXYZ}, we have:
  $$
  \begin{aligned}
  R_z(x) X^{\text{old}} R_z(x)^{\dagger} &= \cos(x) X^{\text{new}}+ \sin(x) Y^{\text{new}}, \\
  R_z(x) Y^{\text{old}} R_z(x)^{\dagger} &= - \sin(x) X^{\text{new}} + \cos(x) Y^{\text{new}} .
  \end{aligned}
  $$
  Therefore, the transfer matrix of quantum gate $R_z(x)$ is:
\begin{equation}
  \label{eq:T_z}
\begin{aligned}
  \begin{aligned}
    T_z(x) =     \begin{bmatrix} 
      1 & 0 & 0 & 0 \\
      0 & 1 & 0 & 0 \\
      0 & 0 &  \cos(x) & - \sin(x) \\
      0 & 0 &  \sin(x) &  \cos(x) 
  \end{bmatrix} .
  \end{aligned}
\end{aligned}
\end{equation}

Similarly, by Lemma~\ref{lem:PQP}, for Pauli basis $Q = I$ or $Y$, we have $R_y(x)Q R_y^{\dagger}(x) = Q$. When $Q = Z$ or $Q=X$, let $Z^{\text{old}},X^{\text{old}}$ denote the Pauli basis of the original quantum state, and $Z^{\text{new}},X^{\text{new}}$ denote the Pauli basis after the quantum gate $R_y(x)$ is applied. Combined with Lemma~\ref{lem:PXYZ}, we have:
  $$
  \begin{aligned}
    R_y(x) Z^{\text{old}} R_y(x)^{\dagger} &= \cos(x) Z^{\text{new}}+ \sin(x) X^{\text{new}}, \\
  R_y(x) X^{\text{old}} R_y(x)^{\dagger} &=  - \sin(x) Z^{\text{new}} + \cos(x) X^{\text{new}}.
  \end{aligned}
  $$

Therefore, the transfer matrix of quantum gate $R_y(x)$ is:
\begin{equation}
  \label{eq:T_y}
\begin{aligned}
  \begin{aligned} T_y(x) = 
    \begin{bmatrix} 
        1 & 0 & 0 & 0 \\
        0 & \cos(x) &  -\sin(x) & 0 \\
        0 &   \sin(x)  &  \cos(x)  & 0 \\
        0 & 0 & 0  & 1
    \end{bmatrix} .
    \end{aligned}
\end{aligned}
\end{equation}

\subsection{Properties of Expectation of Encoding Gate in Pauli Basis}
\label{asec:expected-state-analysis}

The transfer matrix of $R_z(x)$ and $R_y(x)$ also satisfies the properties stated in Lemma~\ref{alem:HH}. However, under the expectation over data distribution, the transfer matrix of $R_z(x)$ and $R_y(x)$ are not orthogonal and cannot preserve the purity of quantum states.

Now, we introduce the following lemma to calculate the expectation of cosine and sine functions over Gaussian distribution.

\begin{lemma}
  \label{lem:Ecos}
  Let random variable $x \sim \mathcal{N}(\mu,\sigma^2)$, then
  $$
  \begin{aligned}
    \underset{x}{\mathbb{E}}[\cos (x)]=\mathrm{e}^{-\frac{\sigma^{2}}{2}} \cos (\mu) ; \quad \underset{x}{\mathbb{E}}[\sin (x)]=\mathrm{e}^{-\frac{\sigma^{2}}{2}} \sin (\mu) .
  \end{aligned}
  $$
\end{lemma}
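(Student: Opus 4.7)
The plan is to prove this by computing the characteristic function of the Gaussian distribution at $t=1$ and taking real and imaginary parts. Concretely, I would show that $\mathbb{E}_x[e^{ix}] = e^{i\mu - \sigma^2/2}$ for $x \sim \mathcal{N}(\mu, \sigma^2)$, and then extract the two identities by noting that $e^{ix} = \cos x + i \sin x$ and $e^{i\mu - \sigma^2/2} = e^{-\sigma^2/2}(\cos\mu + i \sin\mu)$. Matching real and imaginary parts of $\mathbb{E}_x[\cos x] + i \mathbb{E}_x[\sin x] = e^{-\sigma^2/2}\cos\mu + i e^{-\sigma^2/2}\sin\mu$ gives both claims simultaneously.

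To establish $\mathbb{E}_x[e^{ix}] = e^{i\mu - \sigma^2/2}$, I would use the substitution $x = \mu + \sigma z$ with $z \sim \mathcal{N}(0,1)$, writing
\begin{equation*}
\mathbb{E}_x[e^{ix}] = e^{i\mu}\, \mathbb{E}_z[e^{i\sigma z}] = \frac{e^{i\mu}}{\sqrt{2\pi}} \int_{-\infty}^{+\infty} e^{-z^2/2 + i\sigma z}\, dz.
\end{equation*}
Completing the square in the exponent, $-z^2/2 + i\sigma z = -\tfrac{1}{2}(z - i\sigma)^2 - \sigma^2/2$, reduces the integral to a contour-shifted Gaussian integral that evaluates to $\sqrt{2\pi}$ (justified by Cauchy's theorem applied to a tall rectangular contour in the complex plane, since $e^{-z^2/2}$ decays rapidly). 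This yields the desired closed form.

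Alternatively, and perhaps more transparently within the real domain, I could avoid complex contour arguments by proving the two identities directly. Using $\cos(\mu + \sigma z) = \cos\mu\cos(\sigma z) - \sin\mu\sin(\sigma z)$ and the symmetry of the standard normal density, the odd term integrates to zero, leaving $\mathbb{E}_x[\cos x] = \cos\mu \cdot \mathbb{E}_z[\cos(\sigma z)]$. Then $\mathbb{E}_z[\cos(\sigma z)]$ is computed either by expanding $\cos$ as a power series and using the moments $\mathbb{E}[z^{2k}] = (2k-1)!!$, which sum to $e^{-\sigma^2/2}$, or by completing the square in the real Gaussian integral $\int \cos(\sigma z)\, e^{-z^2/2}\, dz/\sqrt{2\pi}$. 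The sine identity follows analogously from $\sin(\mu + \sigma z) = \sin\mu\cos(\sigma z) + \cos\mu\sin(\sigma z)$.

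The only non-routine step is justifying the Gaussian integral with a complex-shifted argument (or, in the real-variable route, interchanging the infinite sum and the integral when expanding $\cos$); both are standard and handled by dominated convergence or Fubini's theorem, so I expect no genuine obstacle. The result itself is simply the statement that the characteristic function of a Gaussian evaluated at $1$ equals $e^{i\mu - \sigma^2/2}$, a textbook identity that the subsequent proofs in the appendix will use to track how each encoding gate damps the non-identity Pauli coefficients by the factor $e^{-\sigma^2/2}$.
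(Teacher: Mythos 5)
Your proposal is correct. The paper states Lemma~\ref{lem:Ecos} without proof, treating it as the standard fact that the characteristic function of $\mathcal{N}(\mu,\sigma^2)$ evaluated at $t=1$ equals $e^{i\mu-\sigma^2/2}$; your derivation (substitution $x=\mu+\sigma z$, completing the square, contour shift, then matching real and imaginary parts) is exactly the canonical argument being implicitly invoked, and your real-variable alternative is an equally valid elementary route.
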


Based on Lemma~\ref{lem:Ecos} and transfer matrices of $R_z(x)$ and $R_y(x)$ in Eq.~\eqref{eq:T_z} and Eq.~\eqref{eq:T_y}, we can get the expected transfer matrices of $R_z(x)$ and $R_y(x)$ under the data distribution.

\begin{lemma}
    \label{lem:ET_zy}
    Assume the data $x \sim \mathcal{N}(\mu,\sigma^2)$, then the expected transfer matrices of $R_z(x)$ and $R_y(x)$ under the data distribution are:
    $$
    \begin{aligned} \underset{x}{\mathbb{E}}[T_z(x)] = 
    \begin{bmatrix} 
        1 & 0 & 0 & 0 \\
        0 & 1 & 0 & 0 \\
        0 & 0 & e^{-\frac{\sigma^2}{2}} \cos(\mu) & -e^{-\frac{\sigma^2}{2}}  \sin(\mu) \\
        0 & 0 & e^{-\frac{\sigma^2}{2}}  \sin(\mu) & e^{-\frac{\sigma^2}{2}}  \cos(\mu) 
    \end{bmatrix} , \\
    \underset{x}{\mathbb{E}}[T_y(x)] = 
    \begin{bmatrix} 
        1 & 0 & 0 & 0 \\
        0 & e^{-\frac{\sigma^2}{2}}\cos(\mu) & -e^{-\frac{\sigma^2}{2}}  \sin(\mu) & 0 \\
        0 &  e^{-\frac{\sigma^2}{2}}  \sin(\mu)  &e^{-\frac{\sigma^2}{2}}  \cos(\mu)  & 0 \\
        0 & 0 & 0  & 1
    \end{bmatrix} .
    \end{aligned}
    $$
\end{lemma}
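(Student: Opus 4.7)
The plan is to obtain $\mathbb{E}_{x}[T_z(x)]$ and $\mathbb{E}_{x}[T_y(x)]$ by applying expectation entrywise to the explicit forms of $T_z(x)$ and $T_y(x)$ established in Eq.~\eqref{eq:T_z} and Eq.~\eqref{eq:T_y}. Since expectation is a linear operator on matrices, computing $\mathbb{E}_x[T(x)]$ reduces to computing $\mathbb{E}_x$ of each entry individually.

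First, I would observe that all entries of $T_z(x)$ and $T_y(x)$ are either constants ($0$ or $1$) or one of $\cos(x)$, $\sin(x)$, $-\sin(x)$. Constant entries are unaffected by expectation and therefore reproduce the block structure: the identity components in the $I$-row/column and in the $Z$-row/column for $T_z$ (respectively the $Y$-row/column for $T_y$) remain exactly as in the deterministic matrices. This immediately accounts for the outer $1$'s and the $0$'s in the claimed matrices.

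Next, for the nontrivial $2\times 2$ block of rotations in each transfer matrix, I would substitute the Gaussian-expectation identities from Lemma~\ref{lem:Ecos}, namely $\mathbb{E}_x[\cos(x)] = e^{-\sigma^2/2}\cos(\mu)$ and $\mathbb{E}_x[\sin(x)] = e^{-\sigma^2/2}\sin(\mu)$, into each trigonometric entry. For $T_z(x)$ this replaces the lower-right $2\times 2$ rotation block with the claimed attenuated rotation matrix, and for $T_y(x)$ the same substitution replaces the middle $2\times 2$ block. Combining the constant and trigonometric entries reassembles exactly the two matrices in the statement.

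There is no real analytical obstacle: the argument is a direct linearity-of-expectation computation, with the only nontrivial ingredient being Lemma~\ref{lem:Ecos}, which is already established. The main care required is purely bookkeeping, keeping the Pauli-basis ordering $\{I,Z,X,Y\}$ consistent so that the attenuation factor $e^{-\sigma^2/2}$ appears in the correct $2\times 2$ block for each gate. Once this is verified, the claimed expressions follow immediately.
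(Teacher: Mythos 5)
Your proposal is correct and matches the paper's (implicit, one-line) argument exactly: the paper likewise obtains $\mathbb{E}_x[T_z(x)]$ and $\mathbb{E}_x[T_y(x)]$ by applying linearity of expectation entrywise to Eq.~\eqref{eq:T_z} and Eq.~\eqref{eq:T_y} and substituting the Gaussian identities of Lemma~\ref{lem:Ecos}. Your attention to the $\{I,Z,X,Y\}$ ordering, which determines where the attenuated $2\times 2$ rotation block sits for each gate, is the only nontrivial bookkeeping, and you handle it correctly.
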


Therefore, we can get the expected transfer matrix of $R(\boldsymbol{x})$ under the data distribution.

\begin{lemma}(From the work~\cite{li2022concentration})
  \label{alem:T}
  Assume the data $\boldsymbol{x} = [x_1,x_2,x_3] \ $  follows independent Gaussian distributions, i.e., $x_{i} \sim \mathcal{N}(\mu_i,\sigma_i^2), i \in [3]$. The expected transfer matrix of encoding gate $R(\boldsymbol{x})$ under the data distribution is:
  \begin{equation*}
  \begin{aligned}
    \underset{\boldsymbol{x}}{\mathbb{E}}[T(\boldsymbol{x})] = \begin{bmatrix} 
        1 & 0 & 0 & 0 \\
        0 & t_{zz} & t_{xz} & t_{yz} \\
        0 & t_{zx} & t_{xx} & t_{yx} \\
        0 & t_{zy} & t_{xy} & t_{yy}
    \end{bmatrix} ,
  \end{aligned}
  \end{equation*}
  where
  $$
  \begin{aligned}
    \begin{array}{l}
      t_{z z}=A_{2} \cos \left(\mu_{2}\right) , \\
      t_{z x}=A_{2} \sin \left(\mu_{2}\right) A_{3} \cos \left(\mu_{3}\right) , \\
      t_{z y}=A_{2} \sin \left(\mu_{2}\right) A_{3} \sin \left(\mu_{3}\right) , \\
      t_{x z}=-A_{2} \sin \left(\mu_{2}\right) A_{1} \cos \left(\mu_{1}\right) , \\
      t_{x x}=A_{2} \cos \left(\mu_{2}\right) A_{1} \cos \left(\mu_{1}\right) A_{3} \cos \left(\mu_{3}\right)-A_{1} \sin \left(\mu_{1}\right) A_{3} \sin \left(\mu_{3}\right) , \\
      t_{x y}=A_{2} \cos \left(\mu_{2}\right) A_{1} \cos \left(\mu_{1}\right) A_{3} \sin \left(\mu_{3}\right)+A_{1} \sin \left(\mu_{1}\right) A_{3} \cos \left(\mu_{3}\right) , \\
      t_{y z}=A_{2} \sin \left(\mu_{2}\right) A_{1} \sin \left(\mu_{1}\right) , \\
      t_{y x}=-A_{2} \cos \left(\mu_{2}\right) A_{1} \sin \left(\mu_{1}\right) A_{3} \cos \left(\mu_{3}\right)-A_{1} \cos \left(\mu_{1}\right) A_{3} \sin \left(\mu_{3}\right) \\
      t_{y y}=-A_{2} \cos \left(\mu_{2}\right) A_{1} \sin \left(\mu_{1}\right) A_{3} \sin \left(\mu_{3}\right)+A_{1} \cos \left(\mu_{1}\right) A_{3} \cos \left(\mu_{3}\right) 
      \end{array},
  \end{aligned}
  $$
  where $A_i = e^{- \frac{\sigma_i^2}{2}}$.
\end{lemma}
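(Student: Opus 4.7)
The plan is to use the fact that the transfer matrix is multiplicative under composition of unitaries, combined with independence of the Gaussian components $x_1,x_2,x_3$ to factor the expectation as a product of expectations.

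First I would observe that since $R(\boldsymbol{x}) = R_z(x_3) R_y(x_2) R_z(x_1)$ and the action on the Pauli basis of a composition of unitaries is the composition of their actions, the transfer matrix factorizes as $T(\boldsymbol{x}) = T_z(x_3)\, T_y(x_2)\, T_z(x_1)$, with the explicit $4\times4$ forms given by Eq.~\eqref{eq:T_z} and Eq.~\eqref{eq:T_y}. Because $x_1,x_2,x_3$ are independent, the expectation of the product equals the product of expectations, so
\[
\underset{\boldsymbol{x}}{\mathbb{E}}[T(\boldsymbol{x})] = \underset{x_3}{\mathbb{E}}[T_z(x_3)]\,\underset{x_2}{\mathbb{E}}[T_y(x_2)]\,\underset{x_1}{\mathbb{E}}[T_z(x_1)].
\]
Each expected factor is already given in closed form by Lemma~\ref{lem:ET_zy}, so only multiplication remains.

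Next I would carry out the matrix multiplication. All three matrices share the block structure $\begin{bmatrix}1 & 0 \\ 0 & \mathcal{H}\end{bmatrix}$ (consistent with Theorem~\ref{alem:HH}), so the product also has this structure: the top-left entry is $1$ and the first row and column are otherwise zero. The nontrivial content is the $3\times3$ lower block. Writing $c_i = \cos(\mu_i)$, $s_i = \sin(\mu_i)$, and $A_i = e^{-\sigma_i^2/2}$, I would expand
\[
\begin{bmatrix} 1 & 0 & 0 \\ 0 & A_3 c_3 & -A_3 s_3 \\ 0 & A_3 s_3 & A_3 c_3 \end{bmatrix}
\begin{bmatrix} A_2 c_2 & -A_2 s_2 & 0 \\ A_2 s_2 & A_2 c_2 & 0 \\ 0 & 0 & 1 \end{bmatrix}
\begin{bmatrix} 1 & 0 & 0 \\ 0 & A_1 c_1 & -A_1 s_1 \\ 0 & A_1 s_1 & A_1 c_1 \end{bmatrix}
\]
entry by entry. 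The leftmost matrix acts on the $Z$-row ($i=z$), the middle matrix mixes the $Z$- and $X$-columns via $R_y$-rotation, and the rightmost matrix further mixes the $X$- and $Y$-columns via $R_z$-rotation. Collecting terms and matching them to the labels $t_{ij}$ (where the first subscript indexes the original Pauli basis element and the second indexes the resulting one) produces exactly the nine expressions listed in the statement.

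The computation is entirely mechanical; the main obstacle is purely bookkeeping, namely keeping the index conventions consistent with those adopted in Subsec.~\ref{asec:encoding-gate-analysis} (so that $t_{PQ}$ denotes the coefficient of $Q$ produced by input $P$, not the reverse) and not dropping any of the damping factors $A_i$. In particular, care is needed because the middle matrix $\underset{x_2}{\mathbb{E}}[T_y(x_2)]$ acts on both the $Z$- and $X$-axes, so $A_2$ multiplies every nontrivial entry of the product, while $A_1$ and $A_3$ appear only in the terms where the corresponding rotation is nontrivial. Once the multiplication is done and the indices are matched to Li's convention, the stated formulas follow directly.
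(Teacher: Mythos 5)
Your proposal follows exactly the paper's own proof: factor $T(\boldsymbol{x})=T_z(x_3)\,T_y(x_2)\,T_z(x_1)$, use independence of $x_1,x_2,x_3$ to split the expectation into $\mathbb{E}[T_z(x_3)]\,\mathbb{E}[T_y(x_2)]\,\mathbb{E}[T_z(x_1)]$ via Lemma~\ref{lem:ET_zy}, and multiply the three block matrices, so the approach is the same and correct. One minor caveat: your side remark that ``$A_2$ multiplies every nontrivial entry of the product'' is false --- terms routed through the $Y$-component, which $T_y$ leaves fixed, carry no $A_2$ (e.g.\ the $-A_1\sin(\mu_1)A_3\sin(\mu_3)$ term of $t_{xx}$) --- but this is only commentary and does not affect the validity of the derivation.
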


\begin{proof}
  Since
  $$
  \begin{aligned}
 \underset{\boldsymbol{x}}{\mathbb{E}}[T(\boldsymbol{x})]  &= \underset{\boldsymbol{x}}{\mathbb{E}} \left[ T_z(x_3) T_y (x_2) T_z(x_1)  \right] =  \underset{x_3}{\mathbb{E}}[T_z(x_3)] \underset{x_2}{\mathbb{E}}[T_y(x_2)] \underset{x_1}{\mathbb{E}}[T_z(x_1)].\\ 
  \end{aligned}
  $$
  According to Lemma~\ref{lem:ET_zy}, we have:
  $$
  \tiny{
  \begin{aligned}
  \underset{\boldsymbol{x}}{\mathbb{E}}[T(\boldsymbol{x})] =  \begin{bmatrix} 
    1 & 0 & 0 & 0 \\
    0 & 1 & 0 & 0 \\
    0 & 0 & A_3 \cos(\mu_3) & -A_3  \sin(\mu_3) \\
    0 & 0 &A_3 \sin(\mu_3) & A_3  \cos(\mu_3) 
\end{bmatrix}  \begin{bmatrix} 
  1 & 0 & 0 & 0 \\
  0 & A_2\cos(\mu_2) & -A_2 \sin(\mu_2) & 0 \\
  0 & A_2 \sin(\mu_2)  &A_2  \cos(\mu_2)  & 0 \\
  0 & 0 & 0  & 1
\end{bmatrix} \begin{bmatrix} 
  1 & 0 & 0 & 0 \\
  0 & 1 & 0 & 0 \\
  0 & 0 & A_1 \cos(\mu_1) & -A_1  \sin(\mu_1) \\
  0 & 0 &A_1 \sin(\mu_1) & A_1  \cos(\mu_1) 
\end{bmatrix} .
  \end{aligned}
  }
  $$
\end{proof}

According to the Lemma~\ref{alem:HH}, the expected transfer matrix $\underset{\boldsymbol{x}}{\mathbb{E}}[T(\boldsymbol{x})]$ in Lemma~\ref{alem:T} can be written as a block matrix $\begin{bmatrix} 
      1 &  \\
      &  \mathcal{T}
  \end{bmatrix}$, where
  \begin{equation}
    \label{eq:P}
  \begin{aligned}
    \mathcal{T} = \begin{bmatrix} 
      t_{z z} & t_{xz} & t_{yz} \\
      t_{zx} & t_{x x} & t_{yx} \\
      t_{zy} & t_{xy} & t_{y y}
  \end{bmatrix}.
  \end{aligned}
  \end{equation}  
The properties of matrix $T$ mainly depend on the properties of matrix $\mathcal{T}$, so we will focus on analyzing matrix $\mathcal{T}$ next.

\begin{lemma}(From the work~\cite{li2022concentration}, Lemma S3)
  \label{alem:QHQ}
  Given a Hermitian matrix $H \in \mathbb{C}^{n \times n}$ with all eigenvalues not exceeding $\lambda$, for any matrix $Q \in \mathbb{C}^{n \times n}$ with maximum singular value not exceeding $s$, the maximum eigenvalue of matrix $Q^{\dagger} H Q$ does not exceed $s^2 \lambda$.
\end{lemma}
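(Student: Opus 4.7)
The plan is to prove this via the Rayleigh quotient characterization of eigenvalues of Hermitian matrices. First, I would observe that $Q^{\dagger} H Q$ is itself Hermitian, since $(Q^{\dagger} H Q)^{\dagger} = Q^{\dagger} H^{\dagger} Q = Q^{\dagger} H Q$ using the Hermiticity of $H$. Consequently, its maximum eigenvalue admits the variational characterization
\[
\lambda_{\max}(Q^{\dagger} H Q) = \max_{\|v\|=1}\, v^{\dagger} Q^{\dagger} H Q v.
\]

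Next, I would make the substitution $w = Q v$, reducing the expression to $v^{\dagger} Q^{\dagger} H Q v = w^{\dagger} H w$. The singular value hypothesis on $Q$ translates directly into a norm bound on $w$: since the maximum singular value of $Q$ does not exceed $s$, we have $\|w\| = \|Q v\| \leq s \|v\| = s$. Applying the spectral bound for Hermitian matrices, the assumption $\lambda_{\max}(H) \leq \lambda$ yields $w^{\dagger} H w \leq \lambda \|w\|^2$. Combining this with $\|w\|^2 \leq s^2$ gives $w^{\dagger} H w \leq s^{2} \lambda$, and taking the maximum over unit $v$ completes the proof.

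There is essentially no deep obstacle here; the only subtle point is the sign of $\lambda$. The chain $\lambda \|w\|^2 \leq \lambda s^2$ requires $\lambda \geq 0$, which is the relevant regime in the concentration argument (since $H$ will be instantiated as a Gram-type matrix like $\mathcal{T}^{\top} \mathcal{T}$ whose eigenvalues are nonnegative). For $\lambda < 0$ the inequality reverses and indeed fails whenever $Q$ has a nontrivial kernel, so in writing the proof I would either add the standing assumption $\lambda \geq 0$ explicitly or restrict attention to positive semidefinite $H$, which is how the lemma is used downstream.
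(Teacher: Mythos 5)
Your proof is correct. The paper itself does not prove this lemma---it is imported verbatim from Li et al.\ (Lemma S3)---and your Rayleigh-quotient argument ($\lambda_{\max}(Q^{\dagger}HQ)=\max_{\|v\|=1} v^{\dagger}Q^{\dagger}HQv$, substitute $w=Qv$ with $\|w\|\leqslant s$, then bound $w^{\dagger}Hw\leqslant\lambda\|w\|^2$) is the standard and essentially only natural route. Your caveat that the final step $\lambda\|w\|^2\leqslant\lambda s^2$ needs $\lambda\geqslant 0$ is a genuine and correct observation about the lemma as stated; it is harmless here because every invocation in the paper (Lemma~\ref{alem:P} and the repeated conjugations in Theorem~\ref{athm:concentration}) applies it to positive semidefinite matrices of the form $\mathcal{T}^{\top}\mathcal{T}$ or their conjugates, for which $\lambda\geqslant 0$ automatically.
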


\begin{lemma}
  \label{alem:P}
  Assume the encoded data $\boldsymbol{x} = [x_1,x_2,x_3] \ $ follows independent Gaussian distributions, i.e., $x_{i} \sim \mathcal{N}(\mu_i,\sigma_i^2)$, with $\sigma_i^2 \geqslant \sigma^2, i \in [3]$.
   The expected transfer matrix of encoding gate $R(\boldsymbol{x}) = R_z(x_3) R_y(x_2) R_z(x_1)$ can be written as $\underset{\boldsymbol{x}}{\mathbb{E}}[T(\boldsymbol{x})]  = \begin{bmatrix} 
    1 &  \\
    & \mathcal{T} 
\end{bmatrix}$, where the maximum eigenvalue of matrix $\mathcal{T}^{\top} \mathcal{T}$ does not exceed $e^{-\sigma ^2}$, and matrix $\mathcal{T}^{\top} \mathcal{T}$ is Hermitian.
\end{lemma}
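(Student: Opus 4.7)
\textbf{Proof plan for Lemma~\ref{alem:P}.}

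Hermiticity of $\mathcal{T}^\top \mathcal{T}$ is immediate from $(\mathcal{T}^\top \mathcal{T})^\top = \mathcal{T}^\top \mathcal{T}$, so the content is the eigenvalue bound. The plan is to exploit the factorization $\underset{\boldsymbol{x}}{\mathbb{E}}[T(\boldsymbol{x})] = \underset{x_3}{\mathbb{E}}[T_z(x_3)]\,\underset{x_2}{\mathbb{E}}[T_y(x_2)]\,\underset{x_1}{\mathbb{E}}[T_z(x_1)]$ inherited from Lemma~\ref{alem:T}, and to peel off the outer $R_z$-factor at the end using Lemma~\ref{alem:QHQ}, rather than appealing to submultiplicativity of the spectral norm (which here collapses to the trivial bound $1$). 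Restricting to the lower-right $3 \times 3$ blocks read off from Lemma~\ref{lem:ET_zy}, I write $\mathcal{T} = \mathcal{T}_{z_3}\,\mathcal{T}_{y_2}\,\mathcal{T}_{z_1}$ in the Pauli-basis ordering $\{Z,X,Y\}$, with the shorthand $A_i := e^{-\sigma_i^2/2}$ so that $A_i^2 \le e^{-\sigma^2}$ under the hypothesis $\sigma_i^2 \ge \sigma^2$.

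The first routine computation is $\mathcal{T}_{z_3}^\top \mathcal{T}_{z_3} = \mathrm{diag}(1,\,A_3^2,\,A_3^2)$, reflecting that the expected $R_z$-block fixes $Z$ but contracts $X,Y$ by $A_3$. The key middle step is the conjugation $M := \mathcal{T}_{y_2}^\top \,\mathrm{diag}(1,A_3^2,A_3^2)\,\mathcal{T}_{y_2}$. Because $\mathcal{T}_{y_2}$ mixes $Z$ and $X$ while fixing $Y$, $M$ block-decomposes into a $2\times 2$ block on $\mathrm{span}(e_Z, e_X)$ plus a scalar block $A_3^2$ on $e_Y$. A short trace/determinant calculation shows that the $2\times 2$ block has trace $A_2^2(1+A_3^2)$ and determinant $A_2^4 A_3^2$, hence eigenvalues $A_2^2$ and $A_2^2 A_3^2$. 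Therefore
\[
\lambda_{\max}(M) \;=\; \max\{A_2^2,\,A_2^2 A_3^2,\,A_3^2\} \;=\; \max\{A_2^2,\,A_3^2\} \;\le\; e^{-\sigma^2}.
\]

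Finally, I apply Lemma~\ref{alem:QHQ} to the identity $\mathcal{T}^\top \mathcal{T} = \mathcal{T}_{z_1}^\top M \mathcal{T}_{z_1}$, with $H = M$ (Hermitian, with all eigenvalues at most $e^{-\sigma^2}$) and $Q = \mathcal{T}_{z_1}$, whose singular values $\{1, A_1, A_1\}$ give $\|\mathcal{T}_{z_1}\| = 1$. The lemma yields $\lambda_{\max}(\mathcal{T}^\top \mathcal{T}) \le 1^2 \cdot e^{-\sigma^2} = e^{-\sigma^2}$, as required.

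The main obstacle is conceptual rather than computational: a naive submultiplicative estimate $\|\mathcal{T}\| \le \|\mathcal{T}_{z_3}\|\,\|\mathcal{T}_{y_2}\|\,\|\mathcal{T}_{z_1}\|$ collapses to $1$, because each factor has a preserved Pauli direction of its own ($Z$ for the expected $R_z$-blocks, $Y$ for the expected $R_y$-block). The decay must therefore come from the geometric fact that these preserved directions \emph{differ}: the unit eigenvector of $\mathcal{T}_{z_3}^\top \mathcal{T}_{z_3}$ along $e_Z$ with eigenvalue $1$ is rotated by $\mathcal{T}_{y_2}$ out of the $Z$-axis, picking up a factor $A_2$, which produces the leading $A_2^2$ in the spectrum of $M$. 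Getting this cross-interaction right — and verifying that the untouched $e_Y$-block of $M$ contributes only $A_3^2$ — is essentially the whole content of the argument; once $M$ is diagonalized, one application of Lemma~\ref{alem:QHQ} closes the proof.
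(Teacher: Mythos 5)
Your proof is correct and follows essentially the same route as the paper: both use the factorization $\mathcal{T}=\mathcal{T}_{z_3}\mathcal{T}_{y_2}\mathcal{T}_{z_1}$, both establish that the Gram matrix of $\mathcal{T}_{z_3}\mathcal{T}_{y_2}$ has spectrum $\{A_2^2, A_2^2A_3^2, A_3^2\}$, and both finish by applying Lemma~\ref{alem:QHQ} with the norm-$1$ factor $\mathcal{T}_{z_1}$. The only difference is presentational: the paper exhibits an explicit rotation--diagonal--rotation factorization of $\mathcal{T}_{z_3}\mathcal{T}_{y_2}$ (so the singular values are read off directly), whereas you diagonalize $M=(\mathcal{T}_{z_3}\mathcal{T}_{y_2})^{\top}(\mathcal{T}_{z_3}\mathcal{T}_{y_2})$ via its trace and determinant --- the same matrix up to orthogonal conjugation.
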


\begin{proof}
  According to the proof of Lemma~\ref{alem:T}, matrix $\mathcal{T}$ can be decomposed as:
  \begin{equation*}
    \tiny{
  \begin{aligned}
      \mathcal{T} &= \begin{bmatrix} 
          1 & 0 & 0 \\
          0 & A_3 \cos(\mu_3) & -A_3 \sin (\mu_3) \\
          0 & A_3 \sin(\mu_3) & A_3 \cos(\mu_3)
      \end{bmatrix} \begin{bmatrix} 
          A_2 \cos(\mu_2) & -A_2 \sin(\mu_2)  & 0 \\
          A_2 \sin(\mu_2) & A_2 \cos(\mu_2) & 0 \\
          0 & 0 & 1
      \end{bmatrix} \begin{bmatrix} 
          1 & 0 & 0 \\
          0 & A_1 \cos(\mu_1) & -A_1 \sin(\mu_1) \\
          0 & A_1 \sin(\mu_1) & A_1 \cos(\mu_1) 
      \end{bmatrix}   \\
      &= \begin{bmatrix} 
          1 & 0 & 0  \\
          0 & \cos(\mu_3) & -\sin(\mu_3)  \\
          0 & \sin(\mu_3) & \cos(\mu_3)
      \end{bmatrix} \begin{bmatrix} 
          A_2 & 0 & 0 \\ 
          0 & A_2A_3 & 0 \\
          0 & 0 & A_3 
      \end{bmatrix}  \begin{bmatrix} 
          \cos(\mu_2) & -\sin(\mu_2) & 0 \\ 
         \sin(\mu_2) & \cos(\mu_2) & 0 \\
         0 & 0 & 1
      \end{bmatrix} \begin{bmatrix} 
          1 & 0 & 0 \\
          0 & A_1 \cos(\mu_1) & -A_1 \sin(\mu_1) \\
          0 & A_1 \sin(\mu_1) & A_1 \cos(\mu_1) 
      \end{bmatrix}  .
  \end{aligned}
    }
  \end{equation*}
  Let $Q = \begin{bmatrix} 
    1 & 0 & 0 \\
    0 & A_1 \cos(\mu_1) & -A_1 \sin(\mu_1) \\
    0 & A_1 \sin(\mu_1) & A_1 \cos(\mu_1) 
  \end{bmatrix},R = \begin{bmatrix} 
    \cos(\mu_2) & -\sin(\mu_2) & 0 \\ 
   \sin(\mu_2) & \cos(\mu_2) & 0 \\
   0 & 0 & 1
  \end{bmatrix} $, we have:
    $$
    \begin{aligned}
   \mathcal{T}^{\top} \mathcal{T} = Q^{\top} R^{\top} \begin{bmatrix} 
        A_2^2 & 0 & 0 \\
        0 & (A_2 A_3)^2 & 0 \\
        0 & 0 &  A_3^2
    \end{bmatrix} R Q.
    \end{aligned}
    $$
    Since $(\mathcal{T}^{\top} \mathcal{T})^{\dagger} = \mathcal{T}^{\top} \mathcal{T}$, matrix $\mathcal{T}^{\top} \mathcal{T}$ is Hermitian. As the maximum singular values of matrices $Q$ and $R$ do not exceed 1, and the maximum eigenvalue of matrix $ \begin{bmatrix} 
      A_2^2 & 0 & 0 \\
      0 & (A_2 A_3)^2 & 0 \\
      0 & 0 &  A_3^2
  \end{bmatrix} $ is $\max\{A_2^2,A_3^2\}$, by assumption, $A_2,A_3$ are both no greater than $e^{- \frac{\sigma^2}{2}}$. According to Lemma~\ref{alem:QHQ}, the maximum eigenvalue of matrix $\mathcal{T}^{\top} \mathcal{T}$ does not exceed $e^{-\sigma^2}$.
\end{proof}

\subsection{Effect of Encoding Layers}
\label{asec:encoding-layers}
The following theorem establishes that as the number of encoding layers $L$ increases, the expected encoded state approaches the maximally mixed state exponentially.

\begin{lemma}(From the work~\cite{li2022concentration}, Lemma S2)
  \label{alem:l2-norm}
  Given a Hermitian matrix $H \in \mathbb{C}^{n \times n}$ with all eigenvalues no greater than $\lambda$, and an $n$-dimensional vector $x$, we have
\[
\boldsymbol{x}^{\dagger} H \boldsymbol{x} \leq \|\boldsymbol{x}\|_{2}^{2} \lambda,
\]
where $\|\cdot\|_{2}$ denotes the $l_{2}$-norm.
\end{lemma}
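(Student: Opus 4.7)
The plan is to prove the bound via the spectral theorem for Hermitian matrices, which gives a clean orthonormal eigendecomposition and converts the quadratic form into a weighted sum of eigenvalues.

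First, I would invoke the spectral theorem: since $H \in \mathbb{C}^{n \times n}$ is Hermitian, there exists an orthonormal basis $\{v_1,\ldots,v_n\}$ of $\mathbb{C}^n$ consisting of eigenvectors of $H$, with corresponding real eigenvalues $\lambda_1,\ldots,\lambda_n$, each satisfying $\lambda_i \leq \lambda$ by hypothesis. Expand the arbitrary vector as $\boldsymbol{x} = \sum_{i=1}^{n} c_i v_i$ where $c_i = v_i^{\dagger} \boldsymbol{x} \in \mathbb{C}$, so that orthonormality gives $\|\boldsymbol{x}\|_2^2 = \sum_{i=1}^{n} |c_i|^2$.

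Next, I would compute the quadratic form directly. Applying $H$ to the expansion yields $H \boldsymbol{x} = \sum_{i=1}^{n} c_i \lambda_i v_i$, and then
$$
\boldsymbol{x}^{\dagger} H \boldsymbol{x} = \sum_{i=1}^{n} \lambda_i |c_i|^2.
$$
Using $\lambda_i \leq \lambda$ for every $i$ together with the non-negativity of $|c_i|^2$, the sum is bounded above by $\lambda \sum_{i=1}^{n} |c_i|^2 = \lambda \|\boldsymbol{x}\|_2^2$, which establishes the claim.

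There is essentially no obstacle here: the lemma is a textbook consequence of the spectral theorem, and the only subtlety worth noting is that Hermiticity is what guarantees real eigenvalues and an orthonormal eigenbasis, so the bound $\lambda_i \leq \lambda$ can be applied termwise to non-negative weights $|c_i|^2$. The only care needed is to ensure $\boldsymbol{x}^{\dagger} H \boldsymbol{x}$ is interpreted as a real scalar (which it is, again by Hermiticity), so that comparing it with the real upper bound $\lambda \|\boldsymbol{x}\|_2^2$ is unambiguous.
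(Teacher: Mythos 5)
Your proof is correct: the spectral-theorem argument (expand $\boldsymbol{x}$ in an orthonormal eigenbasis, bound each eigenvalue by $\lambda$ termwise against the non-negative weights $|c_i|^2$) is the standard and essentially only proof of this fact. The paper itself states this lemma without proof, importing it from Lemma S2 of the cited reference, so there is nothing to compare against; your argument fills that gap correctly.
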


\begin{theorem}[Theorem~\ref{thm:concentration} in the main text]
  \label{athm:concentration}
  Consider an $N$-qubit data re-uploading circuit with $L$ encoding layers and without repetition ($P=1$), which encodes data $\boldsymbol{x} \in \mathbb{R}^{3 NL}$ into the circuit, where each data point follows an independent Gaussian distribution, i.e., $x_{l,n,i} \sim \mathcal{N}(\mu_{l,n,i},\sigma^2_{l,n,i})$ and $\sigma_{l,n,i}^2 \geqslant \sigma^2$. Let $\rho(\boldsymbol{x},\boldsymbol{\theta})$ denote the $N$-qubit encoded state. Then the quantum divergence between the expected state $\mathbb{E}[\rho] = \mathbb{E}_{\boldsymbol{x}}[\rho(\boldsymbol{x},\boldsymbol{\theta})]$ and the maximally mixed state $\rho_{I} = \frac{I}{2^{N}}$ satisfies:
   $$
   \begin{aligned}
   D_2\left( \mathbb{E}[\rho]   ||   \rho_{I}  \right)  \leqslant \log_2 \left( 1 + (2^{N}-1) e^{-L \sigma^2} \right) .
   \end{aligned}
   $$
  \end{theorem}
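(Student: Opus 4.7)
The plan is to prove the equivalent statement $\|\boldsymbol{\beta}'\|_2^2 \leq (2^N-1) e^{-L\sigma^2}$, where $\boldsymbol{\beta}'$ is the vector of non-identity Pauli coefficients of $\mathbb{E}[\rho]$. First, I would compute the $D_2$ divergence directly from the Pauli expansion: writing $\mathbb{E}[\rho] = \frac{1}{2^N}\sum_i \beta_i P_i$ with $\beta_{I^{\otimes N}}=1$, and using $\rho_I^{-1}=2^N I$ together with $\operatorname{Tr}[P_i P_j]=2^N\delta_{ij}$, one obtains $D_2(\mathbb{E}[\rho]\,\|\,\rho_I) = \log_2(1 + \|\boldsymbol{\beta}'\|_2^2)$. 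Next, I would factor $\mathbb{E}[\rho]$'s coefficient vector through transfer matrices: since the parameterized gates $U_l(\boldsymbol{\theta}_l)$ are deterministic and the chunks $\boldsymbol{x}_{[l]}$ are independent across layers, $\boldsymbol{\beta} = H_L\,\mathbb{E}[\mathcal{T}_L]\,H_{L-1}\cdots H_1\,\mathbb{E}[\mathcal{T}_1]\,\boldsymbol{\alpha}_0$, where $\mathcal{T}_l = \bigotimes_{n=1}^N T(\boldsymbol{x}_{l,n})$ and $\boldsymbol{\alpha}_0$ is the coefficient vector of $\rho_0 = \ket{0}\bra{0}^{\otimes N}$. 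Lemma~\ref{alem:HH} gives every such matrix the block form $\mathrm{diag}(1,\cdot)$; a direct computation $\alpha_{0,i} = \operatorname{Tr}[\rho_0 P_i] = \prod_n \bra{0} P_{i_n} \ket{0}$ shows $\alpha_{0,i}=1$ exactly on $\{I,Z\}^{\otimes N}$ and $0$ elsewhere, so $\|\boldsymbol{\alpha}_0'\|_2^2 = 2^N - 1$.

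The crux is bounding the non-identity norm contraction of one encoding layer. Parameterized layers are orthogonal by Lemma~\ref{alem:HH} and thus preserve the $l_2$-norm; for an encoding layer I claim $\lambda_{\max}\bigl((\mathbb{E}[\mathcal{T}_l]')^\top \mathbb{E}[\mathcal{T}_l]'\bigr) \leq e^{-\sigma^2}$, where the prime denotes restriction to the non-identity subspace. To see this, I would decompose the $4^N-1$ non-identity Pauli indices by support $S \subseteq \{1,\dots,N\}$, $S \neq \emptyset$. The single-qubit block form $\mathrm{diag}(1, \mathcal{T}_{l,n})$ implies that $\bigotimes_n \mathbb{E}[T(\boldsymbol{x}_{l,n})]$ leaves each support subspace invariant and acts on it as $\bigotimes_{n\in S} \mathcal{T}_{l,n}$. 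Lemma~\ref{alem:P} gives $\lambda_{\max}(\mathcal{T}_{l,n}^\top \mathcal{T}_{l,n}) \leq e^{-\sigma^2}$, so on the support-$S$ block the top eigenvalue of $(\mathbb{E}[\mathcal{T}_l]')^\top \mathbb{E}[\mathcal{T}_l]'$ is at most $e^{-|S|\sigma^2}$. Since $|S|\geq 1$ everywhere on the non-identity subspace, the overall bound is $e^{-\sigma^2}$, and Lemma~\ref{alem:l2-norm} upgrades this to $\|\mathbb{E}[\mathcal{T}_l]\boldsymbol{v}\|_2^2 \leq e^{-\sigma^2}\|\boldsymbol{v}\|_2^2$ for any $\boldsymbol{v}$ in the non-identity subspace.

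Iterating these two norm facts through the $L$ encoding layers yields $\|\boldsymbol{\beta}'\|_2^2 \leq e^{-L\sigma^2}(2^N-1)$, which substituted into the $D_2$ identity closes the argument. The main obstacle I expect is the tensor-product eigenvalue bound in the second paragraph: Lemma~\ref{alem:P} only controls the single-qubit case, and one must rule out the possibility that combining $N$ parallel contractions degrades the rate with $N$. The support-based block decomposition is what saves this, since the single-qubit block $\mathrm{diag}(1,\mathcal{T}_{l,n})$ ensures every non-identity direction picks up at least one full $e^{-\sigma^2}$ factor independently of how many other qubits sit in their identity slot, which is precisely why the final bound depends on $N$ only through the prefactor $2^N - 1$.
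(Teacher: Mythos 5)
Your proposal is correct and follows essentially the same route as the paper's proof: the $D_2$-to-Pauli-norm identity, the factorization of the coefficient vector through block-diagonal transfer matrices with orthogonal parameterized blocks (Lemma~\ref{alem:HH}) and contracting encoding blocks (Lemma~\ref{alem:P}), and the $\|\boldsymbol{\alpha}^{\circ}\|_2^2 = 2^N-1$ initial condition. The one place you go beyond the paper is the explicit support-set block decomposition justifying $\lambda_{\max}\leqslant e^{-\sigma^2}$ for the $N$-qubit tensor-product encoding block, a step the paper asserts without detail; your argument for it is sound.
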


\begin{proof}
  Let $\boldsymbol{\alpha}$ be the Pauli basis vector of the initial state $\rho_0 = (\ket{0}\bra{0})^{\otimes N}$ of the $N$-qubit data re-uploading circuit. Since $\ket{0}\bra{0} = \frac{1}{2}(I+Z)$, we have
  $$
  \begin{aligned}
  \boldsymbol{\alpha} = \bigotimes_{n=1}^{N} \begin{bmatrix} 
      1 & 1 & 0 & 0 
  \end{bmatrix} .
  \end{aligned}
  $$
  Let $\boldsymbol{\beta}$ be the Pauli basis vector of the expected encoded state $\mathbb{E}_{\boldsymbol{x}}[\rho(\boldsymbol{x},\boldsymbol{\theta})]$ with respect to data $\boldsymbol{x}$, i.e.,
  $$
  \begin{aligned}
  \mathbb{E}[\rho] = \underset{\boldsymbol{x}}{\mathbb{E}}[\rho(\boldsymbol{x},\boldsymbol{\theta})]&= \frac{1}{2^{ N}} \left( \sum_{P_i \in \{I,X,Y,Z\}^{\otimes N}}^{} \beta_i P_i  \right), \\
  \end{aligned}
  $$
  therefore,
  $$
  \begin{aligned}
    (\mathbb{E}[\rho])^2 =\frac{1}{4^{ N}} \left( \left( \sum_{i=1}^{4^{n}} \beta_i^2 \right) P_i^2 + \sum_{P_i,P_j \in \{I,X,Y,Z\}^{\otimes N}, P_i \neq P_j}^{} \beta_i \beta_jP_i P_j  \right) .
  \end{aligned}
  $$
  Since $\operatorname{Tr}\left[P_i^2\right] = 2^{N}$ and for different Pauli matrices $P_i \neq  P_j$, $\operatorname{Tr}\left[P_i P_j\right] = 0$, we have:
  $$
  \begin{aligned}
  \operatorname{Tr}\left[(\mathbb{E}[\rho])^2\right] = \frac{1}{2^{N}}  \sum_{i=1}^{4^{n}} \beta_i^2  = \frac{1}{2^{N}} \boldsymbol{\beta}^{\top} \boldsymbol{\beta}.
  \end{aligned}
  $$

  Let $H_l$ be the transfer matrix of the $l$-th layer parameterized gate, and $\mathbb{E}_{\boldsymbol{x}_{[l]}}[T(\boldsymbol{x}_{[l]})]$ be the expected transfer matrix of the $l$-th layer encoding gate with respect to data:
  $$
  \begin{aligned}
  \underset{\boldsymbol{x}_{[l]}}{\mathbb{E}}[T(\boldsymbol{x}_{[l]})] =  \bigotimes_{n=1}^{N} \underset{\boldsymbol{x}_{l,n}}{\mathbb{E}}[T(\boldsymbol{x}_{l,n})],
  \end{aligned}
  $$
  then we can obtain
  $$
  \begin{aligned}
  \boldsymbol{\beta} = H_L \underset{\boldsymbol{x}_{[L]}}{\mathbb{E}}[T(\boldsymbol{x}_{[L]})] \cdots H_2 \underset{\boldsymbol{x}_{[2]}}{\mathbb{E}}[T(\boldsymbol{x}_{[2]})] H_1 \underset{\boldsymbol{x}_{[1]}}{\mathbb{E}}[T(\boldsymbol{x}_{[1]})] \boldsymbol{\alpha}.
  \end{aligned}
  $$
  Since the first element of $\boldsymbol{\alpha}$ corresponds to the coefficient of identity matrix $I^{\otimes N}$ which is 1, we have $\boldsymbol{\alpha} = \begin{bmatrix} 
    1  \\ \boldsymbol{\alpha}^{\circ}
\end{bmatrix}$. By Lemma~\ref{alem:HH}, $H_l,\mathbb{E}_{\boldsymbol{x}_{[l]}}[T(\boldsymbol{x}_{[l]})],l \in [L]$ can be written as block matrices:
  $$
  \begin{aligned}
    H_l = \begin{bmatrix} 
      1 & \\
      & \mathcal{H}_l 
  \end{bmatrix} ;\underset{\boldsymbol{x}_{[l]}}{\mathbb{E}}[T(\boldsymbol{x}_{[l]})] = \begin{bmatrix} 
      1 & \\
      & \mathcal{T}_{l} 
  \end{bmatrix} ,
  \end{aligned}
  $$
  where $\mathcal{H}_l$ is orthogonal, the maximum eigenvalue of $\mathcal{T}_l^{\top}\mathcal{T}_l$ does not exceed $e^{- \sigma ^2}$, and $\mathcal{T}_l^{\top} \mathcal{T}_l$ is Hermitian, therefore
  $$
  \begin{aligned}
  \boldsymbol{\beta}^{\top} \boldsymbol{\beta} &= \begin{bmatrix} 
      1 &
      (\boldsymbol{\alpha}^{\circ})^{\top}
  \end{bmatrix} \begin{bmatrix} 
      1 & \\
       & \mathcal{T}_{1}^{\top} \mathcal{H}_1^{\top} \cdots \mathcal{T}_L^{\top} \mathcal{H}_L^{\top} 
  \end{bmatrix}  \begin{bmatrix} 
      1 & \\
       & \mathcal{H}_L \mathcal{T}_L \cdots \mathcal{H}_1 \mathcal{T}_1 
  \end{bmatrix}  \begin{bmatrix} 
      1 \\
      \boldsymbol{\alpha}^{\circ} 
  \end{bmatrix} \\
  &= 1 + (\boldsymbol{\alpha}^{\circ})^{\top} \mathcal{T}_{1}^{\top} \mathcal{H}_1^{\top} \cdots \mathcal{T}_L^{\top} \ \mathcal{T}_L \cdots \mathcal{H}_1 \mathcal{T}_1 \boldsymbol{\alpha}^{\circ}.
  \end{aligned}
  $$
  By repeatedly using Lemma~\ref{alem:QHQ}, the maximum eigenvalue of matrix $\mathcal{T}_{1}^{\top} \mathcal{H}_1^{\top} \cdots \mathcal{T}_L^{\top} \ \mathcal{T}_L \cdots \mathcal{H}_1 \mathcal{T}_1 $ does not exceed $e^{-L \sigma^2}$, and $\boldsymbol{\alpha}^{\circ}$ has $2^{N}-1$ elements equal to 1, so $\|\boldsymbol{\alpha}^{\circ}\| = 2^{N}-1$. According to Lemma~\ref{alem:l2-norm}, we have:
  $$
  \begin{aligned}
  \boldsymbol{\beta}^{\top} \boldsymbol{\beta} \leqslant 1 + (2^{N}-1)e^{-L \sigma^2}.
  \end{aligned}
  $$
  Therefore,
  $$
  \begin{aligned}
  D_2(\mathbb{E}[\rho] || \rho_{I}) &= \log_2 \left( \operatorname{Tr}\left[ \mathbb{E}[\rho]^2 \cdot \left( \frac{I}{2^{N}} \right)^{-1}  \right] \right) = \log_2 \left( 2^{N} \cdot \operatorname{Tr}\left[\mathbb{E}[\rho]^2\right] \right)  \\
   &= \log_2 (\boldsymbol{\beta}^{\top} \boldsymbol{\beta})  \leqslant \log_2 \left( 1 + (2^{N}-1) e^{-L \sigma^2} \right) .
  \end{aligned}
  $$

\end{proof}

\begin{corollary}
  \label{acor:concentration_result}
  Consider an $N$-qubit data re-uploading circuit with $L$ encoding layers and without repetition ($P=1$), where the encoded data follows the independent Gaussian distribution defined in Theorem~\ref{thm:concentration}. Let $h_1(\boldsymbol{x},\boldsymbol{\theta})$ be its output with respect to an observable $H$ whose eigenvalues lie in $[-1,1]$. When $L \geqslant \frac{1}{\sigma^2}[(N+2)\ln 2 + 2 \ln(\frac{1}{\epsilon})]$, we have
  $$
  \begin{aligned}
    \left|  \underset{\boldsymbol{x}}{\mathbb{E}}[ h_1(\boldsymbol{x},\boldsymbol{\theta})] - h_I  \right|  \leqslant \epsilon ,
  \end{aligned}
  $$
  where $h_I = \operatorname{Tr}\left[H \rho_{I}\right]$, and $\rho_{I} = \frac{I}{2^{N}}$ is the maximally mixed state of $N$ qubits.
\end{corollary}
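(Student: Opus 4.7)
The plan is to chain together three bounds: trace-distance to observable difference (H\"{o}lder), trace-distance to $D_2$ divergence (Lemma~\ref{alem:T-D}), and $D_2$ divergence to encoding-layer count (Theorem~\ref{athm:concentration}). Writing the difference as a trace of $H$ against $(\mathbb{E}[\rho] - \rho_I)$ immediately invokes App.~\ref{asec:measures-of-quantum-state-distinguishability} once we know the observable's spectral norm is at most one, which follows from the eigenvalue bound on $H$.

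First, I would note that $\underset{\boldsymbol{x}}{\mathbb{E}}[h_1(\boldsymbol{x},\boldsymbol{\theta})] = \operatorname{Tr}\!\left[H\,\mathbb{E}[\rho]\right]$ by linearity of the trace, so
\[
\left|\underset{\boldsymbol{x}}{\mathbb{E}}[h_1(\boldsymbol{x},\boldsymbol{\theta})] - h_I\right| = \left|\operatorname{Tr}\!\left[H(\mathbb{E}[\rho] - \rho_I)\right]\right| \leqslant 2\,T(\mathbb{E}[\rho],\rho_I),
\]
using $\|H\|_\infty \leqslant 1$ and H\"{o}lder's inequality as recorded in App.~\ref{asec:measures-of-quantum-state-distinguishability}. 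Next I would apply Lemma~\ref{alem:T-D} to get $T(\mathbb{E}[\rho],\rho_I) \leqslant \sqrt{1 - 2^{-D_2(\mathbb{E}[\rho]\|\rho_I)}}$, and then feed in Theorem~\ref{athm:concentration} to obtain $2^{D_2(\mathbb{E}[\rho]\|\rho_I)} \leqslant 1 + (2^N-1)e^{-L\sigma^2}$. Combining these,
\[
\left|\underset{\boldsymbol{x}}{\mathbb{E}}[h_1(\boldsymbol{x},\boldsymbol{\theta})] - h_I\right| \leqslant 2\sqrt{\frac{(2^N-1)e^{-L\sigma^2}}{1 + (2^N-1)e^{-L\sigma^2}}} \leqslant 2\sqrt{2^N\,e^{-L\sigma^2}} = 2^{(N+2)/2}\,e^{-L\sigma^2/2}.
\]

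Finally, I would solve for the threshold on $L$: requiring $2^{(N+2)/2}\,e^{-L\sigma^2/2} \leqslant \epsilon$ is equivalent to $L\sigma^2/2 \geqslant \tfrac{N+2}{2}\ln 2 + \ln(1/\epsilon)$, i.e.\ $L \geqslant \frac{1}{\sigma^2}\bigl[(N+2)\ln 2 + 2\ln(1/\epsilon)\bigr]$, which is exactly the hypothesis of the corollary. No step here is particularly hard, since Theorem~\ref{athm:concentration} already does the heavy lifting on the Pauli-coefficient contraction; the only mild care-point is the algebra that simplifies $\sqrt{1 - 1/(1+x)}$ to $\sqrt{x/(1+x)} \leqslant \sqrt{x}$, so that the constant $2^N-1$ can be cleanly upper-bounded by $2^N$ to produce the clean threshold. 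The result then follows as a direct corollary of Theorem~\ref{athm:concentration} together with the standard inequalities of App.~\ref{asec:measures-of-quantum-state-distinguishability}.
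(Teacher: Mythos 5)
Your proposal is correct and follows essentially the same route as the paper's proof: H\"{o}lder's inequality to bound the observable difference by the trace distance, Lemma~\ref{alem:T-D} to bound the trace distance by the $D_2$ divergence, and Theorem~\ref{athm:concentration} to bound the divergence, with the bound $(2^N-1)e^{-L\sigma^2}\leqslant \epsilon^2/4$ under the stated threshold on $L$. The only cosmetic difference is that you carry the symbolic bound $2\sqrt{x}$ through to the end before substituting the layer condition, whereas the paper substitutes it immediately; the two computations are identical.
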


\begin{proof}
  Let $\mathbb{E}[\rho] = \mathbb{E}_{\boldsymbol{x}}[\rho(\boldsymbol{x},\boldsymbol{\theta})]$. By Theorem~\ref{athm:concentration}, when $L \geqslant \frac{1}{\sigma^2} [(N+2)\ln 2 + 2 \ln(\frac{1}{\epsilon})]$, the divergence between $\mathbb{E}[\rho]$ and the maximally mixed state $\rho_{I}$ satisfies:
  $$
  \begin{aligned}
  D_2(\mathbb{E}[\rho] || \rho_{I}) &\leqslant 1 + (2^{N}-1) e^{-L \sigma ^2}   \leqslant  1 + \frac{(2^{N}-1) \epsilon^2 }{2^{N+2}} \\
  & \leqslant  1 + \frac{2^{N} \cdot  \epsilon ^2}{4 \cdot 2^{N}}  \leqslant \frac{4+ \epsilon^2}{4} .
  \end{aligned}
  $$
  By the relationship between trace distance and divergence given in Lemma~\ref{alem:T-D}, we have
  $$
  \begin{aligned}
  T(\mathbb{E}[\rho],\rho_{I})  &\leqslant \sqrt{1 - \frac{4}{4 + \epsilon^2}}  = \frac{\epsilon}{\sqrt{4+\epsilon ^2}}  \leqslant \frac{\epsilon}{2}.
  \end{aligned}
  $$
  Therefore, by H\"{o}lder's inequality~\cite{watrous2018theory}, we have:
  \begin{equation*}
  \begin{aligned}
    \left|   \operatorname{Tr}\left[H \mathbb{E}[\rho]\right] - \operatorname{Tr}\left[H \rho_{I}\right]  \right|  &\leqslant \|H\|_{\infty} \|\mathbb{E}[\rho] - \rho_{I}\|_{1}  \leqslant 2 T(\mathbb{E}[\rho],\rho_{I}) \leqslant \epsilon,
  \end{aligned}
  \end{equation*}
  where $\|H\|_{\infty}$ is the Schatten-$\infty$ norm (spectral norm) of $H$.
\end{proof}

\renewcommand{\theequation}{D.\arabic{equation}}
\setcounter{equation}{0}

\section{Proof of Dependence on the Number of Repetitions}
\label{sec:proof_of_approximation}

\subsection{Periodicity of Encoding Gates}

Consider the encoding gate defined in Eq.~\eqref{eq:R} which consists of rotation gates $R_z(x)$ and $R_y(x)$. For any Pauli operator $R_P(\boldsymbol{x}), P \in \{Z,Y\}$ acting on quantum state $\rho$, we can derive:
$$
\begin{aligned}
  &R_P(x) \rho R_P(x)^{\dagger} \\
  &= \cos ^2 \left( \frac{x}{2} \right) \rho + \sin^2\left( \frac{x}{2} \right) P\rho P   - \mathrm{i} \sin\left( \frac{x}{2} \right) \cos\left( \frac{x}{2} \right) P\rho + \mathrm{i} \cos\left( \frac{x}{2} \right) \sin\left( \frac{x}{2} \right) \rho P\\
  &= \frac{1+\cos(x)}{2}\rho + \frac{1-\cos(x)}{2} P \rho P - \frac{1}{2} \mathrm{i} \sin(x) P \rho + \frac{1}{2}\mathrm{i}\sin(x) \rho P .
\end{aligned}
$$
From this expression, it is evident that the mapping $x \mapsto R_P(x) \rho R_P(x)^{\dagger},P \in \{Z,Y\}$ exhibits periodicity with period $2\pi$.

\subsection{Approximating Circuit}

For an approximate data $\tilde{x}_d = \sum_{j=-3}^{q} b_j 2^{-j}$, where $b_j \in \{0,1\}$, we can construct a controlled rotation gate $\mathrm{C}-R_z(2^{-j})$ with auxiliary qubits $\ket{b_j}$ as controls and working qubits as targets. This controlled rotation gate approximates the rotation gate in working qubits  as:
$$
\begin{aligned}
R_z\left(\sum_{j=-3}^{q} b_j 2^{-j} \right) = R_z(x_d - \epsilon_q) = R_z(\tilde{x}_d),
\end{aligned}
$$
Similarly, for the $R_y$ gate, we can approximate it using controlled rotation gates $\mathrm{C}-R_y(2^{-j})$ with auxiliary qubits $\ket{b_j}$ as controls and working qubits as targets. This controlled rotation gate approximates the rotation gate in working qubits as:
$$
\begin{aligned}
R_y\left(\sum_{j=-3}^{q} b_j 2^{-j} \right) = R_y(x_d - \epsilon_q) = R_y(\tilde{x}_d),
\end{aligned}
$$
where $\epsilon_q = x_d - \tilde{x}_d$.

In the original data re-uploading circuits shown in Fig.~\ref{fig:data-Reuploading-circuit}, the chunk of data in the $l$-th encoding layer is ${\boldsymbol{x}}_{[l]} = [ \boldsymbol{x}_{l,1} ,  \cdots , \boldsymbol{x}_{l,N} ]^{\top}$. Here, $\tilde{\boldsymbol{x}}_{[l]}$ represents the $(q+3)$-bit binary approximation of each data component in $\boldsymbol{x}_{[l]}$. The quantum state corresponding to binary string $\tilde{\boldsymbol{x}}_{[l]}$ in the approximating circuit is $\ket{\boldsymbol{\tilde{x} }_{[l]}} \bra{\tilde{\boldsymbol{x}}_{[l]}}$, and the initial state of approximating circuit is:
$$
\begin{aligned}
\tilde{\rho} _{0}\left( \boldsymbol{x} \right)  =  \rho_{w,0} \otimes \left( \bigotimes_{l=1}^{L} \ket{{\boldsymbol{\tilde{x} }}_{[l]}} \bra{{\boldsymbol{\tilde{x} }}_{[l]}} \right) ,
\end{aligned}
$$
where $\rho_{w,0}$ is the initial state on working qubits, typically $\ket{0}\bra{0}$. Since each encoding layer in the data re-uploading circuit uses quantum gates defined in Eq.~\eqref{eq:R} for data encoding, the data-independent controlled gates used for any $\tilde{\boldsymbol{x}}_{[l]}$ are identical. Denoting the controlled gate as $\mathrm{C}R$ and the quantum gate applied in the approximating circuit during the $p$-th repetition as $V_p$, we have:
$$
\begin{aligned}
V_p\left( {\boldsymbol{\theta}}_{[p]} \right)  = \lprod_{l=1}^{L} U_l\left( \boldsymbol{\theta}_{p,l} \right)  \mathrm{C}R ,
\end{aligned}
$$
where ${\boldsymbol{\theta}}_{[p]} = [\boldsymbol{\theta}_{p,1}, \cdots, \boldsymbol{\theta}_{p,L}]$. After $P$ repetitions, the quantum state of approximating circuit is:
$$
\begin{aligned}
\tilde{\rho}_{P}(\boldsymbol{x},\boldsymbol{\theta}) = V_P\left( \boldsymbol{{\theta}_{[P]}} \right)  \cdots V_1\left( \boldsymbol{{\theta}_{[1]}} \right)  \tilde{\rho} _{0}\left( \boldsymbol{x} \right)  V_1\left( \boldsymbol{{\theta}_{[1]}} \right) ^{\dagger} \cdots V_P\left( \boldsymbol{{\theta}_{[P]}} \right)^{\dagger},
\end{aligned}
$$
and the corresponding quantum state on working qubits is:
$$
\begin{aligned}
\tilde{\rho} _{w,P}(\boldsymbol{x},\boldsymbol{\theta}) = \operatorname{Tr}_{\boldsymbol{x}}\left[V_P\left( \boldsymbol{{\theta}_{[P]}} \right)  \cdots V_1\left( \boldsymbol{{\theta}_{[1]}} \right)  \tilde{\rho} _{0}\left( \boldsymbol{x} \right)  V_1\left( \boldsymbol{{\theta}_{[1]}} \right) ^{\dagger} \cdots V_P\left( \boldsymbol{{\theta}_{[P]}} \right)^{\dagger} \right],
\end{aligned}
$$
where $\operatorname{Tr}_{\boldsymbol{x}}\left[\cdot\right]$ denotes the partial trace over the auxiliary qubits that contained the information of data $\boldsymbol{x}$. $\tilde{\rho}_{w,P}(\boldsymbol{x},\boldsymbol{\theta})$ is equivalent to the quantum state:
$$
\begin{aligned}
\rho_{P}(\boldsymbol{\tilde{x}},\boldsymbol{\theta}) = \left( \lprod_{p=1}^{P} \lprod_{l=1}^{L} U_{l}\left( \boldsymbol{\theta}_{p,l} \right)  R_l \left( \boldsymbol{\tilde{x} }_{[l]} \right)  \right) \rho_{0} \left( \prod_{p=1}^{P} \prod_{l=1}^{L} R_l \left( \boldsymbol{\tilde{x} }_{[l]} \right)^{\dagger} U_{l} \left( \boldsymbol{\theta}_{p,l} \right)^{\dagger}  \right) ,
\end{aligned}
$$
where $\rho_{0} = \rho_{w,0} = \ket{0}\bra{0}$. Therefore, for any observable $H$ acting only on working qubits, we have:
\begin{equation}
\begin{aligned}
\operatorname{Tr}\left[H \tilde{\rho}_{P}(\boldsymbol{x},\boldsymbol{\theta})\right] = \operatorname{Tr}\left[H \tilde{\rho} _{w,P}(\boldsymbol{x},\boldsymbol{\theta})\right] = \operatorname{Tr}\left[H \rho_{P}(\boldsymbol{\tilde{x}},\boldsymbol{\theta})\right].
\end{aligned}
\end{equation}

\subsection{Approximation Error Analysis}

To investigate the approximation error between the approximating circuit in Fig.~\ref{fig:approximating_circuit} and data re-uploading circuit in Fig.~\ref{fig:data-Reuploading-circuit}, we give the following definitions and lemmas:

\begin{definition}[Diamond Distance]
  For quantum channels $\mathcal{N}_{\mathcal{A} \to \mathcal{B}}$ and $\mathcal{M}_{\mathcal{A} \to \mathcal{B}}$, their diamond distance is defined as:
  $$
  \begin{aligned}
    \left\|\mathcal{N}_{A \rightarrow B}-\mathcal{M}_{A \rightarrow B}\right\|_{\diamond}:=\sup _{\rho_{R A} \in \mathcal{D}\left(\mathcal{H}_{R A}\right)}\left\|\left(\mathcal{I}_{R} \otimes \mathcal{N}_{A \rightarrow B}\right)\left(\rho_{R A}\right)-\left(\mathcal{I}_{R} \otimes \mathcal{M}_{A \rightarrow B}\right)\left(\rho_{R A}\right)\right\|_{1} ,
  \end{aligned}
  $$
  where $\|\cdot\|_{1}$ is the Schatten-$1$ norm.
\end{definition}

\begin{lemma}[From the work~\cite{watrous2018theory}, Proposition 3.48]
  \label{lem:diamond_distance}
  For any completely positive and trace-preserving maps $\mathcal{A,B,C,D}$, where $\mathcal{B}$ and $\mathcal{D}$ map from $n$-qubit systems to $m$-qubit systems, and $\mathcal{A}$ and $\mathcal{C}$ map from $m$-qubit systems to $k$-qubit systems, the following inequality holds:
  $$
  \begin{aligned}
  \|\mathcal{AB} - \mathcal{C} \mathcal{D}\|_{\diamond} \leqslant \|\mathcal{A} - \mathcal{C}\|_{\diamond} + \|\mathcal{B} - \mathcal{D}\|_{\diamond} .
  \end{aligned}
  $$
\end{lemma}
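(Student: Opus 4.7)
The plan is to reduce this bound to the triangle inequality together with submultiplicativity of the diamond norm under composition, and the fact that CPTP maps have unit diamond norm. The key algebraic identity is
\[
\mathcal{A}\mathcal{B} - \mathcal{C}\mathcal{D} = \mathcal{A}(\mathcal{B} - \mathcal{D}) + (\mathcal{A} - \mathcal{C})\mathcal{D},
\]
obtained by adding and subtracting the composition $\mathcal{A} \circ \mathcal{D}$. This isolates the two differences $\mathcal{A} - \mathcal{C}$ and $\mathcal{B} - \mathcal{D}$ in separate summands, each sandwiched with a CPTP map, so that the CPTP structure can be exploited in the next step.

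Next, I would apply the triangle inequality for $\|\cdot\|_\diamond$ to obtain
\[
\|\mathcal{A}\mathcal{B} - \mathcal{C}\mathcal{D}\|_\diamond \leqslant \|\mathcal{A} \circ (\mathcal{B} - \mathcal{D})\|_\diamond + \|(\mathcal{A} - \mathcal{C}) \circ \mathcal{D}\|_\diamond,
\]
and then bound each term using the submultiplicativity of the diamond norm, $\|\Phi \circ \Psi\|_\diamond \leqslant \|\Phi\|_\diamond \, \|\Psi\|_\diamond$, together with the fact that $\|\mathcal{A}\|_\diamond = \|\mathcal{D}\|_\diamond = 1$ for any CPTP maps. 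The latter is immediate from the definition: for any density operator $\rho_{RA}$ on an extended system, $(\mathcal{I}_R \otimes \mathcal{A})(\rho_{RA})$ is itself a density operator by complete positivity and trace preservation, so its Schatten-$1$ norm equals $1$, and taking the supremum over $\rho_{RA}$ gives $\|\mathcal{A}\|_\diamond = 1$.

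Combining these gives $\|\mathcal{A} \circ (\mathcal{B} - \mathcal{D})\|_\diamond \leqslant \|\mathcal{B} - \mathcal{D}\|_\diamond$ and $\|(\mathcal{A} - \mathcal{C}) \circ \mathcal{D}\|_\diamond \leqslant \|\mathcal{A} - \mathcal{C}\|_\diamond$, and summing the two bounds yields the desired inequality. I do not expect a serious obstacle, as this is essentially a textbook derivation. The only ingredient that requires some care is the submultiplicativity of $\|\cdot\|_\diamond$, which can be verified by selecting an operator $X$ on the extended system nearly attaining $\|\Phi \circ \Psi\|_\diamond$, decomposing the Hermitian operator $(\mathcal{I} \otimes \Psi)(X)$ into positive and negative parts via the Jordan--Hahn decomposition with total trace-norm bounded by $\|\Psi\|_\diamond \|X\|_1$, and then applying $(\mathcal{I} \otimes \Phi)$ to each part and controlling the $1$-norm by $\|\Phi\|_\diamond$.
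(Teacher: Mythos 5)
Your proof is correct and is essentially the standard argument behind the cited Proposition 3.48 of Watrous (the paper itself does not reprove this lemma, it only quotes it): the add-and-subtract decomposition $\mathcal{AB}-\mathcal{CD}=\mathcal{A}(\mathcal{B}-\mathcal{D})+(\mathcal{A}-\mathcal{C})\mathcal{D}$, the triangle inequality, and unit diamond norm of CPTP maps. One small simplification worth noting: with the paper's definition of $\|\cdot\|_{\diamond}$ as a supremum over density operators on the extended system, both terms can be bounded directly --- $\|\mathcal{AB}-\mathcal{AD}\|_{\diamond}\leqslant\|\mathcal{B}-\mathcal{D}\|_{\diamond}$ by contractivity of the trace norm under the CPTP map $\mathcal{I}\otimes\mathcal{A}$, and $\|(\mathcal{A}-\mathcal{C})\mathcal{D}\|_{\diamond}\leqslant\|\mathcal{A}-\mathcal{C}\|_{\diamond}$ because $(\mathcal{I}\otimes\mathcal{D})(\rho)$ is again a density operator --- so the general submultiplicativity argument via the Jordan--Hahn decomposition, while valid, is not needed.
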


\begin{lemma}[From the work\cite{caro2022generalization}, Lemma B.5]
  \label{lem:diamond-oo}
  Let $\mathcal{U}(\rho) = U \rho U^{\dagger}$, $\mathcal{V}(\rho) = V \rho V^{\dagger}$ be unitary channels, then
  $$
  \begin{aligned}
  \frac{1}{2} \|\mathcal{U} - \mathcal{V}\|_{\diamond} \leqslant \|U - V\|_{\infty} ,
  \end{aligned}
  $$
  where $\|\cdot\|_{\infty}$ is the Schatten-$\infty$ norm (spectral norm).
\end{lemma}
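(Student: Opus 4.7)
The plan is to prove the bound $\tfrac{1}{2}\|\mathcal{U} - \mathcal{V}\|_\diamond \leqslant \|U - V\|_\infty$ by working directly from the definition of the diamond norm and exploiting two key facts about unitary channels: their spectral norm is $1$, and tensoring with the identity does not change the operator norm.

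First, I would fix an arbitrary bipartite density operator $\rho_{RA}$ on the reference-plus-system Hilbert space, write $\widetilde U = I_R \otimes U$ and $\widetilde V = I_R \otimes V$, and analyse the single operator $\widetilde U \rho_{RA} \widetilde U^{\dagger} - \widetilde V \rho_{RA} \widetilde V^{\dagger}$ whose trace norm appears inside the supremum. The central trick is the standard add-and-subtract step: insert $\pm\, \widetilde V \rho_{RA}\widetilde U^{\dagger}$ to split the difference into $(\widetilde U - \widetilde V)\rho_{RA}\widetilde U^{\dagger} + \widetilde V \rho_{RA}(\widetilde U^{\dagger} - \widetilde V^{\dagger})$. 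This replaces a telescoped unitary conjugation by two terms, each of which is a product of something with controlled spectral norm and a density operator.

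Next I would apply the triangle inequality for $\|\cdot\|_1$ and then Hölder's inequality in the form $\|ABC\|_1 \leqslant \|A\|_\infty \|B\|_1 \|C\|_\infty$ to each summand. Three routine facts finish the bound on each term: $\|\rho_{RA}\|_1 = 1$ because $\rho_{RA}$ is a state; $\|\widetilde U\|_\infty = \|\widetilde V\|_\infty = 1$ because $U$ and $V$ are unitary and tensoring with $I_R$ preserves the spectrum; and $\|\widetilde U - \widetilde V\|_\infty = \|I_R \otimes (U-V)\|_\infty = \|U-V\|_\infty$, since the singular values of $I_R \otimes A$ are those of $A$ repeated $\dim \mathcal{H}_R$ times. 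Each of the two terms is therefore at most $\|U-V\|_\infty$, and adding them gives $2\|U-V\|_\infty$ uniformly in $\rho_{RA}$; taking the supremum yields the claimed inequality after dividing by $2$.

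I do not expect a genuine obstacle here, since every step is a routine norm estimate; the only delicate point is recognising that $\|\widetilde U - \widetilde V\|_\infty = \|U - V\|_\infty$ rather than some purifications-of-Stinespring bound, which is why introducing $\widetilde U, \widetilde V$ explicitly before applying Hölder is helpful. No Stinespring dilation, purification, or channel fidelity machinery is required — the argument is purely algebraic and the constant $\tfrac{1}{2}$ arises directly from the two cross terms produced by the add-and-subtract decomposition.
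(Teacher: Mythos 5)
Your proof is correct. The paper itself does not prove this lemma --- it simply imports it from Caro et al.\ (2022, Lemma B.5) --- but your argument is the standard and complete one: the add-and-subtract decomposition
$\widetilde U \rho \widetilde U^{\dagger} - \widetilde V \rho \widetilde V^{\dagger} = (\widetilde U - \widetilde V)\rho \widetilde U^{\dagger} + \widetilde V \rho (\widetilde U - \widetilde V)^{\dagger}$,
followed by the triangle inequality, H\"older's inequality $\|ABC\|_1 \leqslant \|A\|_\infty \|B\|_1 \|C\|_\infty$, and the observations $\|\rho_{RA}\|_1 = 1$, $\|\widetilde V\|_\infty = \|\widetilde U^{\dagger}\|_\infty = 1$, and $\|I_R \otimes (U-V)\|_\infty = \|U-V\|_\infty$, gives the bound $2\|U-V\|_\infty$ uniformly over the reference system, which is exactly the claim after dividing by $2$. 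No gaps.
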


\begin{lemma}
  \label{lem:R-R}
  Given any Pauli matrix $P \in \{X,Y,Z\}$ and two parameters $\phi_1,\phi_2 \in [0,2\pi]$, construct two rotation operators $R(\phi_1) = e^{-i \frac{\phi_1}{2} P}, R(\phi_2) = e^{-i \frac{\phi_2}{2} P}$, then the Schatten-$\infty$ norm between these two rotation operators satisfies:
  $$
  \|R(\phi_1) - R(\phi_2)\|_{\infty} \leqslant \frac{1}{2} |\phi_1 - \phi_2| .
  $$
\end{lemma}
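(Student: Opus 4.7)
The plan is to exploit the structural fact that for any Pauli matrix $P$ we have $P^2 = I$, which yields the closed form $R(\phi) = \cos(\phi/2)\, I - i \sin(\phi/2)\, P$. Subtracting gives
\begin{equation*}
R(\phi_1) - R(\phi_2) = c\, I + d\, P,
\end{equation*}
where $c = \cos(\phi_1/2) - \cos(\phi_2/2) \in \mathbb{R}$ and $d = -i\bigl[\sin(\phi_1/2) - \sin(\phi_2/2)\bigr] \in i\mathbb{R}$.

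Next I would compute the spectral norm of $cI + dP$. Because $P$ is Hermitian with eigenvalues $\pm 1$, the operator $cI + dP$ has eigenvalues $c \pm d$, so $\|cI + dP\|_\infty = \max\{|c+d|,\, |c-d|\}$. Since $c$ is real and $d$ is purely imaginary, $\operatorname{Re}(\bar{c}d) = 0$, so $|c+d|^2 = |c-d|^2 = c^2 + |d|^2$. This sum is exactly the squared Euclidean distance between the unit-circle points at angles $\phi_1/2$ and $\phi_2/2$, which simplifies via the standard identity $2 - 2\cos\theta = 4\sin^2(\theta/2)$ to
\begin{equation*}
c^2 + |d|^2 = 2 - 2\cos\!\left(\tfrac{\phi_1 - \phi_2}{2}\right) = 4\sin^2\!\left(\tfrac{\phi_1 - \phi_2}{4}\right).
\end{equation*}

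Hence $\|R(\phi_1) - R(\phi_2)\|_\infty = 2\bigl|\sin\bigl((\phi_1-\phi_2)/4\bigr)\bigr|$. Applying the elementary bound $|\sin x| \leq |x|$ with $x = (\phi_1-\phi_2)/4$ yields the claimed inequality
\begin{equation*}
\|R(\phi_1) - R(\phi_2)\|_\infty \leq \tfrac{1}{2}|\phi_1 - \phi_2|.
\end{equation*}
There is no serious obstacle here: the only mild subtlety is tracking that $c$ is real while $d$ is imaginary so that the cross term vanishes and both eigenvalues $c \pm d$ have the same magnitude; everything else is a direct calculation using the Pauli algebra and one trigonometric identity.
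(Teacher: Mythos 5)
Your proposal is correct and follows essentially the same route as the paper: both expand $R(\phi) = \cos(\phi/2)I - i\sin(\phi/2)P$, identify the singular values of the difference as $2\lvert\sin((\phi_1-\phi_2)/4)\rvert$, and conclude via $\lvert\sin x\rvert \leqslant \lvert x\rvert$. Your derivation is in fact slightly more complete, since the paper asserts the singular values without computation while you justify them through the eigenvalues $c \pm d$ of the (normal, being a polynomial in the Hermitian $P$) matrix $cI+dP$ and the vanishing cross term.
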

\begin{proof}
  By the definition of rotation operators, we have:
  $$
  R(\phi_1) - R(\phi_2) = \left[ \cos\left(\frac{\phi_1}{2}\right) - \cos\left(\frac{\phi_2}{2}\right) \right] I - \mathrm{i} \left[ \sin\left(\frac{\phi_1}{2}\right) - \sin\left(\frac{\phi_2}{2}\right) \right] P  .
  $$
  The singular values of this matrix are $2 |\sin(\frac{\phi_1 - \phi_2}{4})|$ with multiplicity $2^{n}$. Using the inequality $|\sin(x)| \leq |x|$ for all real $x$, we obtain:
  \begin{equation*}
    \|R(\phi_1) - R(\phi_2)\|_{\infty} = 2 \sin \left( \frac{\phi_1 - \phi_2}{4} \right) \leqslant \frac{1}{2} | \phi_1 - \phi_2| .
  \end{equation*}
  This completes the proof.
\end{proof}

The following theorem illustrates the relationship between the approximation error of the approximating circuit and the number of binary approximation bits used for the data components.

\begin{theorem}
  \label{athm:approx}
  Consider an $N$-qubit data re-uploading circuit with $L$ encoding layers and $P$ repetitions. Let $h_P(\boldsymbol{x},\boldsymbol{\theta})$ be its output with respect to an observable $H$ whose eigenvalues lie in $[-1,1]$. There exists an approximating circuit as shown in Fig.~\ref{fig:approximating_circuit} with output $\tilde{h}_P(\boldsymbol{x},\boldsymbol{\theta})$ with respect to $H$ on working qubits. When the number of approximation qubits $q$ used for each data satisfies $q \geqslant \left\lceil \log_2 \left( 3PLN / \delta \right)  \right\rceil$, we have:
  $$
  \begin{aligned}
  \left| h_P(\boldsymbol{x},\boldsymbol{\theta}) - \tilde{h}_P(\boldsymbol{x},\boldsymbol{\theta})\right| \leqslant \delta .
  \end{aligned}
  $$
\end{theorem}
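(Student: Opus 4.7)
\medskip

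\noindent\textbf{Proof Proposal.} The plan is to reduce the claim to a channel-distance estimate between two data re-uploading circuits, one encoding the true data $\boldsymbol{x}$ and the other encoding its truncated version $\tilde{\boldsymbol{x}}$, and then to bound that distance gate by gate using the three lemmas already collected in this section.

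First, I would use the identification established just before the theorem statement, namely $\operatorname{Tr}[H\,\tilde{\rho}_P(\boldsymbol{x},\boldsymbol{\theta})] = \operatorname{Tr}[H\,\rho_P(\tilde{\boldsymbol{x}},\boldsymbol{\theta})]$ for observables $H$ supported on the working qubits. This reduces the bound $|h_P(\boldsymbol{x},\boldsymbol{\theta}) - \tilde{h}_P(\boldsymbol{x},\boldsymbol{\theta})| \le \delta$ to bounding $|\operatorname{Tr}[H\,\rho_P(\boldsymbol{x},\boldsymbol{\theta})] - \operatorname{Tr}[H\,\rho_P(\tilde{\boldsymbol{x}},\boldsymbol{\theta})]|$, i.e.\ the change in the re-uploading circuit output when each scalar $x_{l,n,i}$ is replaced by its $(q+3)$-bit binary truncation $\tilde{x}_{l,n,i}$, which satisfies $|x_{l,n,i}-\tilde{x}_{l,n,i}|\le 2^{-q}$.

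Next, since the eigenvalues of $H$ lie in $[-1,1]$ we have $\|H\|_\infty\le 1$, so by H\"older's inequality the output gap is at most $\|\rho_P(\boldsymbol{x},\boldsymbol{\theta})-\rho_P(\tilde{\boldsymbol{x}},\boldsymbol{\theta})\|_1$, which in turn is at most the diamond distance between the full circuit channels $\mathcal{V}(\boldsymbol{x},\boldsymbol{\theta})$ and $\mathcal{V}(\tilde{\boldsymbol{x}},\boldsymbol{\theta})$ evaluated on the fixed input $\rho_0$. Write each channel as an ordered composition of $2PL$ parameterized blocks $U_l(\boldsymbol{\theta}_{p,l})$ (which coincide in both circuits and can be peeled off for free under Lemma~\ref{lem:diamond_distance}) and $PL$ encoding blocks $R_l(\cdot)$, where each $R_l$ further decomposes into $3N$ single-qubit Pauli rotations of the form $R_z$ or $R_y$ acting on disjoint qubit registers. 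Applying Lemma~\ref{lem:diamond_distance} iteratively across all $2PL$ blocks, and then across the $3N$ rotations within each encoding block, telescopes the total diamond distance into a sum of $3PLN$ single-rotation diamond distances.

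Then I would bound each single-rotation term. For a fixed Pauli $P\in\{Y,Z\}$ and parameters $x_{l,n,i}$ and $\tilde{x}_{l,n,i}$, Lemma~\ref{lem:diamond-oo} together with Lemma~\ref{lem:R-R} gives
\[
\bigl\|\mathcal{R}_P(x_{l,n,i})-\mathcal{R}_P(\tilde{x}_{l,n,i})\bigr\|_\diamond
\;\le\; 2\,\|R_P(x_{l,n,i})-R_P(\tilde{x}_{l,n,i})\|_\infty
\;\le\; |x_{l,n,i}-\tilde{x}_{l,n,i}|\;\le\;2^{-q}.
\]
Summing over all $3PLN$ rotations yields $|h_P(\boldsymbol{x},\boldsymbol{\theta})-\tilde{h}_P(\boldsymbol{x},\boldsymbol{\theta})|\le 3PLN\cdot 2^{-q}$. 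Requiring the right-hand side to be at most $\delta$ rearranges to $2^q\ge 3PLN/\delta$, i.e.\ $q\ge \lceil\log_2(3PLN/\delta)\rceil$, which is exactly the hypothesis. There is no genuine obstacle here; the only care-point is book-keeping the tensor structure so that when the diamond-distance triangle inequality is applied to a single-qubit rotation embedded inside an $N+3NL(q+3)$-qubit circuit, the bound of Lemma~\ref{lem:diamond-oo} still applies, which follows because tensoring with identity channels does not increase the diamond norm.
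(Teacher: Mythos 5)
Your proposal is correct and follows essentially the same route as the paper's proof: reduce to the circuit with truncated data via the working-qubit identification, apply H\"older's inequality to pass to the diamond distance, telescope over all $3PLN$ single-qubit rotations using Lemma~\ref{lem:diamond_distance}, and bound each term via Lemmas~\ref{lem:diamond-oo} and~\ref{lem:R-R} to obtain $3PLN\cdot 2^{-q}\le\delta$. No gaps.
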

\begin{proof}
  Define the quantum channel of unitary gates as $\mathcal{U}_{p,l}(\rho) = U_l(\boldsymbol{\theta}_{p,l})\rho U_l(\boldsymbol{\theta}_{p,l})^{\dagger}$. For an $N$-qubit circuit with encoding layers $L$, define the channel of a single data encoding rotation gate as $\mathcal{R}_P(x_{l,n,i}) (\rho) = R_P(x_{l,n,i}) \rho R_P (x_{l,n,i})^{\dagger}$, where $P = Y,Z$. The encoding channel for the $n$-th qubit at layer $l$ is $\mathcal{R}_{l,n}(\boldsymbol{x}_{l,n}) = \mathcal{R}_z(x_{l,n,3}) \mathcal{R}_y(x_{l,n,2}) \mathcal{R}_z(l,n,1)$, and the encoding channel for layer $l$ is $\mathcal{R}_{l}(\boldsymbol{{x}}_{[l]}) = \prod_{n=1}^{N} \mathcal{R}_{n}(\boldsymbol{x}_{l,n})$.
  
  Let the channel of the data re-uploading circuit be $\mathcal{E} = \lprod_{p=1}^{P} \lprod_{l=1}^{L} \mathcal{U}_{p,l} \mathcal{R}_l(\boldsymbol{x }_{[l]}) $, and the channel of the approximating circuit be $\mathcal{\tilde{E} } = \lprod_{p=1}^{P} \lprod_{l=1}^{L} \mathcal{U}_{p,l} \mathcal{R}_l(\tilde{\boldsymbol{x}}_{[l]})$.
  Then $h_P(\boldsymbol{x},\boldsymbol{\theta}) = \operatorname{Tr}\left[H \mathcal{E}(\rho_0)\right],\tilde{h}_P(\boldsymbol{x},\boldsymbol{\theta}) = \operatorname{Tr}\left[H \tilde{\mathcal{E}}(\rho_0)\right]$, therefore
  \begin{align}
  \left| h_P(\boldsymbol{x},\boldsymbol{\theta}) - \tilde{h}_P(\boldsymbol{x},\boldsymbol{\theta}) \right| &= \left| \operatorname{Tr}\left[ H \mathcal{E}(\rho_0)\right]  - \operatorname{Tr}\left[H \mathcal{\tilde{E} }(\rho_0)\right]\right| \nonumber \\
    &\leqslant \|H\|_{\infty} \cdot  \|\mathcal{E} - \mathcal{\tilde{E} }\|_{\diamond}  \label{eq:p1} \\
  & \leqslant  \sum_{p=1}^{P} \sum_{l=1}^{L} \|\mathcal{U}_{p,l} - \mathcal{U}_{p,l}\|_{\diamond} + P \sum_{l=1}^{L} \|\mathcal{R}_{l}(\boldsymbol{x}_{[l]}) - \mathcal{R}_{l}(\tilde{\boldsymbol{x}}_{[l]})\|_{\diamond} \label{eq:p2} \\
  & \leqslant P \sum_{l=1}^{L} \sum_{n=1}^{N} \|\mathcal{R}_{l,n}(\boldsymbol{x}_{l,n}) - \mathcal{R}_{l,n}(\tilde{\boldsymbol{x}}_{l,n})\|_{\diamond}  \label{eq:p3}\\
  & \leqslant P \sum_{l=1}^{L} \sum_{n=1}^{N} \sum_{i=1}^{3} \|\mathcal{R}_{l,n,i}(x_{l,n,i}) - \mathcal{R}_{l,n,i}(\tilde{x} _{l,n,i})\|_{\diamond} \label{eq:p4} \\
  & \leqslant 2P \sum_{l=1}^{L} \sum_{n=1}^{N} \sum_{i=1}^{3} \|R_{l,n,i}(x_{l,n,i}) - R_{l,n,i}(\tilde{x} _{l,n,i})\|_{\infty}\label{eq:p5}  \\
  &\leqslant P \sum_{l=1}^{L} \sum_{n=1}^{N} \sum_{i=1}^{3} |x_{l,n,i} - \tilde{x}_{l,n,i}| \label{eq:p6} \\
  & \leqslant 3NLP |\epsilon_q|,
  \end{align}
  where Eq.~\eqref{eq:p1} follows from H\"{o}lder's inequality, Eqs.~\eqref{eq:p2}, \eqref{eq:p3}, \eqref{eq:p4} follow from Lemma~\ref{lem:diamond_distance}, Eq.~\eqref{eq:p5} follows from Lemma~\ref{lem:diamond-oo}, and Eq.~\eqref{eq:p6} follows from Lemma~\ref{lem:R-R}.
  Therefore, when $|\epsilon_q| \leqslant \frac{\delta}{3PLN}$, i.e., $q \geqslant \left\lceil \log_2 (\frac{3PLN}{\delta}) \right\rceil$, we have:
  $$
  \begin{aligned}
    \left| h_P(\boldsymbol{x},\boldsymbol{\theta}) - \tilde{h}_P(\boldsymbol{x},\boldsymbol{\theta})\right| \leqslant \delta .
  \end{aligned}
  $$
\end{proof}

\subsection{Effect of Repeated Data Uploading}

Next, we will prove that repeated data uploading cannot mitigate the limitations in predictive performance imposed by encoding layers.

\begin{theorem}[Theorem~\ref{thm:main_result} in the main text]
  Consider an $N$-qubit data re-uploading circuit with $L$ encoding layers and $P$ repetitions, which encodes data $\boldsymbol{x} \in \mathbb{R}^{3 NL}$ into the circuit, where each data point follows an independent Gaussian distribution, i.e., $x_{l,n,i} \sim \mathcal{N}(\mu_{l,n,i},\sigma^2_{l,n,i})$ and $\sigma_{l,n,i}^2 \geqslant \sigma^2$. 
  Let $h_P(\boldsymbol{x},\boldsymbol{\theta})$ be its output with respect to an observable $H$ whose eigenvalues lie in $[-1,1]$.
   When $L \geqslant \frac{1}{\sigma^2}[(N+2)\ln 2 + 2 \ln(\frac{1}{\epsilon})]$, we have
  $$
  \begin{aligned}
  \left|\underset{\boldsymbol{x}}{\mathbb{E}}[h_P(\boldsymbol{x},\boldsymbol{\theta})] - h_I \right| \leqslant \epsilon ,
  \end{aligned}
  $$
  where $h_I = \operatorname{Tr}\left[H \rho_{I}\right]$ and $\rho_{I} = \frac{I}{2^{N}}$ is the $N$-qubit maximally mixed state.
\end{theorem}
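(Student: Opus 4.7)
The plan is to use Theorem~\ref{athm:approx} to replace $h_P$ by the approximating-circuit output $\tilde h_P$ with controlled error, and then exploit the structure of the approximating circuit to reduce to the already-proved $P=1$ case in Corollary~\ref{acor:concentration_result}. The crucial structural observation is that in the approximating circuit every gate $V_p = \lprod_{l} U_l(\boldsymbol{\theta}_{p,l})\,\mathrm{C}R$ is \emph{data-independent} (the controlled rotations carry only the fixed angles $2^{-j}$, and the parameterized $U_l$ do not see $\boldsymbol{x}$); all data dependence sits in the initial state of the auxiliary qubits, $|\tilde{\boldsymbol{x}}\rangle\langle\tilde{\boldsymbol{x}}|$. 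This lets me take the expectation over $\boldsymbol{x}$ once at the input and then propagate it through a deterministic unitary, which is exactly the feature the original $P$-repeated circuit lacks.

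Concretely, I would first choose $q$ large enough (e.g.\ $q \geq \lceil\log_2(6PLN/\epsilon)\rceil$) so that Theorem~\ref{athm:approx} yields $|h_P(\boldsymbol{x},\boldsymbol{\theta}) - \tilde h_P(\boldsymbol{x},\boldsymbol{\theta})| \leq \epsilon/2$ uniformly; by the triangle inequality it then suffices to show $|\mathbb{E}_{\boldsymbol{x}}[\tilde h_P] - h_I| \leq \epsilon/2$. Next, data-independence of every $V_p$ gives $\mathbb{E}_{\boldsymbol{x}}[\tilde{\rho}_P] = W\,\mathbb{E}_{\boldsymbol{x}}[\tilde{\rho}_1]\,W^{\dagger}$ with $W = V_P \cdots V_2$ data-independent, so
$$
\mathbb{E}[\tilde h_P] = \operatorname{Tr}\bigl[\tilde H\,\mathbb{E}[\tilde{\rho}_1]\bigr], \qquad \tilde H := W^{\dagger}(H \otimes I_A)W, \quad \|\tilde H\|_{\infty} \leq 1.
$$
Because $h_I = \operatorname{Tr}[(H \otimes I)(\rho_I \otimes \sigma)]$ for an arbitrary auxiliary state $\sigma$, unitary invariance of the trace norm yields
$$
|\mathbb{E}[\tilde h_P] - h_I| \leq 2\,T\bigl(\mathbb{E}[\tilde{\rho}_P],\,\rho_I \otimes \sigma\bigr).
$$

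The main obstacle is bridging this full-system trace-distance bound with Theorem~\ref{thm:concentration}, which controls only the reduced working-qubit state, whereas $\tilde H$ genuinely couples both registers. My plan to close the gap is to exploit the fact that for each fixed $\tilde{\boldsymbol{x}}$ the approximating-circuit state factorizes as $\tilde{\rho}_P(\boldsymbol{x},\boldsymbol{\theta}) = \rho_P(\tilde{\boldsymbol{x}},\boldsymbol{\theta}) \otimes |\tilde{\boldsymbol{x}}\rangle\langle\tilde{\boldsymbol{x}}|$, because the auxiliary qubits act only as controls throughout $V_P \cdots V_1$ and therefore remain in the computational basis. Averaging over $\boldsymbol{x}$ then makes $\mathbb{E}[\tilde{\rho}_P]$ a classical--quantum state; choosing $\sigma = \mathbb{E}_{\boldsymbol{x}}[|\tilde{\boldsymbol{x}}\rangle\langle\tilde{\boldsymbol{x}}|]$, aligning the auxiliary marginals, and invoking Theorem~\ref{thm:concentration} in its trace-distance form via Lemma~\ref{alem:T-D} on the induced working-qubit concentration (together with one more invocation of Theorem~\ref{athm:approx} to swap $\tilde{\boldsymbol{x}}$ for $\boldsymbol{x}$) controls the remaining discrepancy. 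Absorbing the approximation slack into the $\epsilon$ budget delivers the theorem whenever $L \geq \frac{1}{\sigma^2}\bigl[(N+2)\ln 2 + 2\ln(1/\epsilon)\bigr]$.
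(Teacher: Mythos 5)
Your overall strategy --- pass to the approximating circuit via Theorem~\ref{athm:approx}, exploit the data-independence of $V_2,\dots,V_P$ to fold the later repetitions into an effective observable, and reduce to the $P=1$ case --- is exactly the paper's. The paper likewise sets $H' = V_2^{\dagger}\cdots V_P^{\dagger}\,H\,V_P\cdots V_2$, notes that $\operatorname{Tr}[H'\rho_I]=\operatorname{Tr}[H\rho_I]=h_I$ by unitary invariance of the maximally mixed state, applies Corollary~\ref{acor:concentration_result} to the one-repetition output measured with the \emph{fixed} observable $H'$, and pays an extra $2\delta$ of approximation error that is sent to zero as $q\to\infty$.

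The gap is in your final bridging step. The inequality $|\mathbb{E}[\tilde h_P]-h_I|\leqslant 2\,T\bigl(\mathbb{E}_{\boldsymbol{x}}[\tilde\rho_P],\,\rho_I\otimes\sigma\bigr)$ is correct but vacuous: since the auxiliary register remains in the computational basis, $\mathbb{E}_{\boldsymbol{x}}[\tilde\rho_P]=\sum_{\tilde{\boldsymbol{x}}}p(\tilde{\boldsymbol{x}})\,\rho_P(\tilde{\boldsymbol{x}},\boldsymbol{\theta})\otimes\ket{\tilde{\boldsymbol{x}}}\bra{\tilde{\boldsymbol{x}}}$ is block-diagonal in $\tilde{\boldsymbol{x}}$, so for your matched choice of $\sigma$ the trace distance decomposes as $\sum_{\tilde{\boldsymbol{x}}}p(\tilde{\boldsymbol{x}})\,T\bigl(\rho_P(\tilde{\boldsymbol{x}},\boldsymbol{\theta}),\rho_I\bigr)$, i.e.\ the \emph{average of the conditional distances}. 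Each conditional working-register state is pure, so every term equals $1-2^{-N}$ and your upper bound is $2-2^{1-N}\geqslant 1$ for all $L$. Theorem~\ref{thm:concentration} controls the distance of the \emph{averaged} working state to $\rho_I$, and convexity of the trace norm runs in the wrong direction here, so no combination of Theorem~\ref{thm:concentration} and Lemma~\ref{alem:T-D} can shrink this quantity: the joint averaged state retains precisely the classical working--auxiliary correlations that $\rho_I\otimes\sigma$ discards. You have put your finger on the genuine subtlety (the pulled-back observable couples both registers, and the paper's own invocation of Corollary~\ref{acor:concentration_result} for $H'$ is stated tersely on this point), but the repair must remain at the level of the scalar expectation value of the single fixed observable $H'$, as the paper does; upgrading to a joint trace-distance statement against a product state is quantitatively false.
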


\begin{proof}
  Let $\boldsymbol{\theta}_{[p]}$ denote the parameters used in the $p$-th repetition of the data re-uploading circuit, with $V_p(\boldsymbol{\theta}_{[p]})$ being the corresponding quantum gate in the approximating circuit used to approximate the quantum gate in the $p$-th repetition of the data re-uploading circuit.

  We define $\boldsymbol{\theta}_{[1:P]}$ as the complete set of parameters across all $P$ repetitions, and $\tilde{\rho}_{P}(\boldsymbol{x},\boldsymbol{\theta}_{[1:P]})$ as the final quantum state of the approximating circuit after $P$ repetitions of data re-uploading.
   Let $\tilde{\rho}_{1}(\boldsymbol{x},\boldsymbol{\theta}_{[1]})$ be the quantum state of approximating circuit corresponding to data re-uploading without repetition ($P=1$). Their measurement results with respect to an observable $H$ on working qubits are:
$$
\begin{aligned}
\tilde{h}_1(\boldsymbol{x},\boldsymbol{\theta}) &= \operatorname{Tr}\left[H V_1(\boldsymbol{\theta}_{[1]}) \tilde{\rho} _{0}(\boldsymbol{x}) V_1(\boldsymbol{\theta}_{[1]})^{\dagger}\right] \\
& = \operatorname{Tr}\left[H \tilde{\rho}_{1}(\boldsymbol{x },\boldsymbol{\theta}_{[1]})\right] , \\
\tilde{h}_P(\boldsymbol{x},\boldsymbol{\theta}) &= \operatorname{Tr}\left[HV_P(\boldsymbol{\theta}_{[P]}) \cdots V_1 (\boldsymbol{\theta}_{[1]}) \tilde{\rho} _{0}(\boldsymbol{x}) V_1(\boldsymbol{\theta}_{[1]})^{\dagger} \cdots V_P(\boldsymbol{\theta}_{[P]})^{\dagger}\right] \\
&= \operatorname{Tr}\left[H \tilde{\rho}_{P}(\boldsymbol{x},\boldsymbol{\theta}_{[1:P]})\right] .
\end{aligned}
$$
Therefore,
$$
\begin{aligned}
\tilde{h}_{P}(\boldsymbol{x},\boldsymbol{\theta}) &= \operatorname{Tr}\left[HV_P(\boldsymbol{\theta}_{[P]}) \cdots V_2 (\boldsymbol{\theta}_{[2]}) \tilde{\rho} _{1}(\boldsymbol{x},\boldsymbol{\theta}_{[1]}) V_2(\boldsymbol{\theta}_{[2]})^{\dagger} \cdots V_P(\boldsymbol{\theta}_{[P]})^{\dagger}\right] \\
&= \operatorname{Tr}\left[V_2(\boldsymbol{\theta}_{[2]})^{\dagger} \cdots V_P(\boldsymbol{\theta}_{[P]})^{\dagger} H V_P(\boldsymbol{\theta}_{[P]}) \cdots V_2 (\boldsymbol{\theta}_{[2]})  \tilde{\rho} _{1}(\boldsymbol{x },\boldsymbol{\theta}_{[1]})  \right] \\
&= \operatorname{Tr}\left[H'(\boldsymbol{\theta}_{[2:P]}) \tilde{\rho}_{1}(\boldsymbol{x},\boldsymbol{\theta}_{[1]}) \right], \\
\end{aligned}
$$
where
$$
\begin{aligned}
H'(\boldsymbol{\theta}_{[2:P]}) = V_2(\boldsymbol{\theta}_{[2]})^{\dagger} \cdots V_P(\boldsymbol{\theta}_{[P]})^{\dagger} H V_P(\boldsymbol{\theta}_{[P]}) \cdots V_2 (\boldsymbol{\theta}_{[2]}) .
\end{aligned}
$$
Since $V(\boldsymbol{\theta}_{[p]})$ are all unitary matrices, $H'(\boldsymbol{\theta}_{[2:P]})$ remains an observable with eigenvalues in $[-1,1]$.

Let the outputs of the original data re-uploading circuit without repetition and its approximating circuit with respect to the new observable $H' := H'(\boldsymbol{\theta}_{[2:P]})$ be:
$$
\begin{aligned}
  h_1(\boldsymbol{x},\boldsymbol{\theta}) &= \operatorname{Tr}\left[H'\rho_{1}(\boldsymbol{x},\boldsymbol{\theta}_{[1]})\right], \\
\tilde{h}'_1(\boldsymbol{x},\boldsymbol{\theta}) &= \operatorname{Tr}\left[H' \tilde{\rho}_{1}(\boldsymbol{x},\boldsymbol{\theta}_{[1]}) \right] .\\
\end{aligned}
$$
The output of the new observable $H'$ with respect to the $N$-qubit maximally mixed state $\rho_{I} = \frac{I}{2^{N}}$ is:
$$
\begin{aligned}
h'_I = \operatorname{Tr}\left[H' \rho_{I}\right]  &= \operatorname{Tr}\left[V_2(\boldsymbol{\theta}_{[2]})^{\dagger} \cdots V_P(\boldsymbol{\theta}_{[P]})^{\dagger} H V_P(\boldsymbol{\theta}_{[P]}) \cdots V_2 (\boldsymbol{\theta}_{[2]}) \rho_{I}\right] \\
&= \operatorname{Tr}\left[H V_P(\boldsymbol{\theta}_{[P]}) \cdots V_2 (\boldsymbol{\theta}_{[2]}) \rho_{I} V_2(\boldsymbol{\theta}_{[2]})^{\dagger} \cdots V_P(\boldsymbol{\theta}_{[P]})^{\dagger} \right] \\
&= \operatorname{Tr}\left[H \rho_{I}\right] = h_I .
\end{aligned}
$$
Therefore, combining Theorem~\ref{athm:approx}, when $q \geqslant \left\lceil \log_2(\frac{3LN}{\delta}) \right\rceil$, $|h_1(\boldsymbol{x},\boldsymbol{\theta})| - \tilde{h}'_1(\boldsymbol{x},\boldsymbol{\theta})| \leqslant \delta$, and according to Corollary~\ref{acor:concentration_result}, when $L \geqslant \frac{1}{\sigma^2}[(N+2)\ln 2 + 2 \ln(\frac{1}{\epsilon})]$, we have$|\mathbb{E}_{\boldsymbol{x}}[h'_1(\boldsymbol{x},\boldsymbol{\theta})] - h_{I}| \leqslant \epsilon$, so

$$
\begin{aligned}
\left| \underset{\boldsymbol{x}}{\mathbb{E}}[\tilde{h}_1'(\boldsymbol{x},\boldsymbol{\theta})] - h_I \right| &\leqslant \left| \underset{\boldsymbol{x} }{\mathbb{E}}[\tilde{h}'_1(\boldsymbol{x},\boldsymbol{\theta})] - \underset{\boldsymbol{x} }{\mathbb{E}}[h'_1(\boldsymbol{x},\boldsymbol{\theta})] \right|  + \left| \underset{\boldsymbol{x}}{\mathbb{E}}[h'_1(\boldsymbol{x},\boldsymbol{\theta})] - h_{I}| \right| \\
& \leqslant \epsilon + \delta,
\end{aligned}
$$
and
$$
\begin{aligned}
\underset{\boldsymbol{x}}{\mathbb{E}}[\tilde{h}_1'(\boldsymbol{x},\boldsymbol{\theta})] &= \underset{\boldsymbol{x}}{\mathbb{E}}\{\operatorname{Tr}\left[H' \tilde{\rho}_{1}(\boldsymbol{x},\boldsymbol{\theta})\right] \}\\
&=  \underset{\boldsymbol{x}}{\mathbb{E}}\left\{\operatorname{Tr}\left[H V_P(\boldsymbol{\theta}_{[P]}) \cdots V_1(\boldsymbol{\theta}_{[1]}) \tilde{\rho}_{0}(\boldsymbol{x}) V_1 (\boldsymbol{\theta}_{[1]})^{\dagger} \cdots V_P(\boldsymbol{\theta}_{[P]})^{\dagger}\right] \right\}\\
&= \underset{\boldsymbol{x}}{\mathbb{E}}[\tilde{h}_P(\boldsymbol{x},\boldsymbol{\theta})].  \\
\end{aligned}
$$

Therefore, $|\mathbb{E}_{\boldsymbol{x}}[\tilde{h}_P(\boldsymbol{x},\boldsymbol{\theta})] - h_I| \leqslant \epsilon + \delta$, and according to Theorem~\ref{athm:approx}, $|h_P(\boldsymbol{x},\boldsymbol{\theta}) - \tilde{h}_{P}(\boldsymbol{x},\boldsymbol{\theta})| \leqslant \delta$, so
$$
\begin{aligned}
\left|\underset{\boldsymbol{x}}{\mathbb{E}}[h_P(\boldsymbol{x},\boldsymbol{\theta})] - h_I\right| \leqslant \epsilon + 2 \delta .
\end{aligned}
$$

Since $\delta$ can be arbitrarily small as $q \to \infty$, we have $|\mathbb{E}_{\boldsymbol{x}}[h_P(\boldsymbol{x},\boldsymbol{\theta})] - h_I| \leqslant \epsilon$.
\end{proof}

\renewcommand{\theequation}{E.\arabic{equation}}
\setcounter{equation}{0}

\section{Proof of Bounds on the Prediction Error}

\subsection{Classification Problems}

For binary classification problems, consider a training set $S$ consisting of samples $(\boldsymbol{x},y)$ where $\boldsymbol{x} \sim \mathcal{D}_{\mathcal{X}}$ and class labels $\mathcal{Y} = \{0,1\}$. The hypothesis $h_S$ generated by the data re-uploading model trained on dataset $S$ is defined as:
\begin{equation*}
\begin{aligned}
    h_S(\boldsymbol{x}) = \operatorname{Tr}\left[H_{y(\boldsymbol{x})} V(\boldsymbol{x},\boldsymbol{\theta}^{*}) \rho_{0} V(\boldsymbol{x},\boldsymbol{\theta}^{*})^{\dagger}  \right],
\end{aligned}
\end{equation*}
where $\boldsymbol{\theta}^{*}$ are the parameters chosen during training, for $y(\boldsymbol{x}) \in \{0,1\}, H_{0} = \ket{0}\bra{0}, H_1 = \ket{1}\bra{1}$. $h_S(\boldsymbol{x}) \in [0,1]$ represents the probability that the model predicts feature $\boldsymbol{x}$ belongs to the correct class. The prediction error is defined as:
\begin{equation}
  \label{eq:RC}
\begin{aligned}
  R^{C}(h_S) = \underset{\boldsymbol{x}\sim \mathcal{D}_{\mathcal{X}}}{\mathbb{E}}[|1 - h_S(\boldsymbol{x})|].
\end{aligned}
\end{equation}

\begin{proposition}[Proposition~\ref{prop:classification} in the main text]
  Consider binary classification tasks where the input features $\boldsymbol{x}$ are drawn from distribution $\mathcal{D}_{\mathcal{X}}$ and the class labels belong to $\mathcal{Y} = \{0,1\}$. Using the data re-uploading model with corresponding observables $H_0 = \ket{0}\bra{0}$ and $H_1 = \ket{1}\bra{1}$, the hypothesis generated by the data re-uploading model trained on dataset $S$ is given by $h_S(\boldsymbol{x}) = \operatorname{Tr}\left[H_{y(\boldsymbol{x})} V(\boldsymbol{x},\boldsymbol{\theta}^{*}) \rho_{0} V(\boldsymbol{x},\boldsymbol{\theta}^{*})^{\dagger}\right]$. Here, $\boldsymbol{\theta}^{*}$ represents the parameters chosen during training, and $y(\boldsymbol{x})$ denotes the label associated with feature $\boldsymbol{x}$. The binary classification prediction error is defined in Eq.~\eqref{eq:RC}. When the expectation of $h_S(\boldsymbol{x})$ over $\mathcal{D}_{\mathcal{X}}$ satisfies $\left| \mathbb{E}_{\boldsymbol{x} \sim \mathcal{D}_{\mathcal{X}}}[h_S(\boldsymbol{x})] - h_{I} \right| \leqslant \epsilon$, the prediction error of this hypothesis under distribution $\mathcal{D}_{\mathcal{X}}$ is bounded by:
  $$
  \begin{aligned}
    \left|R^{C}(h_S)-  \frac{1}{2}\right| \leqslant \epsilon .
  \end{aligned}
  $$
\end{proposition}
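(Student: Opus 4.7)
The plan is direct and essentially requires only bookkeeping once the right quantities are identified. First I would observe that, since $H_0 = \ket{0}\bra{0}$ and $H_1 = \ket{1}\bra{1}$ are single-qubit projectors, their expectation values on any quantum state lie in $[0,1]$. Therefore $h_S(\boldsymbol{x}) \in [0,1]$ for every $\boldsymbol{x}$, so the absolute value in the definition of $R^{C}(h_S)$ can be dropped, giving
\[
R^{C}(h_S) = \underset{\boldsymbol{x} \sim \mathcal{D}_{\mathcal{X}}}{\mathbb{E}}[1 - h_S(\boldsymbol{x})] = 1 - \underset{\boldsymbol{x} \sim \mathcal{D}_{\mathcal{X}}}{\mathbb{E}}[h_S(\boldsymbol{x})].
\]

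Next I would compute $h_I$. Both $H_0$ and $H_1$ act only on the first qubit, with $\operatorname{Tr}[H_0] = \operatorname{Tr}[H_1] = 1$. Writing $\rho_I = I/2^N$ and using $\operatorname{Tr}[(H_y \otimes I_{N-1}) \cdot I/2^N] = \operatorname{Tr}[H_y] \cdot 2^{N-1}/2^N$, we get $h_I = 1/2$ regardless of the label $y(\boldsymbol{x})$. This is the key identification: the maximally mixed state assigns equal probability to the two classes.

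Combining the two steps,
\[
R^{C}(h_S) - \tfrac{1}{2} = \tfrac{1}{2} - \underset{\boldsymbol{x}}{\mathbb{E}}[h_S(\boldsymbol{x})] = h_I - \underset{\boldsymbol{x}}{\mathbb{E}}[h_S(\boldsymbol{x})],
\]
so taking absolute values and applying the hypothesis $|\mathbb{E}_{\boldsymbol{x}}[h_S(\boldsymbol{x})] - h_I| \leqslant \epsilon$ (which, by Theorem~\ref{thm:main_result}, holds under the stated conditions on $L$) immediately yields $\left| R^{C}(h_S) - \tfrac{1}{2} \right| \leqslant \epsilon$.

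There is no real obstacle in this argument; the only subtle point worth stating explicitly is that the observable $H_{y(\boldsymbol{x})}$ depends on $\boldsymbol{x}$ through its label, but since $\operatorname{Tr}[H_y \rho_I] = 1/2$ for both $y \in \{0,1\}$, the reference value $h_I$ is well defined and equal to $1/2$ uniformly. Everything else is a one-line manipulation.
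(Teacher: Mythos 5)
Your proof is correct and follows essentially the same route as the paper's: compute $h_I = \operatorname{Tr}[H_{y(\boldsymbol{x})}\rho_I] = \tfrac12$ for either label, rewrite $R^{C}(h_S) - \tfrac12$ as $h_I - \mathbb{E}_{\boldsymbol{x}}[h_S(\boldsymbol{x})]$, and apply the hypothesis. You are in fact slightly more explicit than the paper in justifying that the absolute value inside $R^{C}$ can be dropped (since $h_S(\boldsymbol{x})\in[0,1]$) and that $h_I$ is well defined despite the label-dependence of the observable.
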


\begin{proof}
  Since for $H_0  = \ket{0}\bra{0},H_1 = \ket{1}\bra{1}$, we have $h_I = \operatorname{Tr}\left[H_{y(\boldsymbol{x})}  \rho_{I}\right] = \frac{1}{2}$, therefore
  $$
  \begin{aligned}
   \left| R^{C}(h_S) - \frac{1}{2} \right| 
  &=  \left| \underset{\boldsymbol{x} \sim \mathcal{D}_{\mathcal{X}}}{\mathbb{E}}[1 - h_S(\boldsymbol{x})] - \frac{1}{2} \right| \\
  & = \left| \frac{1}{2} - \underset{\boldsymbol{x} \sim \mathcal{D}_{\mathcal{X}}}{\mathbb{E}}[ h_S(\boldsymbol{x})] \right|  \\
  &= \left| h_I - \underset{\boldsymbol{x} \sim \mathcal{D}_{\mathcal{X}}}{\mathbb{E}}[ h_S(\boldsymbol{x})] \right|    \leqslant \epsilon .
  \end{aligned}
  $$
\end{proof}

\subsection{Regression Problems}

For multidimensional functions $f(\boldsymbol{x})$ where $\boldsymbol{x} \sim \mathcal{D}_{\mathcal{X}}$ and function values are bounded in $[-1,1]$, we consider the $N$-qubit data re-uploading model with the observable $H_L = \bigotimes_{n=1}^{N} Z_n$. The corresponding hypothesis function is:
$$
\begin{aligned}
h_S(\boldsymbol{x}) = \operatorname{Tr}\left[H_L V(\boldsymbol{x},\boldsymbol{\theta}^{*}) \rho_{0} V^{\dagger}(\boldsymbol{x},\boldsymbol{\theta}^{*})\right] ,
\end{aligned}
$$
where $\boldsymbol{\theta}^{*}$ are parameters chosen during training. We consider the prediction error:
\begin{equation}
  \label{eq:RL}
\begin{aligned}
  R^{L}(h_S) = \underset{\boldsymbol{x} \sim \mathcal{D}_{\mathcal{X}}}{\mathbb{E}}[|f(\boldsymbol{x}) - h_S(\boldsymbol{x})|] .
\end{aligned}
\end{equation}
This prediction error measures the difference between true labels $f(\boldsymbol{x})$ and model outputs $h_S(\boldsymbol{x})$.

\begin{proposition}[Proposition~\ref{prop:classification} in the main text]
  \label{prop:regression}
  Consider regression tasks where the input features $\boldsymbol{x}$ are drawn from distribution $\mathcal{D}_{\mathcal{X}}$ and the function values are bounded in $[-1,1]$. Using the data re-uploading model with the observable $H_L = \bigotimes_{n=1}^{N} Z_n$ having eigenvalues in $[-1,1]$, the hypothesis generated by the data re-uploading model trained on dataset $S$ is $h_S(\boldsymbol{x}) = \operatorname{Tr}\left[H_L V(\boldsymbol{x},\boldsymbol{\theta}^{*}) \rho_{0} V^{\dagger}(\boldsymbol{x},\boldsymbol{\theta}^{*})\right]$, where $\boldsymbol{\theta}^{*}$ are parameters chosen during training. The regression prediction error is defined in Eq.~\eqref{eq:RL}. When the expectation of $h_S(\boldsymbol{x})$ satisfies $\left| \mathbb{E}_{\boldsymbol{x} \sim \mathcal{D}_{\mathcal{X}}}[h_S(\boldsymbol{x})] - h_{I} \right| \leqslant \epsilon$, the prediction error of this hypothesis under distribution $\mathcal{D}_{\mathcal{X}}$ satisfies:
$$
\begin{aligned}
R^{L}(h_S) \geqslant \left| \underset{\boldsymbol{x} \sim \mathcal{D}_{\mathcal{X}}}{\mathbb{E}}[f(\boldsymbol{x})] \right| - \epsilon .
\end{aligned}
$$
\end{proposition}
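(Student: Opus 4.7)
The plan is to combine three elementary ingredients: (i) the identification $h_I=0$ for the regression observable, (ii) Jensen's inequality applied to the convex function $|\cdot|$ to pull the absolute value outside the expectation, and (iii) the reverse triangle inequality. The hypothesis $\left|\mathbb{E}_{\boldsymbol{x}\sim\mathcal{D}_{\mathcal{X}}}[h_S(\boldsymbol{x})]-h_I\right|\leqslant\epsilon$ is inherited from Theorem~\ref{thm:main_result} via the $L$-layer regime already assumed.

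First, I would compute $h_I$ explicitly. Since $H_L=\bigotimes_{n=1}^{N}Z_n$ and $\rho_I=I/2^N=\bigotimes_{n=1}^{N}(I/2)$, the trace factorises as $h_I=\prod_{n=1}^{N}\operatorname{Tr}[Z_n\,(I/2)]=0$ because each single-qubit Pauli $Z$ is traceless. Hence the hypothesis on the expected output simplifies to $\left|\mathbb{E}_{\boldsymbol{x}}[h_S(\boldsymbol{x})]\right|\leqslant\epsilon$.

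Next, I would lower-bound $R^L(h_S)$ by pulling $\mathbb{E}$ inside $|\cdot|$. By Jensen's inequality,
\[
R^L(h_S)=\underset{\boldsymbol{x}\sim\mathcal{D}_{\mathcal{X}}}{\mathbb{E}}\bigl[|f(\boldsymbol{x})-h_S(\boldsymbol{x})|\bigr]\geqslant\left|\underset{\boldsymbol{x}\sim\mathcal{D}_{\mathcal{X}}}{\mathbb{E}}[f(\boldsymbol{x})-h_S(\boldsymbol{x})]\right|=\left|\underset{\boldsymbol{x}\sim\mathcal{D}_{\mathcal{X}}}{\mathbb{E}}[f(\boldsymbol{x})]-\underset{\boldsymbol{x}\sim\mathcal{D}_{\mathcal{X}}}{\mathbb{E}}[h_S(\boldsymbol{x})]\right|.
\]
Then applying the reverse triangle inequality $|a-b|\geqslant|a|-|b|$ together with the hypothesis $|\mathbb{E}_{\boldsymbol{x}}[h_S(\boldsymbol{x})]|\leqslant\epsilon$ yields the claimed bound $R^L(h_S)\geqslant\left|\mathbb{E}_{\boldsymbol{x}\sim\mathcal{D}_{\mathcal{X}}}[f(\boldsymbol{x})]\right|-\epsilon$.

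There is really no substantial obstacle here; the argument is a one-line chain of Jensen plus reverse triangle once $h_I=0$ is recognised. The only point worth flagging in the writeup is that the lower bound is informative only when $\left|\mathbb{E}_{\boldsymbol{x}}[f(\boldsymbol{x})]\right|>\epsilon$, i.e.\ when the target function is not itself mean-zero over $\mathcal{D}_{\mathcal{X}}$; in that regime the bound shows the regression error is controlled from below by a distribution-dependent quantity that is completely independent of the trained parameters $\boldsymbol{\theta}^{*}$, which is exactly the ``random-guessing'' phenomenon the proposition is meant to capture.
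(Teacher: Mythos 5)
Your proposal is correct and follows essentially the same route as the paper: establish $h_I=\operatorname{Tr}[H_L\rho_I]=0$, apply Jensen's inequality to lower-bound the expected absolute error by the absolute difference of expectations, and finish with the reverse triangle inequality together with the hypothesis $|\mathbb{E}_{\boldsymbol{x}}[h_S(\boldsymbol{x})]|\leqslant\epsilon$. No gaps; your remark on when the bound is informative is a fair observation but not needed for the proof itself.
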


\begin{proof}
Since for the observable $H_L = \bigotimes_{n=1}^{N} Z_n$, we have $h_I = \operatorname{Tr}\left[H_L \rho_{I}\right] = 0$, therefore
$$
\begin{aligned}
R^{L}(h_S) &= \underset{\boldsymbol{x} \sim \mathcal{D}_{\mathcal{X}}}{\mathbb{E}}[|h_S(\boldsymbol{x})-f(\boldsymbol{x})|] \\
&\geqslant \left| \underset{\boldsymbol{x} \sim \mathcal{D}_{\mathcal{X}}}{\mathbb{E}}[h_S(\boldsymbol{x})] - \underset{\boldsymbol{x} \sim \mathcal{D}_{\mathcal{X}}}{\mathbb{E}}[f(\boldsymbol{x})] \right| \\
& \geqslant \left| \underset{\boldsymbol{x} \sim \mathcal{D}_{\mathcal{X}}}{\mathbb{E}}[f(\boldsymbol{x})] \right| - \epsilon .
\end{aligned}
$$
\end{proof}

\renewcommand{\theequation}{F.\arabic{equation}}
\setcounter{equation}{0}
\section{Proof of Difference Between Average and Expected Performance}
\label{asec:ave-exp-diff}

In this appendix, we will explain the reason why we can use average performance as a surrogate of the expected performance.

\subsection{Useful Lemma}

\begin{lemma}
  \label{alem:experimental-divergence}
  For any $N$-qubit quantum state $\rho$, the following inequality holds:
  $$
  \begin{aligned}
  \frac{1}{2^{N}} \leqslant \operatorname{Tr}\left[\rho^2\right] \leqslant 1 .
  \end{aligned}
  $$
\end{lemma}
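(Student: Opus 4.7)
The plan is to prove both bounds by invoking the spectral decomposition of $\rho$ and then reducing to elementary inequalities on the eigenvalues. Since $\rho$ is an $N$-qubit density matrix, it is Hermitian, positive semi-definite, and satisfies $\operatorname{Tr}[\rho]=1$. Thus I would write $\rho = \sum_{i=1}^{2^N} \lambda_i \ket{\psi_i}\bra{\psi_i}$ with $\lambda_i \geq 0$ and $\sum_i \lambda_i = 1$, which immediately gives $\operatorname{Tr}[\rho^2] = \sum_i \lambda_i^2$. Both bounds then become statements about a probability vector on $2^N$ outcomes.

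For the upper bound, I would observe that each eigenvalue satisfies $\lambda_i \leq \sum_j \lambda_j = 1$, so $\lambda_i^2 \leq \lambda_i$. Summing over $i$ yields $\operatorname{Tr}[\rho^2] = \sum_i \lambda_i^2 \leq \sum_i \lambda_i = 1$, with equality precisely when $\rho$ is a pure state.

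For the lower bound, I would apply the Cauchy-Schwarz inequality to the vectors $(\lambda_1,\ldots,\lambda_{2^N})$ and $(1,\ldots,1)$ in $\mathbb{R}^{2^N}$ to obtain
$$
1 = \left( \sum_{i=1}^{2^N} \lambda_i \right)^2 \leq 2^N \sum_{i=1}^{2^N} \lambda_i^2 = 2^N \operatorname{Tr}[\rho^2],
$$
which rearranges to $\operatorname{Tr}[\rho^2] \geq 1/2^N$, with equality iff all $\lambda_i = 1/2^N$, i.e., $\rho = \rho_I$.

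There is no substantive obstacle here; the statement is the standard purity bound and both inequalities are one-line consequences of the spectral decomposition. The only care needed is in explicitly noting that the Hilbert space dimension is $2^N$ (so the factor $2^N$ in Cauchy-Schwarz is correct) and that the eigenvalues are real and nonnegative so the squaring/majorization step is legitimate.
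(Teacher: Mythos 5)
Your proof is correct, and for the lower bound it takes a genuinely different (and in fact sounder) route than the paper. The paper also reduces to the eigenvalues $\lambda_1,\dots,\lambda_{2^N}$ with $\sum_i \lambda_i = 1$, but then invokes the AM--GM inequality to write $\sum_i \lambda_i^2 \geqslant 2^N \sqrt[2^N]{\lambda_1\cdots\lambda_{2^N}}$ and concludes by noting equality at the maximally mixed state. As written, that step does not actually establish the bound: the geometric mean of the eigenvalues can be zero (whenever $\rho$ is not full rank), so the AM--GM right-hand side gives no useful lower bound on the purity. Your Cauchy--Schwarz argument, $1 = \left(\sum_i \lambda_i\right)^2 \leqslant 2^N \sum_i \lambda_i^2$, is the standard and correct way to get $\operatorname{Tr}[\rho^2] \geqslant 2^{-N}$, and it also cleanly identifies the equality case $\rho = \rho_I$. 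For the upper bound, your argument via $\lambda_i^2 \leqslant \lambda_i$ is a small but worthwhile improvement over the paper's appeal to the "definition of purity," which is essentially an assertion rather than a proof.
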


\begin{proof}
  Firstly, According to the definition of purity, $\operatorname{Tr}\left[\rho^2\right] \leqslant 1$ is obvious.

  Then, let $\lambda_1,\cdots,\lambda_{2^{N}}$ be the $2^{N}$ eigenvalues (with multiplicity) of the $N$-qubit quantum state. All quantum states satisfy the constraint $\sum_{i=1}^{2^{N}} \lambda_i = 1$. According to the AM-GM inequality:
  $$
  \begin{aligned}
  \operatorname{Tr}\left[\rho^2\right] = \sum_{i=1}^{2^{N}}\lambda_i ^2 \geqslant 2^{N} \cdot  \sqrt[2^{N}]{\lambda_1 \cdots \lambda_{2^{N}}} .
  \end{aligned}
  $$
  The equality holds if and only if $\lambda_1 = \cdots = \lambda_{2^{N}}$, which corresponds to the $N$-qubit maximally mixed state where $\lambda_i = \frac{1}{2^{N}}, i \in [2^{N}]$, and $\operatorname{Tr}\left[\rho^2\right] = 2^{N} \cdot (1 / 2^{N})^2 = 1/2^{N}$.
\end{proof}

As the definition of Petz-R\'{e}nyi-2 divergence in Eq.~\eqref{aeq:D2}, for any $N$-qubit quantum state $\rho$, we have:
$$
\begin{aligned}
D_{2}(\rho\| \rho_{I}) = \log_2 \left( \operatorname{Tr}\left[\rho^2 \rho_{I}^{-1}\right]\right) = \log_2 \left( 2^{N} \operatorname{Tr}\left[\rho^2 \right]\right)  \leqslant N .
\end{aligned}
$$
Thus, for $N$-qubit quantum states, the Petz-R\'{e}nyi-2 divergence between any quantum state and the maximally mixed state is upper bounded by $N$.

\subsection{Difference Between Average and Expected Petz-R\'{e}nyi-2 Divergence}

The following lemma tells us that in order to make the Petz-R\'{e}nyi-2 divergence between the average quantum state and maximally mixed state close to the Petz-R\'{e}nyi-2 divergence between the expected quantum state and maximally mixed state, we need to make the number of features $M$ large enough.

\begin{lemma}
  Consider a dataset $S = \{(\boldsymbol{x}^{(m)}, y^{(m)})\}_{m=1}^{M}$ where the features $\boldsymbol{x}^{(m)}$ are independently sampled from a distribution $\mathcal{D}_{\mathcal{X}}$. Let $\rho(\boldsymbol{x}, \boldsymbol{\theta})$ represent the pure quantum state encoded by a data re-uploading circuit with parameters $\boldsymbol{\theta}$, where the feature $\boldsymbol{x}$ is encoded into the quantum state, and $\boldsymbol{\theta}$ are independent of the features in $S$. The average state is given by $\overline{\rho}_M \coloneqq \frac{1}{M} \sum_{m=1}^{M} \rho(\boldsymbol{x}^{(m)},\boldsymbol{\theta})$, while the expected state is $\mathbb{E}[\rho] \coloneqq \mathbb{E}_{\boldsymbol{x} \sim \mathcal{D}_{\mathcal{X}}}[\rho(\boldsymbol{x}, \boldsymbol{\theta})]$. For any $\epsilon \in (0,1)$, as $M \to \infty$, we have:
  $$
  \begin{aligned}
    | D_2(\overline{\rho}_M || \rho_{I}) - D_2(\mathbb{E}[\rho]||\rho_{I}) | \leqslant \epsilon, 
  \end{aligned}
  $$
  almost surely.
\end{lemma}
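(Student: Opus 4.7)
The plan is to reduce the claimed divergence convergence to an almost-sure scalar convergence by exploiting the identity $D_2(\rho \| \rho_I) = \log_2\bigl(2^N \operatorname{Tr}[\rho^2]\bigr)$, then to establish convergence of $\operatorname{Tr}[\overline{\rho}_M^2]$ to $\operatorname{Tr}[\mathbb{E}[\rho]^2]$ via the classical strong law of large numbers, and finally to transport the convergence through the logarithm using a positive lower bound on the limiting purity.

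First I would work entry-wise in a fixed basis of $\mathbb{C}^{2^N \times 2^N}$ (for example the Pauli basis used in App.~C, or simply the standard matrix units). Each entry of $\overline{\rho}_M$ is an average of $M$ i.i.d.\ complex scalars of modulus at most $1$ (since the corresponding entries of $\rho(\boldsymbol{x}^{(m)},\boldsymbol{\theta})$ are bounded, and $\boldsymbol{\theta}$ is fixed independently of the data), so by Kolmogorov's SLLN every entry converges almost surely to the matching entry of $\mathbb{E}[\rho]$. Since there are only finitely many entries, the finite intersection of full-measure events is still full measure, giving $\overline{\rho}_M \to \mathbb{E}[\rho]$ almost surely in any norm on this finite-dimensional space. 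The map $\rho \mapsto \operatorname{Tr}[\rho^2]$ is a polynomial in the entries and therefore continuous, which yields
\[
\operatorname{Tr}[\overline{\rho}_M^2] \;\xrightarrow{\text{a.s.}}\; \operatorname{Tr}\bigl[\mathbb{E}[\rho]^2\bigr].
\]

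Next, Lemma~\ref{alem:experimental-divergence} applied to the valid quantum state $\mathbb{E}[\rho]$ gives $\operatorname{Tr}[\mathbb{E}[\rho]^2] \geq 1/2^N > 0$, so the function $t \mapsto \log_2(2^N t)$ is continuous at the limit point. Composing, we obtain
\[
D_2(\overline{\rho}_M \| \rho_I) = \log_2\bigl(2^N \operatorname{Tr}[\overline{\rho}_M^2]\bigr) \;\xrightarrow{\text{a.s.}}\; \log_2\bigl(2^N \operatorname{Tr}[\mathbb{E}[\rho]^2]\bigr) = D_2(\mathbb{E}[\rho] \| \rho_I),
\]
from which the $\epsilon$-bound in the statement is just a restatement of almost-sure convergence.

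The main obstacle, at first glance, is that $\operatorname{Tr}[\overline{\rho}_M^2]$ is quadratic in the samples --- a V-statistic rather than a sample mean of a single function --- so a direct application of the SLLN to a scalar mean is not available. The entry-wise route above sidesteps this cleanly by moving the nonlinearity outside the averaging, thanks to the finite dimension of the matrix space. An alternative would be to split the double sum $\frac{1}{M^2}\sum_{m,m'} \operatorname{Tr}[\rho(\boldsymbol{x}^{(m)})\rho(\boldsymbol{x}^{(m')})]$ into the diagonal part, which is $O(1/M)$ since $\operatorname{Tr}[\rho^2] \leq 1$, and the off-diagonal U-statistic with bounded kernel $k(\boldsymbol{x},\boldsymbol{x}') = \operatorname{Tr}[\rho(\boldsymbol{x})\rho(\boldsymbol{x}')] \in [0,1]$, whose a.s.\ limit is $\operatorname{Tr}[\mathbb{E}[\rho]^2]$ by the SLLN for U-statistics; but the entry-wise argument is shorter and requires no auxiliary machinery.
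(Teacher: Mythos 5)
Your proof is correct, but it takes a genuinely different route from the paper's. You reduce everything to the strong law of large numbers applied entrywise to $\overline{\rho}_M$ (a finite-dimensional matrix average of i.i.d.\ bounded entries), then push the almost-sure convergence through the continuous maps $\rho \mapsto \operatorname{Tr}[\rho^2]$ and $t \mapsto \log_2(2^N t)$, using the lower bound $\operatorname{Tr}[\mathbb{E}[\rho]^2] \geqslant 2^{-N}$ only to guarantee continuity of the logarithm at the limit. This is a clean way to sidestep the V-statistic structure of $\operatorname{Tr}[\overline{\rho}_M^2]$, and it delivers the asymptotic almost-sure statement directly, which is all the lemma as stated requires. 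The paper instead proves a quantitative finite-$M$ result: it applies McDiarmid's bounded-differences inequality to $f(\boldsymbol{x}^{(1)},\dots,\boldsymbol{x}^{(M)}) = \operatorname{Tr}[\overline{\rho}_M^2]$ (with bounded difference $2/M$ via H\"older's inequality), computes $\mathbb{E}\{\operatorname{Tr}[\overline{\rho}_M^2]\} = \tfrac{1}{M} + (1-\tfrac{1}{M})\operatorname{Tr}[\mathbb{E}[\rho]^2]$ exactly using independence and purity, and then bounds the log-ratio in two cases, obtaining a deviation of order $\epsilon + (2^N-1)/M$ with probability at least $1 - 2e^{-M\epsilon^2/2}$. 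What the paper's route buys is an explicit rate and an explicit bias term $(2^N-1)/M$, which is what actually justifies the experimental choice of $M = 10^6$ test samples for estimating the divergence; your route buys brevity and avoids concentration machinery, but gives no finite-sample guarantee. (Strictly, the paper's high-probability bounds still need a summability/Borel--Cantelli step to yield the ``almost surely'' in the statement, a step your SLLN argument gets for free.)
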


\begin{proof}
  The Petz-R\'{e}nyi-2 divergence between quantum state $\rho$ and the maximally mixed state $\rho_I = \frac{I}{2^N}$ is:
\begin{equation}
\begin{aligned}
  D_{2} \left( \rho || \rho_{I} \right) = \log_2 \left( 2^{N} \operatorname{Tr}\left[\rho^2\right] \right) =  N + \log_2 \left( \operatorname{Tr}\left[ \rho^2\right] \right) .
\end{aligned}
\end{equation}
Therefore,
\begin{equation}
  \label{eq:D-D}
\begin{aligned}
  | D_2(\overline{\rho}_M || \rho_{I}) - D_2(\mathbb{E}[\rho]||\rho_{I}) |&=  \left| \log_2\left( \operatorname{Tr}\left[\overline{\rho}^2_M \right]  \right) - \log_2 \left( \operatorname{Tr}\left[\mathbb{E}[\rho]^2\right] \right)  \right|    \\
&= \left| \log_{2} \left( \frac{\operatorname{Tr}\left[\overline{\rho}^2_M \right]}{\operatorname{Tr}\left[\mathbb{E}[\rho]^2\right]}       \right)  \right| .
\end{aligned}
\end{equation}

For notational simplicity, we omit the parameter $\boldsymbol{\theta}$ in the quantum states, i.e., $\rho(\boldsymbol{x}) = \rho(\boldsymbol{x},\boldsymbol{\theta})$. Let $f(\boldsymbol{x}^{(1)},\cdots,\boldsymbol{x}^{(n)},\cdots,\boldsymbol{x}^{(M)}) = \operatorname{Tr}\left[\overline{\rho}_M^2\right] = \frac{1}{M^2} \sum_{m,n=1}^{M}  \operatorname{Tr}\left[\rho(\boldsymbol{x}^{(m)}) \rho(\boldsymbol{x}^{(n)})\right]$. 
For any $n \in [M]$, when replacing $\boldsymbol{x}^{(n)}$ with $(\boldsymbol{x}^{(n)})'$ and its corresponding quantum state $\rho'(\boldsymbol{x}^{(n)}) = \rho((\boldsymbol{x}^{(n)})')$, we have:
  \begin{align}
    &\left| f(\boldsymbol{x}^{(1)},\cdots,\boldsymbol{x}^{(n)},\cdots,\boldsymbol{x}^{(M)})  - f(\boldsymbol{x}^{(1)},\cdots,(\boldsymbol{x}^{(n)})',\cdots,\boldsymbol{x}^{(M)})\right|  \nonumber \\
      = & \left| \frac{1}{M^2}\left[ \sum_{m=1}^{M} \left( \operatorname{Tr}\left[\rho(\boldsymbol{x}^{(m)} \rho(\boldsymbol{x}^{(n)}))\right] - \operatorname{Tr}\left[\rho(\boldsymbol{x}^{(m)}) \rho'(\boldsymbol{x}^{(n)})\right] \right) \right]  \right| \nonumber \\
      =& \left| \frac{1}{M^2} \sum_{m=1}^{M} \operatorname{Tr}\left[\rho(\boldsymbol{x}^{(m)}) \left\{ \rho(\boldsymbol{x}^{(n)}) - \rho'(\boldsymbol{x}^{(n)}) \right\} \right] \right|  \label{eq:holder_1} \\
       \leqslant & \frac{1}{M^2} \sum_{m=1}^{M} \left\|\rho(\boldsymbol{x}^{(m)})\right\|_{\infty} \cdot \left\|\rho(\boldsymbol{x}^{(n)}) - \rho'(\boldsymbol{x}^{(n)})\right\|_1  \nonumber \\ 
       \leqslant & \frac{2}{M}, \nonumber
    \end{align}
where $\|\cdot\|_{\infty}$ and $\|\cdot\|_1$ are the Schatten $\infty$-norm and Schatten 1-norm, respectively, and Eq.~\eqref{eq:holder_1} is due to the H\"{o}lder's inequality~\cite{watrous2018theory}.

Let $\boldsymbol{X} = (\boldsymbol{x}^{(1)},\cdots,\boldsymbol{x}^{(n)},\cdots,\boldsymbol{x}^{(M)})$, by McDiarmid's inequality~\cite{mohri2018foundations}, we have:
$$
\begin{aligned}
\underset{\boldsymbol{X} \sim \mathcal{D}_{\mathcal{X}}^{M}}{\mathbb{P}}\left( \left|f(\boldsymbol{x}^{(1)},\cdots,\boldsymbol{x}^{(n)},\cdots,\boldsymbol{x}^{(M)}) - \underset{\boldsymbol{X} \sim \mathcal{D}_{\mathcal{X}}^{M}}{\mathbb{E}}[f(\boldsymbol{x}^{(1)},\cdots,\boldsymbol{x}^{(n)},\cdots,\boldsymbol{x}^{(M)})] \right| \geqslant t \right) \leqslant 2 \exp \left( \frac{- M t^2}{2} \right) .
\end{aligned}
$$

Therefore,
\begin{equation}
  \label{eq:diff-tmp}
\begin{aligned}
  \left| \operatorname{Tr}\left[\overline{\rho}_M^2\right] - \underset{\boldsymbol{X} \sim \mathcal{D}_{\mathcal{X}}^{M}}{\mathbb{E}}\{\operatorname{Tr}\left[\overline{\rho}_M^2\right] \}  \right| \leqslant \epsilon 
\end{aligned}
\end{equation}
holds with probability at least $1-2 e^{-M \epsilon^2 /2}$. 

We can expand the trace of the squared average state as follows:
$$
\begin{aligned}
\operatorname{Tr}\left[\overline{\rho}_M^2\right] &= \frac{1}{M^2} \sum_{m,n=1}^{M} \operatorname{Tr}\left[\rho(\boldsymbol{x}^{(m)}) \rho(\boldsymbol{x}^{(n)})\right] \\
&= \frac{1}{M^2} \left( \sum_{m=1}^{M} \operatorname{Tr}\left[\rho(\boldsymbol{x}^{(m)})^2\right] + \sum_{m\neq n} \operatorname{Tr}\left[\rho(\boldsymbol{x}^{(m)}) \rho(\boldsymbol{x}^{(n)})\right] \right) \\
&= \frac{1}{M} + \frac{1}{M^2} \sum_{m\neq n} \operatorname{Tr}\left[\rho(\boldsymbol{x}^{(m)}) \rho(\boldsymbol{x}^{(n)})\right],
\end{aligned}
$$
where we have used the fact that each $\rho(\boldsymbol{x}^{(m)})$ is a pure state, implying $\rho(\boldsymbol{x}^{(m)})^2 = \rho(\boldsymbol{x}^{(m)})$ and $\operatorname{Tr}[\rho(\boldsymbol{x}^{(m)})^2] = 1$ for each $m$.

Given the independence of $\boldsymbol{x}^{(m)}$ and $\boldsymbol{x}^{(n)}$, we can derive the expectation of the trace of the squared average state as:
$$
\begin{aligned}
\underset{\boldsymbol{X} \sim \mathcal{D}_{\mathcal{X}^{M}}}{\mathbb{E}} \left\{ \operatorname{Tr}\left[\overline{\rho}_M^2\right] \right\} &= \frac{1}{M} + \frac{1}{M^2} \sum_{m \neq n}^{} \operatorname{Tr}\left[\underset{\boldsymbol{X} \sim \mathcal{D}_{\mathcal{X}}^{M}}{\mathbb{E}} [\rho(\boldsymbol{x}^{(m)}) \rho(\boldsymbol{x}^{(n)}) ]\right] \\
&= \frac{1}{M} + \frac{1}{M^2} \sum_{m \neq n}^{} \operatorname{Tr}\left[\underset{\boldsymbol{x}^{(m)} \sim \mathcal{D}_{\mathcal{X}}}{\mathbb{E}} [\rho(\boldsymbol{x}^{(m)})] \underset{\boldsymbol{x}^{(n)} \sim \mathcal{D}_{\mathcal{X}}}{\mathbb{E}} [\rho(\boldsymbol{x}^{(n)}) ]\right] \\
&= \frac{1}{M} + \frac{1}{M^2} \sum_{m \neq  n}^{} \operatorname{Tr}\left[\mathbb{E}[\rho]^2\right] \\
&= \frac{1}{M} + \frac{1}{M^2} (M^2 - M) \operatorname{Tr}\left[\mathbb{E}[\rho]^2\right] \\
&= \frac{1}{M} + \left(1- \frac{1}{M}\right) \operatorname{Tr}\left[\mathbb{E}[\rho]^2\right].
\end{aligned}
$$
This leads to the following expression for the difference between the average state and expected state:
$$
\begin{aligned}
\left| \operatorname{Tr}\left[\overline{\rho}_M^2\right] - \underset{\boldsymbol{X} \sim \mathcal{D}_{\mathcal{X}^{m}}}{\mathbb{E}} \left\{ \operatorname{Tr}\left[\overline{\rho}_M^2\right] \right\} \right| &= \left| \operatorname{Tr}\left[\overline{\rho}_M^2\right] - \frac{1}{M} - \left(1- \frac{1}{M}\right) \operatorname{Tr}\left[\mathbb{E}[\rho]^2\right] \right|.
\end{aligned}
$$
Since $\operatorname{Tr}\left[\mathbb{E}[\rho]^2\right]$ is positive, According to Eq.~\eqref{eq:diff-tmp}, with probability at least $1-2e^{-M\epsilon^2/2}$ we have:
$$
\begin{aligned}
\left|  \frac{\operatorname{Tr}\left[\overline{\rho}_M^2\right] }{\operatorname{Tr}\left[\mathbb{E}[\rho]^2\right] } - 1   - \frac{1}{M} \left(  \frac{1}{\operatorname{Tr}\left[\mathbb{E}[\rho]^2\right]} - 1 \right) \right| \leqslant \epsilon .
\end{aligned}
$$
Then, applying Lemma~\ref{alem:experimental-divergence}, with the same probability bound, we have:
$$
\begin{aligned}
  \left|  \frac{\operatorname{Tr}\left[\overline{\rho}_M^2\right] }{\operatorname{Tr}\left[\mathbb{E}[\rho]^2\right] } - 1 \right| \leqslant \epsilon + \frac{1}{M}  \left|  \frac{1}{\operatorname{Tr}\left[\mathbb{E}[\rho]^2\right]} - 1 \right|  \leqslant \epsilon + \frac{2^{N}-1}{M} .
\end{aligned}
$$

We analyze the difference in Petz-R\'{e}nyi-2 divergences in two cases based on the ratio $\operatorname{Tr}\left[\overline{\rho}_M^2\right] / \operatorname{Tr}\left[\mathbb{E}[\rho]^2\right]$.

Case 1: When the ratio is greater than or equal to 1, Eq.~\eqref{eq:D-D} implies that with probability at least $1-2e^{-M\epsilon^2/2}$:
\begin{equation}
  \label{eq:tmp1}
\begin{aligned}
    | D_2(\overline{\rho}_M || \rho_{I}) - D_2(\mathbb{E}[\rho] || \rho_{I}) | &= \log_{2} \left( \frac{\operatorname{Tr}\left[\overline{\rho}^2_M \right]}{\operatorname{Tr}\left[\mathbb{E}[\rho]^2\right]} \right)\leqslant \log_{2}\left( 1 + \epsilon + \frac{2^{N}-1}{M} \right) .
\end{aligned}
\end{equation}

Case 2: When the ratio is between 0 and 1, and $1 - \epsilon - \frac{2^{N-1}}{M} > 0$, we have with the same probability:
\begin{equation}
  \label{eq:tmp2}
\begin{aligned}
  | D_2(\overline{\rho}_M || \rho_{I}) - D_2(\mathbb{E}[\rho] || \rho_{I}) | &= -\log_{2} \left( \frac{\operatorname{Tr}\left[\overline{\rho}^2_M \right]}{\operatorname{Tr}\left[\mathbb{E}[\rho]^2\right]} \right) \leqslant - \log_2 \left( 1 - \epsilon - \frac{2^{N}-1}{M} \right) .
\end{aligned}
\end{equation}

Asymptotically, as $M \to \infty$ with fixed $N$, based on Eq.~\eqref{eq:tmp1} and Eq.~\eqref{eq:tmp2}, the difference in Petz-R\'{e}nyi-2 divergences converges as:
$$
\begin{aligned}
  | D_2(\overline{\rho}_M || \rho_{I}) - D_2(\mathbb{E}[\rho] || \rho_{I}) | &\leqslant \max \{\log(1 + \epsilon), -\log(1 - \epsilon)\} \leqslant \epsilon + o(\epsilon).
\end{aligned}
$$
Note that when $M \to \infty$, $1 - \epsilon - \frac{2^{N}-1}{M} > 0$ is always satisfied.
\end{proof}

\subsection{Difference Between Average and Expected Model Output}

The following lemma shows that we can approximate the output of an observable $H$ on the expected quantum state by using the output of H on the average quantum state.

\begin{lemma}
\label{lem:hoffding}
Consider a dataset $S = \{(\boldsymbol{x}^{(m)}, y^{(m)})\}_{m=1}^{M}$ where features $\boldsymbol{x}^{(m)}$ are independently sampled from a distribution $\mathcal{D}_{\mathcal{X}}$. Let $\rho(\boldsymbol{x}, \boldsymbol{\theta})$ represent the pure quantum state encoded by a data re-uploading circuit with parameters $\boldsymbol{\theta}$, where the feature $\boldsymbol{x}$ is encoded into the quantum state, and $\boldsymbol{\theta}$ are independent of the features in $S$. We define the average state as $\overline{\rho}_M \coloneqq \frac{1}{M} \sum_{m=1}^{M} \rho(\boldsymbol{x}^{(m)}, \boldsymbol{\theta})$ and the expected state as $\mathbb{E}[\rho] \coloneqq \mathbb{E}_{\boldsymbol{x} \sim \mathcal{D}_{\mathcal{X}}}[\rho(\boldsymbol{x}, \boldsymbol{\theta})]$. For any observable $H$ with eigenvalues bounded in $[-1,1]$ and any $\epsilon \in (0,1)$, the following inequality holds:
\begin{equation*}
    \begin{aligned}
        \Big| \operatorname{Tr} \left[ H \overline{\rho}_{M} \right] - \operatorname{Tr}\left[H \mathbb{E}[\rho]\right]  \Big| \leqslant \epsilon
    \end{aligned}
\end{equation*}
holds with probability at least $1 - 2 e^{-M \epsilon^2 / 2}$.
\end{lemma}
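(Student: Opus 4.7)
The plan is to recognize the quantity $\operatorname{Tr}[H \overline{\rho}_M]$ as an empirical average of i.i.d.\ bounded random variables, whose expectation is exactly $\operatorname{Tr}[H \mathbb{E}[\rho]]$, and then apply Hoeffding's inequality.

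First I would define, for each $m \in [M]$, the real-valued random variable
$$
Z_m \;=\; \operatorname{Tr}\!\left[H\, \rho(\boldsymbol{x}^{(m)}, \boldsymbol{\theta})\right].
$$
Because $\boldsymbol{\theta}$ is independent of the $\boldsymbol{x}^{(m)}$ and the features $\boldsymbol{x}^{(m)}$ are i.i.d.\ from $\mathcal{D}_{\mathcal{X}}$, the $Z_m$ are i.i.d. By linearity of trace and expectation,
$$
\mathbb{E}[Z_m] \;=\; \operatorname{Tr}\!\left[H\, \underset{\boldsymbol{x}\sim\mathcal{D}_{\mathcal{X}}}{\mathbb{E}}\rho(\boldsymbol{x},\boldsymbol{\theta})\right] \;=\; \operatorname{Tr}\!\left[H\, \mathbb{E}[\rho]\right],
$$
and the empirical mean is $\frac{1}{M}\sum_{m=1}^{M} Z_m = \operatorname{Tr}[H \overline{\rho}_M]$.

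Next I would show that each $Z_m$ lies in $[-1,1]$. Since the eigenvalues of $H$ are in $[-1,1]$, we have $\|H\|_\infty \leqslant 1$; combined with $\|\rho(\boldsymbol{x}^{(m)},\boldsymbol{\theta})\|_1 = 1$, H\"{o}lder's inequality (as already used in App.~\ref{asec:measures-of-quantum-state-distinguishability}) gives $|Z_m| \leqslant \|H\|_\infty \|\rho(\boldsymbol{x}^{(m)},\boldsymbol{\theta})\|_1 \leqslant 1$. Hence each $Z_m$ is supported in an interval of length $b - a = 2$.

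Finally I would invoke Hoeffding's inequality for bounded i.i.d.\ random variables~\cite{mohri2018foundations}:
$$
\underset{S \sim \mathcal{D}_{\mathcal{X}}^M}{\mathbb{P}}\!\left(\left|\frac{1}{M}\sum_{m=1}^{M} Z_m - \mathbb{E}[Z_m]\right| \geqslant \epsilon \right) \;\leqslant\; 2 \exp\!\left(-\frac{2 M \epsilon^2}{(b-a)^2}\right) \;=\; 2 e^{-M\epsilon^2/2},
$$
which directly yields the stated bound after substituting the identifications above. There is no real obstacle here: the only subtle point is noting that the parameter vector $\boldsymbol{\theta}$ must be chosen independently of the dataset $S$ so that the $Z_m$ are genuinely i.i.d., which is exactly the hypothesis stated in the lemma. (Alternatively, one could derive the same bound via McDiarmid's inequality by checking a bounded-differences constant of $2/M$, mirroring the argument in the preceding lemma, but the direct Hoeffding route is cleaner.)
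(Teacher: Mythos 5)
Your proposal is correct and matches the paper's proof essentially verbatim: both recognize $\operatorname{Tr}[H\overline{\rho}_M]$ as an empirical mean of i.i.d.\ random variables $\operatorname{Tr}[H\rho(\boldsymbol{x}^{(m)},\boldsymbol{\theta})]$ bounded in $[-1,1]$ (the paper equivalently bounds each $\frac{1}{M}$-scaled summand in $[-\frac{1}{M},\frac{1}{M}]$) and apply Hoeffding's inequality to obtain the factor $2e^{-M\epsilon^2/2}$. Your explicit Hölder-inequality justification of the bound $|Z_m|\leqslant 1$ and the remark on the independence of $\boldsymbol{\theta}$ from $S$ are consistent with, and slightly more detailed than, the paper's treatment.
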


\begin{proof}
  For notational simplicity, we write $\rho(\boldsymbol{x}) = \rho(\boldsymbol{x},\boldsymbol{\theta})$, omitting the explicit dependence on parameters $\boldsymbol{\theta}$. Let $\boldsymbol{X} = (\boldsymbol{x}^{(1)},\cdots,\boldsymbol{x}^{(M)})$ where each feature vector is independently sampled from $\mathcal{D}_{\mathcal{X}}$. Since the parameters $\boldsymbol{\theta}$ are independent of the features, the sequence $\{\frac{1}{M} \operatorname{Tr}[H \rho(\boldsymbol{x}^{(m)})]\}_{m=1}^M$ forms a set of independent and identically distributed random variables. Furthermore, as the eigenvalues of $H$ lie in $[-1,1]$, each term satisfies $-\frac{1}{M} \leqslant \frac{1}{M} \operatorname{Tr}[H \rho(\boldsymbol{x}^{(m)})] \leqslant \frac{1}{M}$ for all $m \in [M]$. Applying Hoeffding's inequality~\cite{mohri2018foundations}, we obtain:
$$
\begin{aligned}
 \underset{\boldsymbol{X} \sim \mathcal{D}_{\mathcal{X}}^{M}}{\mathbb{P}}\left(  \left| \sum_{m=1}^{M} \frac{1}{M} \operatorname{Tr} \left[ H \rho\left( \boldsymbol{x}^{(m)} \right)  \right] - \underset{\boldsymbol{x} \sim \mathcal{D}_{\mathcal{X}}}{\mathbb{E}}\Big\{ \operatorname{Tr} \Big[ H \rho(\boldsymbol{x}) \Big] \Big\}  \right|   \leqslant \epsilon  \right)     \geqslant &1 - 2e^{- \frac{M \epsilon^2}{2}}.
\end{aligned}
$$
Therefore, by the definition of average encoded state and linearity of trace operator:
\begin{equation*}
\begin{aligned}
   \Big| \operatorname{Tr}[H(\overline{\rho}_M)] - \operatorname{Tr}[H \mathbb{E}[\rho]] \Big| \leqslant \epsilon
\end{aligned}
\end{equation*}
holds with probability at least $1 - 2e^{M \epsilon^2 / 2}$.
\end{proof}

\renewcommand{\theequation}{G.\arabic{equation}}
\setcounter{equation}{0}
\renewcommand{\thefigure}{G.\arabic{figure}}
\renewcommand{\theHfigure}{G.\arabic{figure}} 
\setcounter{figure}{0}

\section{More Experiments}
\label{asec:more_experiments}

In experiments, we employ the data re-uploading circuit as shown in the Fig.~\cref{fig:encoding-circuit-use}, where the parameterized gates consist of single-qubit rotation gates defined in Eq.~\eqref{eq:R} and two-qubit CNOT gates arranged in a ring connectivity pattern.

\begin{figure}[htpb]
  \centering
  \includegraphics[width=0.90\textwidth]{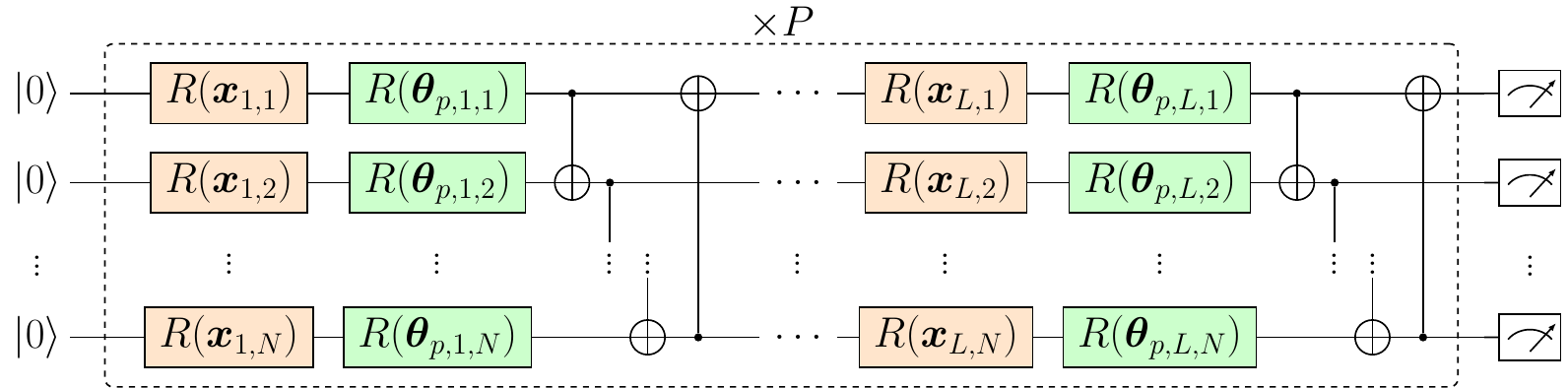}
  \caption{Data re-uploading circuit used in experiments. The single qubit parameterized quantum gates are same with original data re-uploading circuit~\cite{perez2020data}, and the entangling gates are $\mathrm{CNOT}$ gate with ring connectivity. Each parameter vector $\boldsymbol{\theta}_{p,l,n}$ is indexed by the repetition number $p$, encoding layer $l$, and qubit number $n$.}
  \label{fig:encoding-circuit-use}
\end{figure}

\subsection{Regression Experiments}

\textbf{Data:} For regression tasks, we consider the multivariate function defined in $[-1,1]^{D}$:
\begin{equation}
  \label{eq:f-regression}
\begin{aligned}
  f(\boldsymbol{x}) = \frac{1}{2} \left( 1 + \tanh \left( \sum_{i=1}^{D} x_i \right) \right),
\end{aligned}
\end{equation}
where $D$ is the dimension of the data $\boldsymbol{x}$ and $f(\boldsymbol{x}) \in [0,1]$. For each dimension $x_d$, we draw samples uniformly from $[-1,1]$.

\textbf{Experimental Setup:} In experiments, we use two-qubit data re-uploading circuits as shown in Fig.~\ref{fig:encoding-circuit-use}. Following the same setup as in Subsec.~\ref{subsec:classification_experiment}, to eliminate the impact of parameter count, we conduct comparative experiments using quantum circuits with fixed total layers $L_{\max} = 10$. For encoding layers $L < L_{\max}$, we maintain the total circuit layers at $L_{\max}$ by only encoding data in the first $L$ layers and setting the data input to zero vectors for the remaining layers (Using $L$ encoding layers). The observable used in experiments is $H_L = Z_{1} \otimes Z_{2}$, where $Z_i$ is the Pauli-Z operator on the $i$-th qubit.

In experiments, the training set contains 600 samples, and the test set contains 10000 samples. Following the experimental setting in Subsec.~\ref{subsec:classification_experiment}, we initialize the circuit parameters $\boldsymbol{\theta}$ from normal distribution $\mathcal{N}(0,1)$ and optimize using Adam with learning rate $0.005$ and mini-batch size 200. The loss function is Mean Squared Error (MSE). Each experiment is trained for 1000 epochs and repeated 10 times with different random seeds. The results are shown in Fig.~\ref{fig:regression-results}.

\textbf{Results:} The experimental results are similar to those in Subsec~\ref{subsec:classification_experiment}. As shown in Fig.~\ref{fig:regression-results} (d), as the number of encoding layers $L$ increases, the output with respect to the observable $H_L$ progressively approaches the output of maximally mixed state, leading to a corresponding increase in test error as shown in Fig.~\ref{fig:regression-results} (c). According to Theorem~\ref{prop:regression}, the prediction error will be greater than $|\mathbb{E}_{\boldsymbol{x} \sim \mathcal{D}_{\mathcal{X}}}[f(\boldsymbol{x})] |- \epsilon$. For the function $f(\boldsymbol{x})$ defined in Eq.~\eqref{eq:f-regression}, its expected value is $\mathbb{E}_{\boldsymbol{x} \sim \mathcal{D}_{\mathcal{X}}}[f(\boldsymbol{x})] = 0.5$. Therefore, the prediction error will be greater than $0.5 - \epsilon$.

\begin{figure}[htpb]
  \centering
  \includegraphics[width=0.95\textwidth]{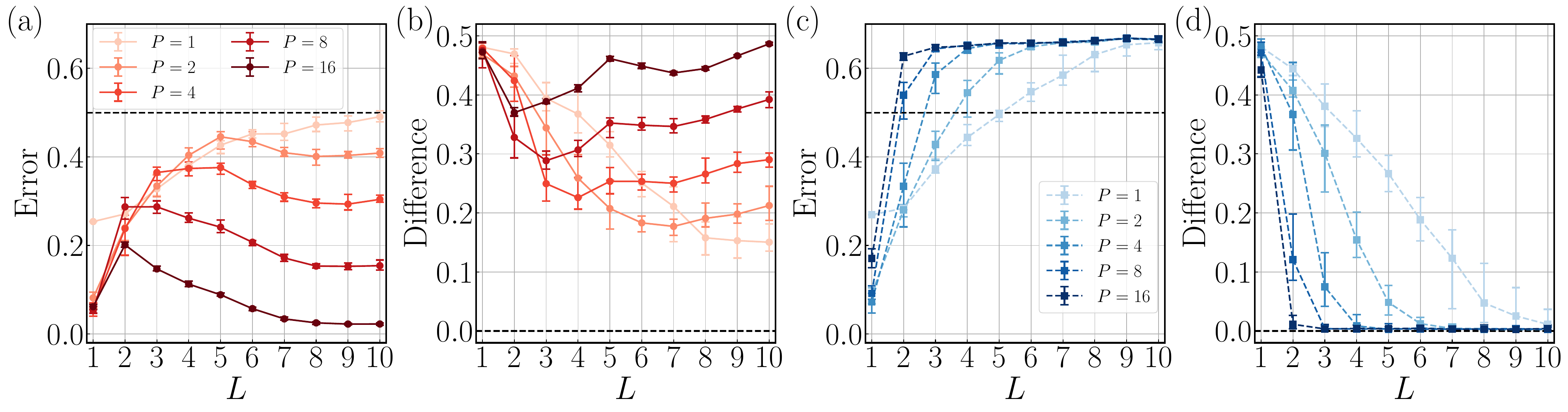}
  \caption{(a) Training error; (c) Test error; (b) Difference between the output with respect to $H_L = Z_{1} \otimes Z_{2}$ on training data and $\operatorname{Tr}\left[H \rho_{I}\right]$. (d) Difference between the output with respect to $H_L$ on test data and $\operatorname{Tr}\left[H \rho_{I}\right]$. Error bars represent the minimum and maximum values across 10 independent runs with different random seeds, with the central line showing the mean value.}
  \label{fig:regression-results}
\end{figure}

\subsection{Impact of Dataset Size and Model Size}

In traditional machine learning theory, prediction error is typically decomposed into two components: training error and generalization error (the gap between prediction error and training error). It is commonly believed that increasing model size can reduce training error but may increase generalization error, while increasing the size of the training dataset can reduce generalization error. The goal is to find a balance that achieves both low training error and low generalization error, thereby obtaining a small overall prediction error.

It is difficult to determine the prediction error solely by determining the training error and providing an upper bound on the generalization error. However, when the prediction error is determined, we can make the following inferences: under the same model complexity, increasing the training dataset size can reduce the generalization error, thereby increasing the training error; on the other hand, with the same training dataset size, increasing the model complexity can reduce the training error, but simultaneously increases the generalization error.

\textbf{Data:} We employed the same data as described in Subsec~\ref{subsec:classification_experiment_same_dataset}, with a feature dimension of $D=24$.

\textbf{Experimental Setup:} To experimentally validate our inferences, we used a single-qubit data re-uploading circuit (Fig.~\ref{fig:encoding-circuit-use}) with encoding layer $L=8$ (corresponding to the data dimension of 24). 
We investigated the effects of training set size and model complexity by systematically varying both the number of training samples and circuit repetitions.
Other experimental settings followed those in Subsec.~\ref{subsec:classification_experiment}

\begin{figure}[htpb]
  \centering
  \includegraphics[width=0.9\textwidth]{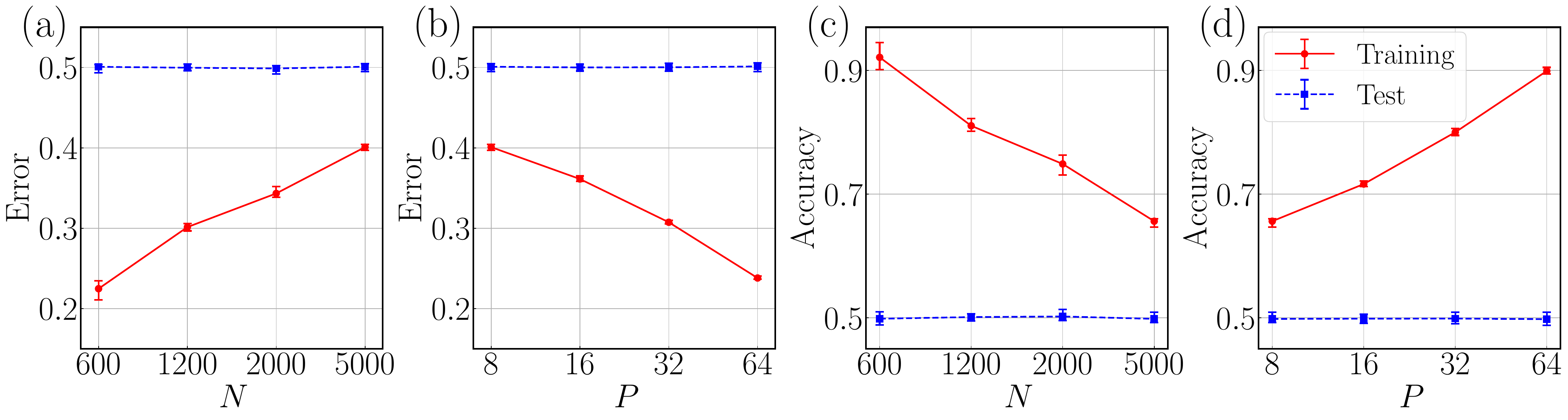}
  \caption{(a) Training and test error under same model complexity but different training set size; (b) Training and test error under same training set size but different model complexity. (c)  Training and  test accuracy corresponding to (a); (d) Training and test accuracy corresponding to (b). Error bars represent the minimum and maximum values across 10 independent runs with different random seeds, with the central line showing the mean value.}
  \label{fig:combined_plots}
\end{figure}

\textbf{Results:} First, with fixed model complexity (repetition number $P=8$), we varied the training set size as $[600,1200,2000,5000]$.
 The training and test errors and accuracy are shown in Fig.~\ref{fig:combined_plots}.
We observe that as the training set size increases, the generalization error (gap between training error and test error) decreases while the test error remains constant, resulting in increasing training error.

Then, with fixed training set size of 5000, we increased model complexity by varying $P$ as $[8,16,32,64]$. The training and test errors and accuracy are shown in Fig.~\ref{fig:combined_plots}(b) and (d). We observe that the training error gradually decreases while the test error remains unchanged, and the generalization error increases.

\subsection{Accuracy Results}

\begin{figure}[htpb]
  \centering
  \includegraphics[width=0.95\textwidth]{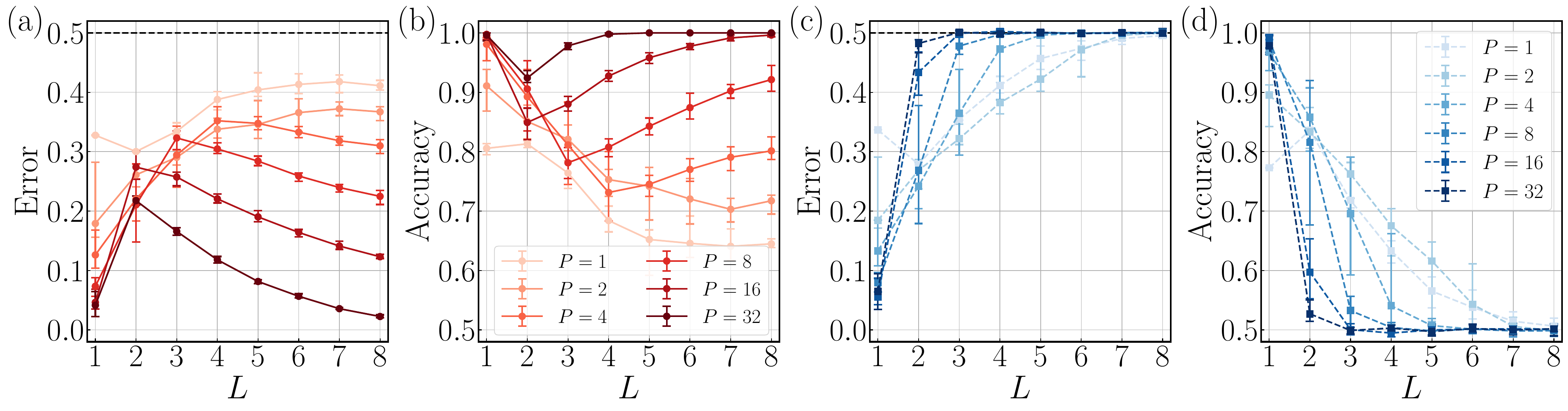}
  \caption{(a) Training error from Fig.~\ref{fig:classification_error}.(a); (b) Corresponding training accuracy for panel (a); (c) Test error from Fig.~\ref{fig:classification_one_error}.(c); (d) Corresponding test accuracy for panel (c). Error bars represent the minimum and maximum values across 10 independent runs with different random seeds, with the central line showing the mean value.}
  \label{fig:classification_accuracy}
\end{figure}

\begin{figure}[htpb]
  \centering
  \includegraphics[width=0.5\textwidth]{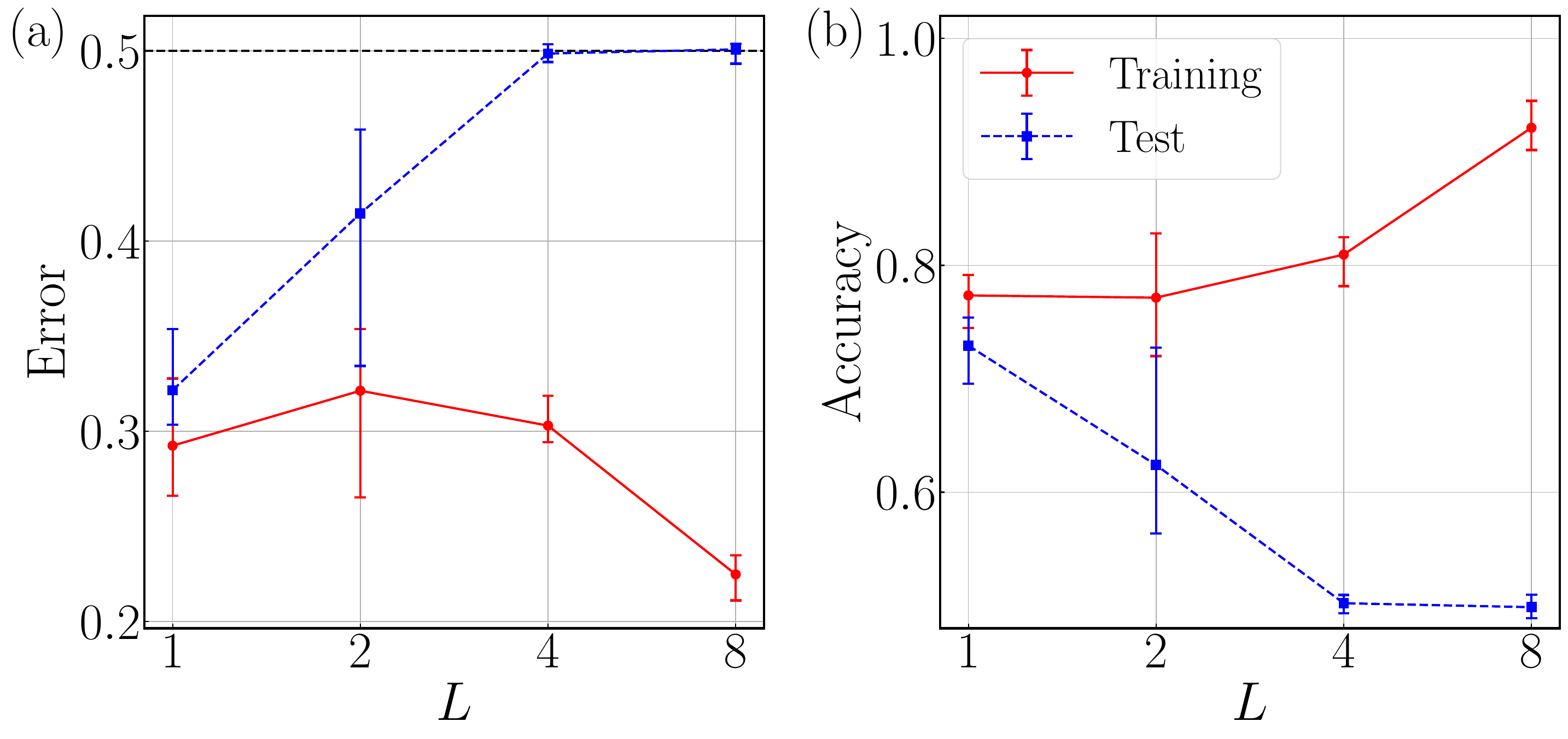}
  \caption{(a) Training error from Fig.~\ref{fig:classification_one_error}.(a); (b) Corresponding training accuracy for panel (a). Error bars represent the minimum and maximum values across 10 independent runs with different random seeds, with the central line showing the mean value.}
  \label{fig:classification_one_accuracy}
\end{figure}

In classification problems, accuracy is another commonly used evaluation metric alongside error. We present the training and test accuracy corresponding to the error results shown in the main text. Figure~\ref{fig:classification_accuracy} shows the accuracy corresponding to Figure~\ref{fig:classification_error}, while Figure~\ref{fig:classification_one_accuracy} shows the accuracy for Figure~\ref{fig:classification_one_error}. When the test error is around 0.5, the corresponding accuracy is also near 0.5, indicating random-guess level performance.

\renewcommand{\thefigure}{H.\arabic{figure}}
\renewcommand{\theHfigure}{H.\arabic{figure}} 
\setcounter{figure}{0}

\section{Counter Example}
\label{sec:counter_example}

In this appendix, we present a counter example that does not exhibit the limitations of data re-uploading with deep encoding layers as discussed in this paper. We demonstrate that this is due to the fact that the dataset essentially contains only few informative dimensions.

\textbf{Data:} Consider a binary classification dataset where each sample $(\boldsymbol{x}^{(m)},y^{(m)})$ consists of a $D$-dimensional feature vector $\boldsymbol{x}^{(m)} \in \mathbb{R}^D$ drawn from a Gaussian distribution $\mathcal{N}(\boldsymbol{\mu},\boldsymbol{\Sigma})$. The mean vectors $\boldsymbol{\mu}$ are class-dependent: for class one, the mean of the $d$-th dimension is given by $\mu_d = \left[\frac{2\pi}{16}(d \bmod 8)\right] \bmod 2\pi$, while for class two, the mean of the $d$-th dimension is $\mu_d = \left[\frac{2\pi}{16}(8 + (d \bmod 8))\right] \bmod 2\pi$.

Both classes share the same covariance matrix $\boldsymbol{\Sigma} = \boldsymbol{\Lambda} + \boldsymbol{N}$, 
where $\boldsymbol{\Lambda}$ is a diagonal matrix with each diagonal element equal to 0.8, and $\boldsymbol{N}$ is a matrix with zero diagonal elements and all off-diagonal elements equal to $0.792(0.8 \times 0.99)$.

\textbf{Experimental Setup:} We employed the same experimental setup as in the Subsec.~\ref{subsec:classification_experiment}.

\begin{figure}[htpb]
  \centering
  \includegraphics[width=0.8\textwidth]{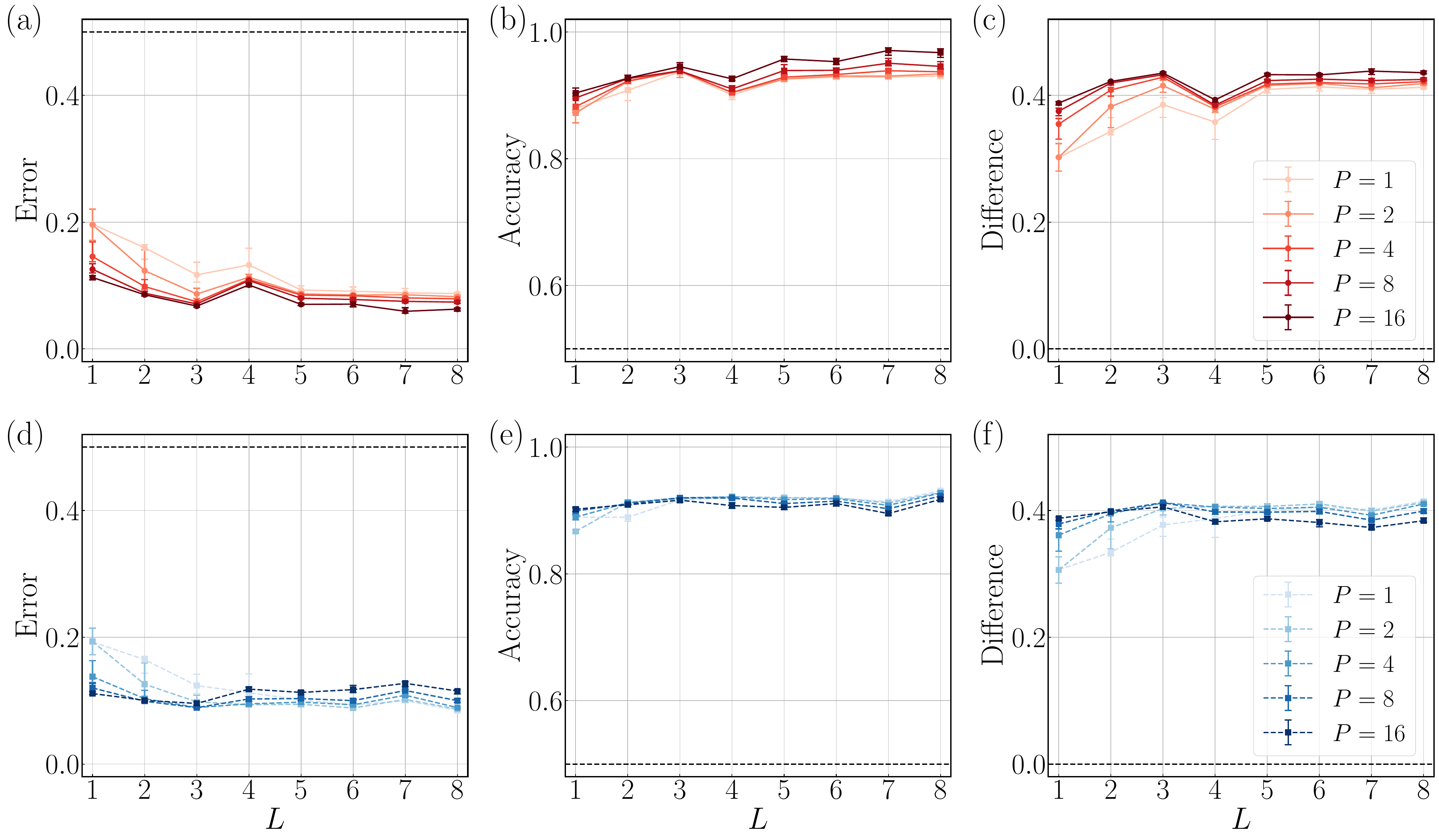}
  \caption{(a) Training error; (b) Corresponding training accuracy; (c) Difference between the output with respect to $H_0 = \ket{0}\bra{0}$ on training data and $\operatorname{Tr}\left[H \rho_{I}\right]$; (d) Test error; (e) Corresponding test accuracy. (f) Difference between the output with respect to $H_0$ on test data and $\operatorname{Tr}\left[H \rho_{I}\right]$. Error bars represent the minimum and maximum values across 10 independent runs with different random seeds, with the central line showing the mean value.}
  \label{fig:counter_example}
\end{figure}
\textbf{Results:} The experimental results are presented in Fig.~\ref{fig:counter_example}, showing that as the number of encoding layers $L$ increases, the test error does not approach 0.5. Both the test accuracy and training accuracy remain around 0.9, and the difference between the model's output on the test set and $\operatorname{Tr}\left[H \rho_{I}\right]$ stays around 0.4, without exhibiting the limitations of data re-uploading with deep encoding layers discussed in this paper.

In fact, by diagonalizing the covariance matrix $\boldsymbol{\Sigma} = \boldsymbol{Q}\boldsymbol{D}\boldsymbol{Q}^{\top}$ and applying a linear transformation $\boldsymbol{y} = \boldsymbol{Q}^T(\boldsymbol{X}-\boldsymbol{\mu})$ to the multivariate Gaussian random variable, we obtain a transformed variable $\boldsymbol{y}$ that follows a distribution $\mathcal{N}(\boldsymbol{0},\boldsymbol{D})$. 
Here, $\boldsymbol{D}$ is a diagonal matrix where only one element has significant magnitude while all other elements are negligible (0.008). To illustrate this, consider the case when $D = 24$: the variance of the dominant dimension is 19.16, whereas the variances of the remaining 23 dimensions are all 0.008.
This indicates that the multivariate Gaussian data is essentially determined by a single dominant dimension (corresponding to the variance 19.16), while the other dimensions remain nearly constant.

The key insight is that while the data appears to be high-dimensional, it actually contains only one dominant dimension that carries meaningful information. This explains why the test error does not approach 0.5 even with increasing encoding layers, as the effective dimensionality of the data is much lower than its apparent dimensionality.


\end{document}